\crefname{problem}{problem}{problems} 
\Crefname{problem}{Problem}{Problems} 
\Crefname{assumption}{Assumption}{Assumptions} 
\theoremstyle{definition}
\newtheorem*{definition*}{Definition}
\newtheorem*{theorem*}{Theorem}
\newtheorem*{problem*}{Problem}
\newcommand{\eps}{\varepsilon}
\newcommand{\R}[0]{\mathbb{R}}
\DeclarePairedDelimiter\abs{\lvert}{\rvert}%
\DeclarePairedDelimiter\norm{\lVert}{\rVert}%
\newcommand{\poly}[0]{$\mathsf{P}$}
\newcommand{\NP}[0]{$\mathsf{NP}$}
\newcommand{\NPH}[0]{\NP-hard}
\newcommand{\NPC}[0]{\NP-complete}
\newcommand{\GI}[0]{$\mathsf{GI}$}
\newcommand{\SW}[0]{\mathit{SW}}
\newcommand{\game}[0]{G}
\renewcommand{\u}[0]{\textbf{u}}
\renewcommand{\a}[0]{a}
\newcommand{\C}[0]{\Delta}
\newcommand{\F}[0]{\mathcal{F}}
\newcommand{\red}[1]{\bar{#1}} 
\newcommand{\SPIs}[0]{SPIs}
\newcommand{\remap}[0]{\Psi}
\newcommand{\remapu}[0]{\hat \Psi}
\newcommand{\policy}[0]{\Pi}
\newcommand{\oc}[0]{\Phi}
\newcommand{\iso}[0]{\phi}
\newcommand{\pgeq}[0]{\succeq}
\newcommand{\pge}[0]{\succ}
\newcommand{\tgame}[0]{\mathcal{T}}
\newcommand{\GPRname}[0]{\textsc{Entrywise Positive Affine Vector Remapping}}
\let\oldabs\abs
\def\abs{\@ifstar{\oldabs}{\oldabs*}}
\let\oldnorm\norm
\def\norm{\@ifstar{\oldnorm}{\oldnorm*}}
\newcommand{\cameraReady}[1]{{#1}}
\newcommand{\crcomment}[1]{\ifthenelse{\boolean{cameraReadyCommentsOn}}{{\color{teal} {\em #1 }}}{}}
\newcommand{\later}[1]{\ifthenelse{\boolean{afterIJCAI}}{{#1}}{}}
\newcommand{\cut}[1]{\ifthenelse{\boolean{haveSpace}}{{#1}}{}}
\newcommand{\co}[1]{\ifthenelse{\boolean{commentsactivated}}{{\color{teal} {\em CO: #1 }}}{}}
\newcommand{\ns}[1]{\ifthenelse{\boolean{commentsactivated}}{{\color{red} {\em NS: #1 }}}{}}
\newcommand{\blue}[1]{\ifthenelse{\boolean{commentsactivated}}{{\color{blue} {\em Note: #1 }}}{}}
\newcommand{\note}[1]{\ifthenelse{\boolean{commentsactivated}}{{\color{blue} {\em Note: #1 }}}{}}
\newcommand{\todoNS}[1]
{\ifthenelse{\boolean{commentsactivated}}{{\color{violet} {\em TODO NS: #1 }}}{}}
\newcommand{\todoCO}[1]
{\ifthenelse{\boolean{commentsactivated}}{{\color{orange} {\em TODO CO: #1 }}}{}}
\newcommand{\remove}[1]{}
\title{Promises Made, Promises Kept:\\ Safe Pareto Improvements via Ex Post Verifiable Commitments}
\author {
    Nathaniel Sauerberg\equalcontrib \textsuperscript{\rm 1},
    Caspar Oesterheld\equalcontrib \textsuperscript{\rm 2}
    }
\begin{document}

\maketitle

\begin{abstract}
A safe Pareto improvement (SPI) \cite{SPI-original}
is a modification of a game that leaves all players better off with certainty. 
SPIs are typically proven under qualitative assumptions about the way different games are played. 
For example, we assume that strictly dominated strategies can be iteratively removed and that isomorphic games are played isomorphically.
In this work, we study SPIs achieved through three types of \textit{ex post} verifiable commitments -- promises about player behavior from which deviations can be detected by observing the game. 
First, we consider disarmament -- commitments not to play certain actions. 
Next, we consider SPIs based on \textit{token games}. A token game is a game played by simply announcing an action (via cheap talk). As such, its outcome is intrinsically meaningless. However, we assume the players commit in advance to play specific (pure or correlated) strategy profiles in the original game as a function of the token game outcome. Under such commitments, the token game becomes a new, meaningful normal-form game.
Finally, we consider default-conditional commitment: SPIs in settings where the players' default ways of playing the original game can be credibly revealed and hence the players can commit to act as a function of this default. 
We characterize the complexity of deciding whether SPIs exist in all three settings, giving a mixture of characterizations and efficient algorithms and \NP- and \textsc{Graph Isomorphism}-hardness results. 
\end{abstract}

\section{Introduction}

Among the most important applications of game theory is guiding decisions that shape a downstream strategic interaction. 
To make this tractable, it's common to reduce games to a single value by assuming that the games will resolve in a particular way. %
For example, the literature on Stackelberg games generally makes an explicit assumption that the followers will play the best 
or worst %
\cite[e.g.][]{coniglio2020computing-pessimistic-SE-multiple-followers-mixed-pure, sauerberg2024commitment-to-payments, basilico2017methods_multi-follower-SE}
Nash equilibrium for the leader, while mechanism design typically assumes that the truthful equilibrium will be played.
Similarly, notions like price of anarchy/stability \cite{koutsoupias1999worst_case_equilibria_price_of_anarchy}, value of mediation \cite{ashlagi2008value_of_correlation}, and value of recall \cite{berker2025value_of_recall} %
evaluate the importance of a particular affordance (centralized control, mediation, or recall)  by bounding the ratio of some welfare objective between the games with and without the affordance, assuming a particular (best- or worst-case) equilibrium will be played in each game.
Work on game theory with simulation asks when allowing one player to pay to learn the other’s strategy introduces a new Nash equilibrium which is Pareto-better than all existing equilibria \cite{kovarik2023-GT-pure-sim,kovarik2025game-GT-mixed-sim}. %

However, compressing complex games down to a single outcome is not without loss. 
In particular, equilibrium selection is an unsolved and arguably unsolvable problem \cite{norde1996equilibrium-selection-and-consistency}, so it may not be safe to assume that players play any particular equilibrium, or indeed play an equilibrium at all. %
Safe Pareto improvements (SPIs) \cite{SPI-original} offer a more general framework for analyzing interventions--one that doesn’t require assumptions about how individual games are resolved. %
SPIs are interventions that improve \textit{all} possible outcomes of the game, given explicit, usually mild assumptions on the relationships between the ways different games are played. 
For example, we might assume that isomorphic games are played isomorphically and that removing strictly dominated strategies doesn't change how a game is played. %
In this paper, we introduce and study SPIs via \textit{ex post} verifiable commitments--commitments that the other players or an outside observer can verify to have been followed by observing the game. 

\begin{table}[b]
\centering
\begin{tabular}{ccccc}
                        & FA                           & FN                           & PA                           & PN                           \\ \cline{2-5} 
\multicolumn{1}{c|}{FA} & \multicolumn{1}{c|}{$2, 2$}    & \multicolumn{1}{c|}{$-1, 8\phantom{-}$}   & \multicolumn{1}{c|}{$-10, 10\phantom{-}$} & \multicolumn{1}{c|}{$-10, 10\phantom{-}$} \\ \cline{2-5} 
\multicolumn{1}{c|}{FN} & \multicolumn{1}{c|}{$\phantom{-}8, -1$}   & \multicolumn{1}{c|}{$-100, -100$} & \multicolumn{1}{c|}{$-10, 10\phantom{-}$} & \multicolumn{1}{c|}{$-10, 10\phantom{-}$}  \\ \cline{2-5} 
\multicolumn{1}{c|}{PA} & \multicolumn{1}{c|}{$\phantom{-}10, -10$} & \multicolumn{1}{c|}{$\phantom{-}10, -10$} & \multicolumn{1}{c|}{$2, 2$}    & \multicolumn{1}{c|}{$0, 5$}    \\ \cline{2-5} 
\multicolumn{1}{c|}{PN} & \multicolumn{1}{c|}{$\phantom{-}10, -10$} & \multicolumn{1}{c|}{$\phantom{-}10, -10$}    & \multicolumn{1}{c|}{$5, 0$}    & \multicolumn{1}{c|}{$-10, -10$} \\ \cline{2-5} 
\end{tabular}%
\caption{Seaway Dispute. The actions stand for claim Full/Partial seaway via Navy/Announcement.}
\label{tab:extended-chicken-example-intro}

\centering
\begin{tabular}{lcc}
                              & \multicolumn{1}{l}{Token PA}  & \multicolumn{1}{l}{Token PN}    \\ \cline{2-3} 
\multicolumn{1}{l|}{Token PA} & \multicolumn{1}{c|}{$3.2, 3.2$} & \multicolumn{1}{c|}{$2, 5$}       \\ \cline{2-3} 
\multicolumn{1}{l|}{Token PN} & \multicolumn{1}{c|}{$5, 2$}     & \multicolumn{1}{c|}{$-4, -4$} \\ \cline{2-3} 
\end{tabular}%
\caption{Token Game for Seaway Dispute}
\label{tab:token-game-intro}
\end{table}

Consider the following strategic interaction, modeled by \Cref{tab:extended-chicken-example-intro}.
Two countries are disputing the control of a seaway passing that’s newly navigable due to melting sea ice. 
Part of the seaway lies within each country’s established territory, but most had previously been unclaimed sea ice. %
The countries choose both (1) whether to claim the full seaway ($\mathrm{F}$) or only the disputed part ($\mathrm{P}$), and (2) whether to assert their claim via naval occupation ($\mathrm{N}$) or diplomatic announcement ($\mathrm{A}$).
If either both countries claim the full seaway ($\mathrm{F}$) or both claim only the disputed portion ($\mathrm{P}$), the game has typical “chicken” dynamics.%
\begin{itemize}
    \item $(\mathrm{N}, \mathrm{A})$ or $(\mathrm{A}, \mathrm{N})$: If one country claims with their Navy and the other via Announcement, the country that claims with its Navy wins control of the disputed territory. 
    \item $(\mathrm{A}, \mathrm{A})$: If both countries claim via Announcement, they will eventually agree to joint control through diplomacy.
    \item $(\mathrm{N}, \mathrm{N})$: If both countries claim with their Navies, it results in costly warfare. This is especially costly if the countries claim the full seaway and so invade each other.
\end{itemize}
If only one country attempts to claim the other’s territory, the international outrage results in an outcome favorable to the other country. 
For this reason, $\mathrm{PA}$ strictly dominates $\mathrm{FA}$ and $\mathrm{FN}$. 
Therefore, we assume that by default the players will play $\mathrm{PN}$ or $\mathrm{PA}$, so the game is equivalent to its bottom right quadrant.  
There are increasing returns to controlling more of the seaway, so the sum of the payoffs for the different outcomes follows $\SW(\mathrm{FN}, \mathrm{FA}) > \SW(\mathrm{PN}, \mathrm{PA}) > \SW(\mathrm{PA}, \mathrm{PA}) = \SW(\mathrm{FA}, \mathrm{FA})$. 

Suppose both countries believe they’re likely to achieve a payoff of $5$ from $(\mathrm{PN}, \mathrm{PA})$ or $(\mathrm{PA}, \mathrm{PN})$ in the default game (or they want to represent that belief for strategic reasons).
No strategy profile gives both players a utility of at least $5$ (or even more than $3.5$), so no agreement to play a single strategy profile is possible.  
However, the players can still guarantee a Pareto improvement if they commit to resolve the game by playing the “token game” in \Cref{tab:token-game-intro}.
They act in the token game simply by (privately) writing their actions on pieces of paper and then flipping them over to observe the outcome. 
Once the token outcome $t$ is observed, the players are required to play a (predetermined) correlated strategy profile in the original game with expected utility $\u(t)$. 
\cameraReady{
Any jointly observable source of randomness, e.g. physical or cryptographic \cite{blum1983coin-fipping-by-phone} coin flips, can be used to draw an outcome from this correlated strategy profile.  
Given such a source, it's \textit{ex post} verifiable that the players play according to the prescribed outcome, so we assume they do.
}%

For this to work, of course, all payoffs in the token game must be realizable (in expectation) by some correlated strategy profile in the original game. 
Here, the $(5,2)$ payoff can be achieved by $(2/3)(\mathrm{FN}, \mathrm{FA}) + (1/3)(\mathrm{FA}, \mathrm{FN}) = (2/3)(8, -1) + (1/3)(-1, 8)$, and the $(2, 5)$ payoff symmetrically. 
The $(3.2, 3.2)$ payoff can be achieved by playing $.2(\mathrm{PA}, \mathrm{PA}) + .4(\mathrm{FN}, \mathrm{FA}) + .4(\mathrm{FA}, \mathrm{FN})$, and the $(-4, -4)$ payoff can be achieved by playing $(1/2)(\mathrm{PA}, \mathrm{PA}) + (1/2)(\mathrm{PN}, \mathrm{PN})$. 

Also, observe that the token game is isomorphic to the bottom right quadrant of the original game: a payoff of $v_i$ in the original game corresponds to a payoff of $.6v_i + 2$ in the token game. 
Therefore, we can reasonably assume that the players will play them isomorphically. 
E.g., if a player would have played $\mathrm{PA}$ in the original game, they'd play Token $\mathrm{PA}$ in the token game, and so on. 
Since each token outcome Pareto improves on its counterpart, the token game is a guaranteed Pareto improvement on the original game, an SPI.

In this paper, we consider SPIs achieved by three different types of \textit{ex post} verifiable commitment: the token game SPIs exemplified by the previous example and two others.
Situations where \textit{ex post} verifiable commitments can be made credible are frequent. 
In particular, they could be enforced by reputation costs or by external authorities through legal contracts. 
Moreover, \textit{ex post} verifiability seems close to necessary for any type of external enforcement of commitment.

Existing schemes for achieving SPIs, such as the delegation game SPIs proposed in \cite{SPI-original}, require forms of commitment that seem more difficult to achieve.  
In the delegation game setting, the original players delegate the game to representatives, assigning them a utility function but otherwise leaving how to play up to the representatives. 
By default, they instruct the representatives with their true utility functions, but SPIs can sometimes be achieved by the players instead making joint agreements to assign alternate utility functions. 

Making it credible that one’s representative indeed plays according to an alternate utility function seems to require a high degree of transparency into the representative's decision making process. 
Though such transparency is sometimes achievable, it seems unattainable in cases where the decisions are being made in the minds of the people or groups of people with stake in the outcome of the game.
Importantly, the SPIs achieved in the present paper don’t require any assumptions on the process that decides how to play the games. 
In particular, our schemes allow the original, self-interested players to simply play the modified game themselves.

\textbf{Contributions.}~~%
In \Cref{sec:disarmament-SPIs}, we consider SPIs that can be achieved by %
players committing not to take, i.e., \textit{disarm}, particular actions. We find that deciding the existence of disarmament SPIs is \NP-complete. We further find that deciding whether a \textit{given} disarmament %
is an SPI is polynomial-time equivalent to the graph isomorphism problem (which is believed to be \NP-intermediate, i.e., in \NP, but neither in \poly{} nor \NP-hard).

Next, in \Cref{sec:token-game-SPIs}, we study token game SPIs like the one for the Seaway Dispute described above. We distinguish two types of token SPIs. 
In the first type, the token outcomes can represent distributions over outcomes of the default game (as in the Seaway Dispute example). 
In this case, the existence of SPIs can be decided in polynomial time, and we obtain an explicit characterization of the existence of SPIs in the two-player case. %
In the second type, token outcomes can only represent a single outcome of the default game. 
There, we give an algorithm for finding SPIs that runs in polynomial time in games with a constant number of players and quasi-polynomial time in general. 
We also show that the problem becomes \NPC{} in more succinct game representations (e.g. payoff tables that only store non-zero entries).

Finally, in \Cref{sec:default-remapping-spis}, we study SPIs achieved by in a setting where players can credibly reveal their default. 
We show that finding unilateral default-remapping SPIs, where a player commits to act according to some function of their default strategy, is \NPH{}.
However, in the omnilateral default-remapping setting, where all players can credibly reveal their default action and jointly commit to play a strategy profile as a function of the default outcome, we show that SPIs exist whenever the original game contains Pareto-suboptimal outcomes after dominated strategies are iteratively removed.

\subsection{Related Work}
Most closely related to our paper is the prior work on safe Pareto improvements \cite{SPI-original,Oesterheld2025,digiovanni2024safe}. %
We use the %
framework from \citet{SPI-original,Oesterheld2025}%
, see our \Cref{sec:preliminaries}.
We consider different interventions on strategic interaction than \citeauthor{SPI-original} (see above). %
The token games studied in \Cref{sec:token-game-SPIs} resemble the token games studied by \citet[Sect.\ 5]{SPI-original}%
. The main difference is that they %
allow giving the agents arbitrary utility functions over the token outcomes. In contrast, we assume that the utility of a token outcome is simply the utility of the distribution over base game outcomes associated with the token outcome. Consequently, the results are different.
Our study of default-conditional SPIs (\Cref{subsec:multilateral-default-remapping-spis}) is inspired by \citet{digiovanni2024safe}, but they focus on the implementation of \cameraReady{and incentives for} commitment to default-remapping \cameraReady{in a program game setting} and don't study the unilateral default remapping case. 

Many forms of \textit{ex post} verifiable commitment (e.g., Stackelberg games) have been studied. %
Most closely related is the prior work on disarmament games (\citealt{deng2017disarmament,deng2018disarmament}; see also \citealt{Renou2009_commitment_games_simulatenous_disarmament}; \citealt{bade2009bilateral_commitment_cf_iterated_disarmament}; \citealt[][]{collina2024value-ambiguous-commitments-multi-follower-games}), which studies similar forms of commitment to our \Cref{sec:disarmament-SPIs}, though not in an SPI framework. %
\cameraReady{The literature on program games and open source game theory (\citealt{tennenholtz2004program_equilibrium, critch2022cooperative-Open-Source-GT}; cf.\ \citealt{kalai2010commitment_folk_thm}) can also be viewed as a studying form of \textit{ex post} verifiable commitment: players commit to play the output of a computer program which can read the other players’ programs. 
In addition to being the setting of \cite{digiovanni2024safe}, this bears some similarity to our token games: In both, the players commit to condition their final actions on a prior interaction (the token game or the program game).
}
We discuss more distantly related work on \textit{ex post} verifiability in \Cref{subsec:distantly_related_work}. %

\begin{toappendix}
\subsection{More Distantly Related Work}\label{subsec:distantly_related_work}
Stackelberg commitment -- commitment by a leader to play particular pure, mixed, or correlated strategy \cite{vonStengel2004_leadership_mixed-commitment, conitzer_optimal_strategy_to_commit, commitment-to-correlated-strategies} -- is \textit{ex post} verifiable by our definition. 
Commitment to outcome-conditional payments (\citealt{internal_implementation}; cf.\ \citealt{sauerberg2024commitment-to-payments, Gupta2015_rational-explanation-bribery}), is also \textit{ex post} verifiable. 
There is also work \cite[e.g.][]{grigoryan2023theory_auditability_of_SC_mechanisms} on whether groups of participants can detect \textit{ex post} that a mechanism was not faithfully executed
and on designing \textit{credible mechanisms} \cite{akbarpour2018credible-mechanisms, credible-auctions-via-crypto, credible-auctions-via-crypto-all-dists, credible-auctions-via-blockchain} -- mechanisms for which any \textit{profitable} deviation from the mechanism by the principal is detectable by a single agent. %

Finally, we discuss two papers with some conceptual similarity to our work. 
\citet{ijcai2024p333_mechanisms-play-game-not-flip-coin} considers derandomizing social choice mechanisms by having the players submit integers and using their sum modulo $m$ as a random seed, inducing a game reminiscent of our token games.
\citet{drakopoulos2023blockchain-mediated-persuasion} consider a signaling setting where the sender lacks full commitment power, %
but can design a smart contract which accepts an (unverifiable) world state report from the sender, charges the sender a report-dependent cost, and sends the receiver a corresponding (credibly randomized) signal. 
This is comparable to disarmament of mixed strategies (i.e. signal distributions), but with the additional ability to impose costs on the remaining strategies. 
\end{toappendix}

\section{Preliminaries}
\label{sec:preliminaries}

\textbf{Game theory.}~~We here introduce some %
game-theoretic notation and terminology. %
An \emph{$n$-player (normal-form) game (NFG) $\game$} is a pair $(A, \u)$, where $A = \bigtimes_i A_i$ for some a nonempty set of \emph{actions} $A_i$ for each player $i$, and $\u: A \rightarrow \R^n$ is a \emph{utility function}, with $u_i(a)$ Player $i$'s utility if the players play $\a$. 
We assume $|A_i| \geq 2$ for all Players $i$ unless otherwise stated.
We call the elements of $A$ \textit{outcomes} or action profiles. We use $\Delta(A)$ to denote the set of \textit{correlated strategy profiles}, i.e., distributions over $A$. We extend $\u$ to strategy profiles by taking the expectation: $\u(c)\coloneqq \sum_{\a\in A} c(\a) \u (\a)$\cameraReady{, where $c(a)$ is the probability assigned to outcome $a$ by the correlated strategy profile $c$}. %
We define $\u(A) = \{\u(a) : a \in A\}$, and define $\u(\Delta(A))$ similarly. %
We use $-i$ to denote the set of players other than $i$. %

For any $n$-player game $\game=(A,\mathbf u)$ and nonempty sets $\hat A_1\subseteq A_1,...,\hat A_n \subseteq A_n$ and letting $\hat A = \hat A_1 \times ... \times A_n$, note that $(\hat A, \mathbf u_{|\hat A})$ is a new game, where $u_{|\hat A}$ denotes the restriction of $\mathbf u$ to $\hat A$. We call this a subgame of $\game$. We %
typically just write $(\hat A, \mathbf u)$, omitting that $\mathbf u$ is restricted to the new action sets. We will often obtain a subgame by removing some set of actions $A_i'$. We then use $\game-A_i'$ as shorthand for $(A_1\times ... \times A_{i-1} \times (A_i-A_i')\times A_{i+1} \times ... \times A_n, \mathbf u)$, the subgame of $\game$ obtained by removing $A_i'$ from $\game$.

Given utility functions $\u$, we say that some outcome \textit{$\a'$ is a (weak) Pareto improvement on $\a$} if for all $i$ we have that $u_i(\a')\geq u_i(\a)$.
We then write $\u(\a')\pgeq \u(\a)$. 
We say that $\a'$ is a \textit{strict} Pareto improvement on $\a$, or $\u(\a') \pge \u(\a)$, if additionally there is a player $i$ s.t.\ $u_i(\a')>u_i(\a)$. 
We say that an outcome $\a$ is \textit{Pareto optimal} within some set if there is no strict Pareto improvement on $\a$ in that set.

Let $\game$ be a game and let $a_i,a_i' \in A_i$ be actions for Player $i$. We say that \emph{$a_i'$ strictly dominates $a_i$} if for all $a_{-i}\in A_{-i}$ we have that $u_i(a_i',a_{-i}) > u_i(a_i,a_{-i})$.

 A function $\iso\colon A \rightarrow A'$ defined by bijections $\iso_i: A_i \rightarrow A_i'$ is a \textit{(game) isomorphism} from $(A,\mathbf u)$ to $(A',\mathbf u ')$ if there exist some $m, b \in \R^n$ with all $m_i > 0$ such that $u_i'(\iso_1(a_1), %
..., \iso_n(a_n) %
) = m_i u_i(a) + b_i$ for all $a \in A$ and all players $i$.
An isomorphism is Pareto improving if $u'_i(\iso(a)) \geq u_i(a)$ for all players $i$ and all $a \in A$ and strictly Pareto improving if this inequality is strict for at least one player and outcome%
.

\cameraReady{
We take Player $i$ to be the same person (or company, etc.) across games. 
For this reason, we don't allow isomorphisms to permute players \cite[as done by e.g.][]{harsanyi1988general,Gabarro_Complexity_Game_isomorphism}. 
We also assume the utilities of a player are comparable between games and generally consider a single utility function $\u$ which operates on outcomes of all games.
This renders the notion of Pareto improvements between games meaningful.
}

\textbf{Outcome correspondences.}~~Following \citet{SPI-original}, we %
reason about safe Pareto improvements by reasoning about outcome correspondence relationships. 
We imagine that the players' strategies across all games can be represented by an (unknown) policy function $\policy$ which maps arbitrary games to their outcome. 
An outcome correspondence is a claim relating the results of playing two different games $\game,\game'$, i.e. a claim about the relationship between $\policy(\game)$ and $\policy(\game')$. 
For example, one possible outcome correspondence is the claim: if playing $\game = (A, \u)$ would result in some $\a \in A$, then playing $\game' = (A', \u)$ would result in $\a'$ or $\a''$ (for some $a', a'' \in A'$). 
This is a claim about $\policy$: If $\policy(\game) = \a$, then $\policy(\game') \in \{a', a''\}$.
More generally, let $\game=(A,\u)$ and $\game'=(A',\u')$ be two games and let $\Phi$ be a multivalued function from $A$ to $A'$. 
Then $\game \sim_{\Phi} \game'$ denotes the (outcome correspondence) claim that, whatever outcome $\policy(\game)$ is, the outcome $\policy(\game')$ will satisfy $\policy(\game') \in \Phi(\policy(\game))$. 

We typically have to make assumptions about what kinds of outcome correspondences hold between games. We make essentially the same assumptions as \citet{SPI-original,Oesterheld2025}. 
The first %
is that %
we can remove strictly dominated actions and the resulting game will be played in the same way \citep[cf.][]{Pearce1984,Kohlberg1986}. %

\begin{assumption}%
\label{assumption:elimination-of-dominated-strategies}
    Let $\game=(A_1,...,A_n,\mathbf{u})$ be a game. Let $\hat a_i$ be an action for Player $i$ that is strictly dominated in $\game$. Then $\game\sim_{\Xi} (A_i - \{\hat a_i\}, A_{-i}, \mathbf{u})$, where $\Xi(\mathbf a) = \emptyset$ if $a_i=\hat a_i$ and $\Xi(\mathbf a) = \{ \mathbf a \}$ otherwise.
    In other words, $\policy_i(\game) \neq \hat a_i$ and $\policy_i(\game) = \policy_i(\game - \{\hat a_i\})$.
\end{assumption}

We often consider the subgame obtained by iteratively removing all strictly dominated actions from $G$, which we denote $\red \game=(\red A, \mathbf u)$  and refer to as the \textit{reduced game}. %
This game is well-known to be unique %
\cite{Pearce1984,Gilboa1990,Apt2004}. %

Our second assumption is, roughly, that isomorphic games are played isomorphically%
.%

\begin{assumption}\label{assumption:isomorphism-assumption}
    Let $\game$ and $\game'$ be isomorphic games without strictly dominated actions. Then let $\Phi$ be the \textit{union} of the isomorphisms from $\game$ to $\game'$, i.e., for every outcome $\mathbf a$ of $\game$, we let $\Phi(\mathbf a) = \{ \phi(\mathbf a) \mid \phi \text{ isomorphism from } \game \text{ to } \game' \}$. Then $\game \sim_{\Phi} \game'$.
    In other words, there must be an isomorphism $\phi$ from $\game$ to $\game'$ such that $\phi(\policy(\game)) = \policy(\game')$.
\end{assumption}

Further, we use the following \textit{transitivity} rule:
if $\game\sim_\Phi \game'$ and $\game' \sim_\Xi \game''$, then $\game \sim _{\Xi \circ \Phi} \game''$. %
Finally,
between any two games $\game$ and $\game'$ the following trivial outcome correspondence holds: $(A,\u)\sim_{\mathrm{all}_{A,A'}} (A',\u')$, where $\mathrm{all}_{A,A'}(a) \coloneqq A'$ for all $a\in A$. %
That is, whatever outcome occurs in $\game$, some outcome in $A'$ must obtain if $\game'$ were to be played.

We call $\game'$ a \textit{safe Pareto improvement (SPI)} on $\game$ if there is a $\Phi$ s.t.\
\begin{enumerate*}
\item $\game\sim_{\Phi} \game'$,
\item for all $\a$ and $\a'\in \Phi(\a)$ we have that $\a'\pgeq \a$, and
\item there exists some realization of $\policy$ (satisfying any assumptions made) such that $\policy(\game') \pge \policy(\game)$
\end{enumerate*}. 
In other words, $\game'$ is an SPI on $\game$ if there is a strictly Pareto-improving outcome correspondence from $\game$ to $\game'$, where strictly Pareto improving means that at least one outcome (which is possible under the assumptions) is guaranteed to be strictly Pareto improved.

\section{Disarmament \SPIs}
\label{sec:disarmament-SPIs}

Perhaps the simplest form of ex-post-verifiable commitment is commitment against taking particular actions. Most straightforwardly, one of the players could commit unilaterally. That is, before playing a game $\game$, Player 1 could announce that they won't take any action from some set $\tilde A_1 \subset A_1$. 
Following \citeauthor{deng2017disarmament} (\citeyear{deng2017disarmament}, \citeyear{deng2018disarmament}), we call such a commitment a \textit{disarmament of $\tilde A_1$}. 
If Player 1's announcement is credible%
, the game $\game - \tilde A_1$ (the game obtained from %
$\game$ by removing the actions in $\tilde A_1$%
) %
is played instead of $\game$. 
We also consider multilateral disarmament. That is, Players 1 and 2 jointly agree not to play $\tilde A_1$ and $\tilde A_2$, respectively. %
Such bilateral disarmament is still \textit{ex post} verifiable.

\cameraReady{As an example, imagine that Alice and Bob are negotiating a contract. For simplicity, imagine that each player can make just three levels of demands: (h(igh), m(edium), and l(ow)). The outcome is determined by the combination of demands. Roughly, higher demands result in better outcomes for a player, but if aggregate demands are too high (at least one high and one medium), no favorable agreement can be reached. 
Additionally, let's imagine that Alice -- since she is a contract lawyer and Bob is not -- can incorporate ``loopholes'' into the fine print of proposed contracts (f) or not (nf). Adding these loopholes is generally good for Alice and bad for Bob. 
Meanwhile, Bob can insist on a simple contract (s) to minimize the impact of loopholes. Unfortunately, simple contracts also reduce flexibility and thus decrease payoffs relative to complicated contracts (c). A version of this game is visualized in \Cref{table:bargaining-with-loopholes}.

\begin{table}
\begin{center}
\begin{tabular}{c|c|c|c|c|c|c|}
   \multicolumn{1}{c}{} & \multicolumn{1}{c}{$(\text{l},\text{s})$}  & \multicolumn{1}{c}{$(\text{m},\text{s})$} &   \multicolumn{1}{c}{$(\text{h},\text{s})$} &   \multicolumn{1}{c}{$(\text{l},\text{c})$} &   \multicolumn{1}{c}{$(\text{m},\text{c})$} &   \multicolumn{1}{c}{$(\text{h},\text{c})$}  \\\cline{2-7}
   $(\text{l},\text{nf})$ &  $2,4$ & $2,5$ & $1,7$ & $4,4$ & $4,5$ & $3,7$ \\\cline{2-7}
   $(\text{m},\text{nf})$ & $3,4$ & $3,5$ & $0,2$ & $5,4$ & $5,5$ & $2,2$ \\\cline{2-7}
  $(\text{h},\text{nf})$ & $5,3$ & $0,2$ & $0,2$ & $7,3$ & $2,2$ & $2,2$ \\\cline{2-7}
   $(\text{l},\text{f})$ & $3,3$ & $3,4$ & $2,6$ & $5,2$ & $6,2$ & $6,2$  \\\cline{2-7}
   $(\text{m},\text{f})$ & $4,3$ & $4,4$ & $1,1$ & $6,2$ & $7,2$ & $3,1$\\\cline{2-7}
   $(\text{h},\text{f})$ & $6,2$ & $1,1$ & $1,1$ & $8,1$ & $3,1$& $3,1$ \\\cline{2-7}
\end{tabular}
\end{center}
\caption{Negotiation Game.}
    \label{table:bargaining-with-loopholes}
\end{table}

By default, Alice will incorporate loopholes into the contract by dominance reasoning.
Anticipating this, dominance reasoning suggests that Bob will insist on a simple contract. That is, the game reduces to its lower-left quadrant.

Now imagine that Alice can credibly commit against the use of loopholes. For instance, we could imagine that she can publicly promise Alice and Bob's social circle that that she won't include loopholes; and we might imagine that the social opprobrium from breaking such a promise would outweigh the gains from including the loopholes. Then there's no reason for Bob to insist on a simple contract. Thus, after such disarmament, the game reduces to the upper-right quadrant, which is an SPI on the original game.}

We \cameraReady{now} consider the computational question of whether for a given game $\game$, there is a disarmament s.t.\ the game resulting from the disarmament is an SPI on $\game$, as well as the question of whether a \textit{given} disarmament induces an SPI.

To get started, we prove a general result characterizing SPIs induced by \Cref{assumption:elimination-of-dominated-strategies,assumption:isomorphism-assumption}. Roughly, to assess whether $\game'$ is an SPI on $\game$ we only need to consider the reduced versions of the two games, $\red{\game'}$ and $\red{\game}$. 
For $\game'$ to be an SPI on $\game$, we need either either $\red{\game}$ to be isomorphic to $\red{\game'}$ via a Pareto-improving isomorphism or every outcome of $\red{\game'}$ to Pareto-dominate every outcome of $\red{\game}$.  

\begin{lemmarep}%
\label{lemma:SPI-iff-isomorphism-on-reduced-game-or-degenerate}
    Consider two games $\game$ and $\game'$. %
    Then $\game'$ %
    is an SPI on $\game$ under \Cref{assumption:elimination-of-dominated-strategies,assumption:isomorphism-assumption} if and only if at least one of the following two conditions holds:
    \begin{enumerate}
        \item \label{item:lemma-condition-isomorphism} There is a strictly Pareto improving isomorphism between $\red \game$ and $\red{\game'}$, i.e., an isomorphism $\iso$ from $\red{\game}$ to $\red{\game'}$ where $\iso(a) \pgeq a$ for all $a \in A$ and $\iso(a) \pge a$ for at least one $a \in A$.%
        \item \label{item:lemma-condition-simple} $\u(a') \pgeq \u(a)$ for all outcomes $a \in \red{A}$ and $a' \in \red{A'}$ and, for at least one outcome $a\in \red{A}$, $\u(a') \pge \u(a)$ for all $a' \in \red{A'}$.
    \end{enumerate}
\end{lemmarep}

\begin{proof}
    $\Rightarrow$/``only if'': We prove this by proving the contrapositive. That is, we prove that if neither of the two given conditions hold, then $\game'$ is not an SPI on $\game$. To prove this non-SPI claim, we need to construct an assignment $\Pi$ of outcomes to games such that the assignment satisfies \Cref{assumption:isomorphism-assumption,assumption:elimination-of-dominated-strategies} %
    and under which $\Pi(\game')$ is not Pareto-better than $\Pi(\game)$.%

    We construct this %
    assignment $\Pi$ as follows. By assumption, there is an outcome $o$ of $\red{\game}$ and an outcome $o'$ of $\red{\game'}$ such that $o'$ doesn't Pareto-dominate $o$. So assign $o$ to $\Pi(\red{\game})$ and $\Pi(\red{\game'})$. Next, partition the set of all normal-form games into sets that are isomorphic after full reduction. (It's easy to see that the induced relation between games is an equivalence relation and thus that this is indeed a partition.) By assumption, $\red{\game}$ and $\red{\game'
    }$ %
    are in separate sets. Now for the sets that contain $\red{\game}$ and $\red{\game'}$, assign outcomes consistently with the already assigned ones. In particular, assign $o$ to $\game$ and $o'$ to $\game'$.  We thus achieve our main objective of making it so that $\game'$ is not an SPI on $\game$. All we have left to do is show that we can complete our assignment of outcomes to games, which we can do as follows: In all other sets, pick any fully reduced game and assign an outcome arbitrarily. Then assign the rest of the set consistently, as before.%
    
    Clearly, by choice of $o$, $o'$, $\Pi(\game)$ isn't Pareto dominated $\Pi(\game')$. Further, it is easy to see that the assignment satisfies \Cref{assumption:isomorphism-assumption,assumption:elimination-of-dominated-strategies}.

    $\Leftarrow$/``if'': %
    Let's consider the first condition (\Cref{item:lemma-condition-isomorphism}). By \Cref{assumption:elimination-of-dominated-strategies}, we have $\game\sim \red{\game}$ and $\game'\sim \red{\game'}$. By \Cref{assumption:isomorphism-assumption}, we have that $\red{\game}\sim_{\Phi} \red{\game'}$, where $\Phi$ is the union of all the isomorphisms between $\red{\game}$ and $\red{\game'}$. Note that we know that at least one of the isomorphisms from $\red{\game}$ to $\red{\game'}$ is Pareto-improving. It's easy to see that all isomorphisms between two given games must induce the same mapping between utility vectors. (They all must map the best/worst outcomes for any Player $i$ in one game to the best/worst outcomes for Player $i$ in the other game. Since a linear function is uniquely specified by two points, they must all act on Player $i$'s utilities in the same way.) Thus, from the fact that one isomorphism is Pareto improving, we can infer that all isomorphisms from $\red{\game}$ to $\red{\game'}$ are Pareto improving. Thus, $\Phi$ is (weakly) Pareto improving. To prove strictness, we also need to construct an assignment of $\Pi$ that assigns a strictly Pareto-better outcome to $\game'$ than to $\game$. This can be done by assigning outcomes $o'$ and $o$ s.t.\ $(o,o')\in \Phi$ and $o'$ is strictly Pareto better than $o$. Such a pair of outcomes exists by \Cref{item:lemma-condition-isomorphism}. The rest of the construction works the same way as the construction above.

    The proof for the second condition (\Cref{item:lemma-condition-simple}) works the same way, except that instead of \Cref{assumption:isomorphism-assumption}, we invoke the trivial outcome correspondence $\mathrm{all}_{\red{ A}, \red{A'}}$ between the reduced games. The fact that this outcome correspondence is Pareto-improving follows immediately from the condition.    
\end{proof}

We call an SPI \textit{simple} if it can be proven using only condition \ref{item:lemma-condition-simple}%
. That is, $\game'$ is a \textit{simple} SPI on $\game$ under \Cref{assumption:elimination-of-dominated-strategies} if, for all outcomes $a' \in \red A'$ and all outcomes $a \in \red A$, $\u(a') \pgeq \u(a)$. Note that simple SPIs can be proved without \Cref{assumption:isomorphism-assumption}.
Similarly, we refer to SPIs based on condition \ref{item:lemma-condition-isomorphism} as \textit{isomorphism} SPIs. That is, a game $\game'$ is an \textit{isomorphism} SPI on a game $\game$ under \Cref{assumption:elimination-of-dominated-strategies,assumption:isomorphism-assumption} if there exists a Pareto-improving isomorphism between $\red G$ and $\red{G'}$.

We now consider the problem of deciding whether a \textit{given} disarmament is a safe Pareto improvement. We show that even under strong restrictions this problem is graph-isomorphism-complete (\GI-complete). %
The graph isomorphism problem is commonly believed to be \NP-intermediate, i.e., in \NP, not %
solvable in polynomial time, but not \NP-hard. (For %
discussions of \GI, see \citealt{mathon1979note}; \citealt{zemlyachenko1985graph}; \citealt{Koebler1993}, \citealt{grohe2020graph}.)

\begin{theoremrep}\label{thm:disarmament-SPI-decision-problem-GI-complete}
The following problem is \GI-complete: Given a game $\game = (A_1, ..., A_n, \mathbf u)$ and sets of actions $(\tilde A_i)_i$ for each player, decide whether the game %
$\game ' = (A_1 - \tilde A _1, ..., A_n - \tilde A_n, \mathbf u)$ is an %
SPI on $\game$ under \Cref{assumption:elimination-of-dominated-strategies,assumption:isomorphism-assumption}.
The problem remains \GI-complete if we restrict attention to %
$n=2$, $|\tilde A_1|=1$ and $\tilde A_2=\emptyset$.
\end{theoremrep}

\begin{proofsketch}
    Whether $\game'$ is a simple SPI on $\game$ can be decided in polynomial time, so we focus on isomorphism SPIs. The first central idea behind the proof is that deciding various questions about whether a given pair of games are isomorphic is \GI-complete, see \Cref{appendix:game-isomorphism-graph-isomorphism-complete}. {\GI}-membership is then easy to prove. For hardness, we reduce from the problem of deciding whether two games are isomorphic%
    . To do this, we construct for any pair of games $\game$, $\game'$, a new game with two properties. First, it reduces to $\game$ plus some gadget when no actions are disarmed. %
    Second, under a particular unilateral disarmament with $|\tilde A_1|=1$ and $\tilde A_2=\emptyset$, it reduces to $\game'+(\epsilon, \epsilon)$, i.e., the game arising from $\game'$ by giving each player an extra $\epsilon$ in each outcome, plus an isomorphic gadget. Whether the disarmament is an SPI then becomes equivalent to the question whether $\game$ and $\game'$ are isomorphic.
\end{proofsketch}

\begin{proof}
\underline{\GI{}-Membership:} Let $\game'=(A_1 - \tilde A _1, ..., A_n - \tilde A_n, \mathbf u)$. By \Cref{lemma:SPI-iff-isomorphism-on-reduced-game-or-degenerate}, the problem of deciding whether $\game'$ is an SPI on $\game$ is equivalent to the question of whether either (1) there is a (strictly) Pareto-improving isomorphism from the fully reduced $\bar \game'$ to the fully reduced $\red{ \game}$ or (2) every outcome of $\bar\game'$ is (strictly) Pareto-better than all outcomes of $\bar\game$.
(Note that $\red{\game'}$ and $\red{\game}$ can be constructed in polynomial time.) By \Cref{prop:Pareto-improving-isomorphism-GI-complete}, deciding the existence of such a (strictly) Pareto-improving isomorphism is in \GI. Clearly, deciding whether all outcomes of $\bar\game'$ Pareto dominate all outcomes of $\bar\game$ -- to determine whether $\bar\game'$ is a simple SPI on $\bar\game$ --
can be done in polynomial time. It follows that the problem is in \GI.

\underline{\GI{}-hardness:} We will reduce from the problem of deciding whether given games $\game$ and $\game'$ are isomorphic via an isomorphism with coefficients 1 and 0, which is \GI{}-complete by \Cref{prop:1-0-isomorphism-GI-complete}. WLOG let the range of utilities in $\game$ and $\game'$ be $[0,1]$. Consider the game in \Cref{table:game-for-proof-of-thm:disarmament-SPI-decision-problem-GI-complete} and the proposed disarmament %
$\tilde A_1=\{x\}$. We will show that this disarmament is an SPI if and only if $\game$ and $\game'$ are isomorphic.

\begin{table*}
\makebox[\linewidth][c]{%
\begin{tabular}{ccccccc}
                                      & $a_1,...,a_m$                                      & $x$                                                  & $x'$                                                & $a_1',...,a_m'$                                   & $y$                                                 & $y'$                                                  \\  \cline{2-7} 
\multicolumn{1}{c|}{$a_1,...,a_n$}    & \multicolumn{1}{c|}{$G$}                           & \multicolumn{1}{c|}{$0$, $0$}                        & \multicolumn{1}{c|}{$-100$, $10$}                   & \multicolumn{1}{c|}{$-\epsilon$, $-\epsilon$}     & \multicolumn{1}{c|}{$-2\epsilon$, $-\epsilon$}      & \multicolumn{1}{c|}{$-100-\epsilon$, $-\epsilon$}              \\ \cline{2-7} 
\multicolumn{1}{c|}{$x$}              & \multicolumn{1}{c|}{$0$, $10+\epsilon$}            & \multicolumn{1}{c|}{$10$, $10+\epsilon$}             & \multicolumn{1}{c|}{$10$, $10$}                     & \multicolumn{1}{c|}{$1+2\epsilon$, $10-\epsilon$} & \multicolumn{1}{c|}{$10+2\epsilon$, $10-2\epsilon$} & \multicolumn{1}{c|}{$10+2\epsilon$, $10-2\epsilon$}   \\ \cline{2-7} 
\multicolumn{1}{c|}{$x'$}             & \multicolumn{1}{c|}{$1$, $-100$}                   & \multicolumn{1}{c|}{$0$, $-100$}                     & \multicolumn{1}{c|}{$-10$, $-10$}                   & \multicolumn{1}{c|}{$-\epsilon$, $-100-\epsilon$} & \multicolumn{1}{c|}{$-\epsilon$, $-100-\epsilon$}   & \multicolumn{1}{c|}{$-100-\epsilon$, $-100-\epsilon$} \\ \cline{2-7} 
\multicolumn{1}{c|}{$a_1',...,a_n'$}  & \multicolumn{1}{c|}{$-\epsilon$, $-100-2\epsilon$} & \multicolumn{1}{c|}{$10-\epsilon$, $-100-2\epsilon$} & \multicolumn{1}{c|}{$10-\epsilon$, $-100-\epsilon$} & \multicolumn{1}{c|}{$G'+\epsilon$}                & \multicolumn{1}{c|}{$\epsilon$, $\epsilon$}         & \multicolumn{1}{c|}{$-100+\epsilon$, $10+\epsilon$}   \\ \cline{2-7} 
\multicolumn{1}{c|}{$y$}              & \multicolumn{1}{c|}{$-\epsilon$, $-100-2\epsilon$} & \multicolumn{1}{c|}{$10-\epsilon$, $-100-2\epsilon$} & \multicolumn{1}{c|}{$10-\epsilon$, $-100-\epsilon$} & \multicolumn{1}{c|}{$\epsilon$, $10+2\epsilon$}   & \multicolumn{1}{c|}{$10+\epsilon$, $10+2\epsilon$}  & \multicolumn{1}{c|}{$10+\epsilon$, $10+\epsilon$}     \\ \cline{2-7} 
\multicolumn{1}{c|}{$y'$}             & \multicolumn{1}{c|}{$-\epsilon$, $-100-2\epsilon$} & \multicolumn{1}{c|}{$10-\epsilon$, $-100-2\epsilon$} & \multicolumn{1}{c|}{$10-\epsilon$, $-100-\epsilon$} & \multicolumn{1}{c|}{$1+\epsilon$, $-100+\epsilon$}& \multicolumn{1}{c|}{$\epsilon$, $-100+\epsilon$}    & \multicolumn{1}{c|}{$-10+\epsilon$, $-10+\epsilon$}   \\ \cline{2-7} 
\end{tabular}%
}
\caption{Construction for the hardness part of the proof of \Cref{thm:disarmament-SPI-decision-problem-GI-complete}}
\label{table:game-for-proof-of-thm:disarmament-SPI-decision-problem-GI-complete}
\end{table*}

First note that the game in \Cref{table:game-for-proof-of-thm:disarmament-SPI-decision-problem-GI-complete} reduces to its top-left 3-by-3 quadrant: First for Player 1, $x$ dominates $a_1',...,a_n'$, $y$ and $y'$. After removing those actions, Player 2's $a_1,...,a_m$ dominate $a_1',...,a_m'$, $x$ dominates $y$ and $x'$ dominates $y'$.

Second, note that after the removal of $x$ for Player 1 in \Cref{table:game-for-proof-of-thm:disarmament-SPI-decision-problem-GI-complete}, the game reduces to its bottom-right 3-by-3 quadrant: First Player 2's $x'$ dominates $a_1,...,a_m$ and $x$. After the removal of $a_1,...,a_m$, Player 1's $a_1',...,a_n'$ dominate $a_1,...,a_n$. Then Player 1's $x'$ is dominated by all of Player 1's other actions. Finally Player 2's $x'$ is dominated by all of Player 2's other actions.

Further notice that some outcomes in the top-left quadrant (e.g., $(x,x')$) are non-Pareto-improved by some outcomes in the bottom-right quadrant.

Thus, by \Cref{lemma:SPI-iff-isomorphism-on-reduced-game-or-degenerate} we have that the disarmament of $x$ for Player 1 is an SPI if and only if there is a Pareto-improving isomorphism between the bottom-right quadrant and the top-left quadrant. 
It is easy to see that this is the case if and only if $\game$ is isomorphic to $\game'$.
\end{proof}
\begin{toappendix}
Why do we specifically consider the problem of identifying isomorphisms with coefficients $1$ and $0$? Because there might be other isomorphisms between $\game$ and $\game'$ that don't induce corresponding isomorphisms between the top-left and bottom-right 3x3 quadrant in the game of \Cref{table:game-for-proof-of-thm:disarmament-SPI-decision-problem-GI-complete}.
\end{toappendix}

Next we consider the problem of deciding whether a given game has \textit{any} disarmament SPI (rather than evaluating a specific candidate). This problem is NP-complete, even if we restrict attention to unilateral SPIs.

\begin{theoremrep}\label{thm:disarmament-SPI-NP-complete}
The following problem is \NP-complete: Given a game $\game$, decide whether there exist %
sets $A_1',...,A_n'$ %
s.t.\ $(A_1 - A_1', ..., A_n - A_n', \mathbf u)$ is a %
SPI on $\game$ under \Cref{assumption:isomorphism-assumption,assumption:elimination-of-dominated-strategies}. The problem remains \NP-complete if we restrict attention to two-player games and $\bar A_2=\emptyset$.
\end{theoremrep}

\begin{proofsketch}
The difficult part is proving hardness. Similar to %
\Cref{thm:disarmament-SPI-decision-problem-GI-complete}, the first central idea is to use the \NP{}-hardness %
of determining whether one game $\game$ can be isomorphically mapped into a subgame of another game $\game'$ (\Cref{thm:subgame-isomorphism-problem}), where the isomorphism keeps utilities constant. %
The main challenge of the proof then is to construct a game that (without disarmament) reduces to %
$\game$ (plus some gadget) and that by (unilateral disarmament) can be made to reduce to any %
subgame of $\game'$ (plus an isomorphic gadget) with an extra utility of $\epsilon$ for all players. Then there is a (strict) (unilateral) disarmament SPI if and only if $\game'$ has a subgame that is isomorphic to $\game$.
\end{proofsketch}

\begin{proof}
    \underline{\NP-membership}:  As certificates (witnesses) we can use the sets $\tilde A_1,\dots,\tilde A_n$, along with the isomorphism between the fully reduced versions of $\game$ and $(A_1 - \tilde A_1, ..., A_n - \tilde A_n, \mathbf u)$. Clearly, these are polynomially sized and can be verified in polynomial time.

    \underline{\NP-hardness}: We reduce from one of the \NPC{} subgame isomorphism problems of \Cref{thm:subgame-isomorphism-problem}: 
    Given games $G$ and $G'$ that cannot be reduced by strict dominance decide whether there is a $1$-$0$-coefficient isomorphism from $G$ into a subgame of $G'$ (i.e., a game obtained by removing some of the actions in $G'$).
    Consider a pair of games $G$ and $G'$ with utilities bounded between $0$ and $1$. (This is w.l.o.g. because if it was not the case, we can renormalize the utilities without changing whether an isomorphism exists.)

    \begin{table*}[]
    \centering
    \begin{tabular}{c|c|c|c|c|c|c|c|c|c|}
    \multicolumn{1}{c}{} & \multicolumn{3}{c}{$\{D\} \times A_2$} & \multicolumn{1}{c}{$(D,\bar a_2)$} & \multicolumn{3}{c}{$\{P\} \times A_2'$} & \multicolumn{1}{c}{$(P, \bar a_2)$} \\
     \cline{2-9}
    $\{ R \} \times A_1'$ & \multicolumn{4}{c|}{\multirow{4}{*}{$-10,-10$}}  & \multicolumn{3}{c|}{$G' + (\epsilon, \epsilon)$} & $\epsilon, 2+\epsilon$\\
    \cline{6-9}
    \multirow{3}{*}{$\{R\} \times A_2'$} &\multicolumn{4}{c|}{}  & \multicolumn{1}{c}{$-1+\epsilon, 3+\epsilon$} & \multicolumn{1}{c}{} & $-2,\epsilon$ & $1+\epsilon, 2+\epsilon$ \\
    &\multicolumn{4}{c|}{}  & \multicolumn{1}{c}{} & \multicolumn{1}{c}{$\ddots$} & & $\vdots$ \\
    &\multicolumn{4}{c|}{}  & \multicolumn{1}{c}{$-2,\epsilon$}& \multicolumn{1}{c}{} & $-1+\epsilon,3+\epsilon$ & $1+\epsilon, 2+\epsilon$ \\
    \cline{2-9}
    $ \{T\} \times A_1$ & \multicolumn{3}{c|}{$G$} & $0,2$ & \multicolumn{4}{c|}{\multirow{4}{*}{$10,-10$}} \\
    \cline{2-5}
    \multirow{3}{*}{$\{T\} \times A_2$} & \multicolumn{1}{c}{$-1,3$} & \multicolumn{1}{c}{} & $-2,-2$ & $1,2$ & \multicolumn{4}{c|}{} \\
    & \multicolumn{1}{c}{} & \multicolumn{1}{c}{$\ddots$} & & $\vdots$ & \multicolumn{4}{c|}{} \\
    & \multicolumn{1}{c}{$-2, -2$}& \multicolumn{1}{c}{} & $-1,3$ & $1,2$ & \multicolumn{4}{c|}{}\\
    \cline{2-9}
    \end{tabular}
    \caption{Construction for the hardness part of the proof of \Cref{thm:disarmament-SPI-NP-complete}.}
    \label{table:construction-for-proof-of-thm:disarmament-SPI-NP-complete-new}
    \end{table*}

    We claim there is a (unilateral) disarmament SPI in \Cref{table:construction-for-proof-of-thm:disarmament-SPI-NP-complete-new} if and only if there is a $1$-$0$ subgame isomorphism from $G$ into $G'$. Specifically, we show first that if there is a $1$-$0$ subgame isomorphism from $G$ into $G'$, then there is a unilateral SPI ($\Leftarrow$). Second, we show that if there is any SPI, then there is also a unilateral SPI in particular and moreover there is a $1$-$0$ subgame isomorphism from $G$ into $G'$ ($\Rightarrow$). This proves NP-hardness of both the unilateral and the bilateral versions of the problem. %

    We first give some brief intuition for the game in \Cref{table:construction-for-proof-of-thm:disarmament-SPI-NP-complete-new} (and for the proof as a whole). Player 1's $T$ and $R$ stand for ``temptation'' and ``refrain'' (or resist temptation), respectively. Player 2's $D$ and $P$ stand for ``defensive'' and ``permissive'', respectively. Without any disarmament, Player 1 is ``tempted'': the $T$ actions strictly dominate the $R$ actions. Player 2 expects Player 1 to be tempted and thus will act defensively. So by default (absent disarmament), the players play the bottom-left $T$--$D$ quadrant. If Player 1 commits to resist temptation (i.e., to play one of the $R$ actions), then Player 2 has no reason to defend (play a $D$ action) and acts permissively instead. Thus, if Player 1 commits against $T$, the players play the top-right quadrant of the game. Playing $(R,P)$ is potentially Pareto-better than playing $(T,D)$, because both players get an extra $\epsilon$ for playing $(R,P)$. However, the players also play another game in parallel and this other game varies depending on whether they play $(R,P)$ or $(T,D)$. To obtain a safe Pareto improvement based on disarming $T$, the players might need to disarm further actions to make the resulting $(R,P)$ subgame isomorphic to the original $(T,D)$ subgame.

    We now conduct the formal proof by showing both directions of the equivalence.
    For both directions, note first that the game in \Cref{table:construction-for-proof-of-thm:disarmament-SPI-NP-complete-new} reduces to its bottom left quadrant.%

    $\Leftarrow$: We argue that if there is a subgame of $G'$ that is $1,0$-coefficient isomorphic to $\game$, then there is a unilateral disarmament SPI. Let the isomorphic subgame of $\game'$ be $\hat\game$ with action sets $\hat A_1$ and $\hat A_2$. Then Player 1 can disarm $\{T\} \times (A_1 \cup A_2)$ %
    and $\{R\} \times ((A_1'-\hat A_1) \cup (A_2' - \hat A_2))$. After this disarmament, note first that all of Player 2's $D$ actions are strictly dominated by the $P$ actions. Further note that for every action $a_2'\in A_2' - \hat A_2$, the action $(P,a_2')$ is now dominated by $(P,\bar a_2)$ (because the only action against which $(P,a_2')$ is a better response is $(R,a_2')$, which was disarmed). %
    Thus, the remaining actions for Player 2 are $\{P\} \times (\hat A_2 \cup \{ \bar a_2\})$. Clearly, if $\hat G$ is isomorphic (with coefficients 1, 0) to $G$%
    , then the bottom-left quadrant is isomorphic with coefficients $1$ and $\epsilon$ to the reduced game after disarmament, $(\{R\} \times (\hat A_1 \cup \hat A_2), \{P\} \times (\hat A_2 \cup \{\bar a_2\}), \mathbf u)$. Thus, the disarmament is an SPI.

    $\Rightarrow$: We have left to show that if there is an SPI, there is a $1,0$ subgame isomorphism from $G$ into $G'$. The proof mostly works by characterizing what this safe Pareto improvement will have to look like, and then extracting the isomorphism from it.
    
    Notice first that in order to Pareto improve on the default (i.e., playing the game of \Cref{table:construction-for-proof-of-thm:disarmament-SPI-NP-complete-new} and thus the bottom-left quadrant), Player 1 has to disarm at least all $T$ actions. This is because if any $T$ actions remain, then (regardless of what, if anything, Player 2 disarms) the game will reduce by iterated dominance to some part of its bottom-left quadrant. But the bottom-left quadrant contains no outcome that is weakly Pareto-improving on all other outcomes in the quadrant. %
    Thus, since no subset can be isomorphic to the full quadrant, by \Cref{lemma:SPI-iff-isomorphism-on-reduced-game-or-degenerate}, there can be no non-trivial SPI that consists just of parts of the bottom left quadrant.

    If Player 1 does disarm all the $T$ actions, then (regardless of what else is disarmed), Player 2's $D$ actions will all be dominated by the $P$ actions. Thus, the SPI results in some part of the upper right quadrant being played.

    Now note again that the top-right quadrant doesn't contain any outcome that (very weakly) Pareto-dominates all outcomes in the bottom-left quadrant. Thus, by \Cref{lemma:SPI-iff-isomorphism-on-reduced-game-or-degenerate}, any disarmament SPI must be a disarmament that (after elimination by dominance) results in a game that is isomorphic to the bottom-left quadrant.

    Now if $\Phi$ is an isomorphism between the bottom-left quadrant and a subgame of the top-right quadrant, then it's easy to see that $\Phi$ must have coefficients $1$ and $\epsilon$, that $\Phi$ must map the %
    $\{T\}\times A_2$ actions onto $\{R\} \times A_2'$, $\{T\}\times A_1$ into $\{ R \} \times A_1'$,  $\{D\}\times A_2$ onto $\{P\} \times A_2'$, and $(D,\bar a_2)$ onto $(P,\bar a_2)$. Furthermore, the elements of $A_2'$ in the image of $\{T\}\times A_2$%
    under $\Phi$ must be the same as the elements of $A_2'$ in the image of $\{D\} \times A_2$ under $\Phi$. Call these actions $\hat A_2$.

    From the above, we can see that the SPI can be achieved unilaterally. Player 1 can disarm the $T$ actions, and disarm %
    $\{R\}\times (A_2' - \hat A_2)$. By dominance, Player 2's $D$ actions as well as the $\{P\}\times (A_2' - \hat A_2)$ will be removed by strict dominance.

    Finally, it is easy to see that $\Phi$ on $\{T\}\times A_1$ and $\{D\}\times A_2$ induces a $1$-$0$ isomorphism between $G$ and $G'$, as desired.
\end{proof}

\cameraReady{
Note that if we bound the number of actions that can be disarmed, the problem returns to being \GI-complete%
, since there are only polynomially many %
disarmaments to try. %
}%

Throughout this paper we consider safe \textit{Pareto} improvements. For \textit{unilateral} disarmaments in particular it is also natural to consider ``safe $u_1$ improvements'', i.e., unilateral disarmaments that are guaranteed (by \Cref{assumption:elimination-of-dominated-strategies,assumption:isomorphism-assumption}) to be better for Player 1. Our proofs of \Cref{thm:disarmament-SPI-decision-problem-GI-complete,thm:disarmament-SPI-NP-complete} apply not just to SPIs but also to safe $u_1$ improvements.

\section{Token Game \SPIs}
\label{sec:token-game-SPIs}

In this section, we consider \SPIs{} achieved by commitments to resolve a game by playing a token game. 
Token games are the same type of mathematical object as \enquote{normal} games, and we'll typically denote them $\tgame = (T, \u)$.
All of our assumptions apply to token games in the same way as to normal games.
However, token games are intrinsically meaningless; their actions and payoffs don't represent anything in the real world. 
Instead, their payoffs must be realized by playing actions in the original game.
We imagine this works as follows. 
Suppose that instead of playing $\game$ directly, the players agree to resolve it by playing the token game $\tgame = (T, \u)$. 
To do so, they simultaneously declare their token actions $t_i \in T_i$, perhaps by writing them down on pieces of paper and then flipping them over. %
This results in a token outcome $t \in T$. 
The players then realize the token payoffs $\u(t)$ by playing some strategy profile in $G$ with that (expected) payoff. 

We say a token game $\tgame$ can be realized in a game $G$ if there exists a \textit{realization function} $\remap: T \rightarrow \F(A)$ such that $\u(t) = \u(\remap(t))$ for all $t \in T$. 
We'll consider two cases for $\F(A)$: the set of correlated strategy profiles $\C(A)$ and the set of pure strategy profiles $A$, and refer to the constructed games as correlated and pure token games, respectively.
Formally, a pure/correlated token SPI on a game $\game$ is a pure/correlated token game $\tgame$ which is realizable in $\game$ and where $\game \sim_{\oc} \tgame$ via a Pareto-improving outcome correspondence $\oc$.

Commitments to play as prescribed by a token game can easily be made \textit{ex post} verifiable. 
The players need to ensure that each player chooses their token action before learning the others'.
This can be done through cryptographic commitment \cite{brassard1988minimum_disclosure_proofs_of_knowledge_commitment, goldreich2004foundations_cryptography} or using physical assumptions\cameraReady{, e.g. by privately writing the actions on pieces of paper}.
\cameraReady{
For correlated token games, the players also need to (\textit{ex post} verifiably) correlate their strategies. 
This is easy to do given a shared source of randomness: the randomness selects an outcome/strategy profile which the players are then required to play. 
This randomness could be provided by physical or cryptographic coin flipping \cite{blum1983coin-fipping-by-phone}, which can even be done non-interactively \cite[cf.][]{zk-mechanisms}, or by having the players submit strings of (purportedly random) bits and taking their XOR \cite[cf.][]{ijcai2024p333_mechanisms-play-game-not-flip-coin}.}
\cameraReady{Note that, compared to cryptographic protocols for implementing correlated equilibria \cite{dodis2000_crypto_impl_CE}, our solutions are much simpler because the players don't need distinct signals. Indeed making the players' signals distinct would generally break \textit{ex post} verifiability. }

We do not consider mixed token games, as they %
offer little benefit over correlated token games and come with substantial drawbacks.
In contrast to the other sections, we do not consider a unilateral version of token SPIs. 
Token SPIs are commitments to play a strategy determined by a token game, which doesn't make much sense if the other players don't participate. 
For further discussion, see \Cref{appendix_subsec:token_game_discussion}.

\begin{toappendix}
\subsection{Discussion of Alternate Versions of Token Games }\label{appendix_subsec:token_game_discussion}
We consider both pure and correlated token games, as each have advantages over the other. 
Pure token games guarantee the players their token payoffs exactly, while in correlated token games these payoffs are achieved only in expectation. 
On the other, correlated token games allow for a much larger space of feasible token payoffs and hence wider range of SPIs. 

We do not consider mixed token SPIs, as they seem to offer little benefit over correlated token SPIs while coming with substantial drawbacks.  
Of course, the space of correlated strategy profiles is larger and more computationally amenable than the space of mixed strategy profiles.
For this reason, mixed strategies are usually considered in cases where the players choose their actions independently, but playing a token game already requires substantial communication.
In particular, when submitting their token actions, the players could also submit a string of randomly generated bits. 
The players could then use (for example) the XOR of these bits as the randomness for the correlated strategy profile prescribed by the token game. 
This is similar to the idea from \cite{ijcai2024p333_mechanisms-play-game-not-flip-coin}.
\subsection{Deferred Proofs from \Cref{sec:token-game-SPIs}}
\end{toappendix}

\Cref{lemma:SPI-iff-isomorphism-on-reduced-game-or-degenerate} shows that there are two types of SPIs: simple SPIs and isomorphism SPIs. We first show that simple token SPIs can be found in polynomial time, before moving on to isomorphism token SPIs, which will be our primary focus.

\textbf{Simple Token SPIs.}~~Applying the definition of simple SPIs to the present setting, a token game $\tgame$ is a simple SPI on a game $\game$ under \Cref{assumption:elimination-of-dominated-strategies} if, 
\begin{enumerate*}[label=(\alph*)]
    \item $\u(t) \pgeq \u(a)$ for all outcomes $t$ in the reduced token game $\red T$ and all outcomes $a \in \red A$ and 
    \item there exists an outcome $a \in \red A$ such that $\u(t) \pge \u(a)$ for all $t \in \red T$
\end{enumerate*}.
As one might expect, there's a simple characterization of simple token SPIs in both the pure and correlated cases. 

\begin{theoremrep}\label{thm:complexity-decide-simple-token-SPI}
    A game $\game$ admits a simple token SPI realizable in $\F(A)$ if and only if there exists a payoff in $\u(\F(A))$ which weakly Pareto dominates all of $\u(\red A)$ and strictly Pareto dominates at least one payoff in $\u(\red A)$.
    For both pure and correlated token SPIs, it can be decided in polynomial time whether a simple token SPI exists. 
\end{theoremrep}

\begin{proof}
    \textbf{Characterization:} We first prove the ``if and only if'' claim from the first sentence.    
    
    ($\Leftarrow$) Suppose there exists a payoff in $\u(\F(a))$ which weakly Pareto dominates all of $\u(\red A)$ and strictly Pareto dominates at least one payoff in $\u(\red A)$. 
    Let $\delta$ be an element of $\F(A)$ with $\u(\delta)$ as in the previous sentence.
    Then the token game $\tgame$ with $|T| =\{t\}$ and $\u(t) = \u(\delta)$ is a simple token SPI on $\game$ and is realizable by $\remap(t) = \delta$.
    
    $(\Rightarrow)$ 
    Suppose $\tgame = (T, \u)$ is a simple token SPI on $\game$, and let $\remap$ be a realization function for $\tgame$ in $\F(A)$. 
    Consider $\remap(t)$ for some arbitrary $t \in T$.
    Then, by the definition of simple SPIs, $\u(t) \pgeq \u(a)$ for all outcomes $a \in \red{A}$ and, for at least one outcome $a\in \red{A}$, $\u(t) \pge \u(a)$.
    Hence, $\remap(t) \in \F(A)$ satisfies the desired conditions. 

    \textbf{Complexity: }
    If $\F(A) = A$, we can decide whether a simple token SPI exists by simply computing each player's maximum utility over $\red A$ $v_i^*$ and then iterating over the $a \in A$ to check whether any $\u(a)$ strictly Pareto dominates $v^*$. 
    This can be done in linear time. 
    
    For the case where $\F(A) = \C(A)$, we reduce the problem to checking whether the optimal solution of the following polynomially sized linear program is strictly greater than $0$.
\begin{align*}
    &\text{Maximize}\quad  \sum_{\red{a} \in \bar A} \sum_{i \in [n]} \left[\left( \sum_{a \in A} p_a u_i(a)\right) - u_i(\bar a) \right]  &\\
    &\text{Subject to:} &\\
    & p_a \geq 0 &\text{for all } a \in A \\
    &\sum_{a \in A} p_a = 1 \\
    &\sum_{a \in A} p_a u_i(a) \geq u_i(\bar a) & \text{for all } \red{a} \in \bar A, i \in [n] \\
\end{align*}

    The variables $p_a$ collectively represent a probability distribution over $A$, i.e. an element of $\C(A)$. 
    The utilities $u_i(a)$ and $u_i(\red a)$ are inputs to the problem instance and so are constants from the perspective of the LP. 
    It's easy to see that the program is indeed linear. 
    
    The first two sets of constraints ensure that $\{p_a\}$ is indeed a probability distribution. 
    The third set of constraints ensures that the distribution represented by $\{p_a\}$ is indeed Pareto better than every outcome in the reduced game $\red A$. 
    The objective sums the utility gain of $\{p_a\}$ over $\red a$ over outcomes $\red a$ and players $i$, and so is strictly positive at optimality if and only if $\{p_a\}$ strictly Pareto dominates at least one point in $\red A$.  

    The LP has $O(|A|)$ variables and $O(|A| n)$ constraints, both of which are polynomial in the input size, so it can be solved in polynomial time.  
    
\end{proof}

\textbf{Isomorphism Token SPIs.}~~We'll focus on isomorphism token SPIs for the rest of the section. 
By definition, a token game $\tgame$ is an isomorphism SPI on a game $\game$ under \Cref{assumption:elimination-of-dominated-strategies,assumption:isomorphism-assumption} if there exists a Pareto improving isomorphism between $\red G$ and $\red{T}$.
We begin by making some simplifying observations. 
When constructing a token game, there's no reason to include any token strategies that can be eliminated by \Cref{assumption:elimination-of-dominated-strategies}, so we'll only consider token games that contain no strictly dominated actions. 
In addition, since the SPI requires an isomorphism from $T$ to $\red A$, we can also consider only token games with $|T_i| = |\red A_i|$ for all $i$. %

A token SPI $\tgame$ on $G$ requires a Pareto-improving isomorphism from $\red G$ to $\red \tgame$ and a realization function $\remap: \u(T) \rightarrow \u(\F(A))$.
The following technical lemma shows that, rather than needing to think about these two functions and their composition, 
we can consider a single function $\remapu: \u(\red A) \rightarrow \u(\F(A))$, which we %
call a utility remapping function. 
We'll %
call a utility remapping function \textit{valid} if it is entrywise positive affine and strictly Pareto improving on $\u(\red A)$.
That is, 
    \begin{enumerate*}[(1)]
        \item For all outcomes $a \in \red A$, $\remapu(\u(a)) \succeq \u(a)$, 
        \item For some outcome $a \in \red A$, $\remapu(\u(a)) \succ \u(a)$, and
        \item For all players $i$, there exist $m_i, b_i \in \R$ with $m_i > 0$ such that $\remapu_i(v) = m_i v_i + b_i$ for all $v \in \u(\red A)$.
    \end{enumerate*} 
The lemma shows a correspondence between %
isomorphism token SPIs realized in $\F(A)$ on $\game$ and %
valid utility remapping functions $\remapu: \u(\red A) \rightarrow \u(\F(A))$. 
Roughly, for any isomorphism token SPI, there's a valid utility remapping function $\remapu: \u(\red A) \rightarrow \u(\F(A))$ which characterizes the SPI's effect on payoffs, and for any valid $\remapu$, there's an isomorphism token SPI with the effect on payoffs described by $\remapu$.

\begin{lemmarep}\label{lemma:nondegenerate-remap-SPI-iff-remapping-linear}
    Let $G$ be a game and $\tgame$ be an isomorphism token SPI on $G$ under \Cref{assumption:elimination-of-dominated-strategies,assumption:isomorphism-assumption} that can be realized in $\F(A)$.
    Then there exists a valid utility remapping function $\remapu: \u(\red A) \rightarrow \u(\F(A))$ such that, for all $a \in \red A$ and any isomorphism $\iso$ from $G$ to $\tgame$, $\u(\iso(a)) = \remapu(\u(a))$.
    Conversely, let $\remapu: \u(\red A) \rightarrow \u(\F(A))$ be a valid utility remapping function on the game $G$. 
    Then there exists an isomorphism token SPI $\tgame$ under \Cref{assumption:elimination-of-dominated-strategies,assumption:isomorphism-assumption} that can be realized in $\F(A)$ and for which, for all $a \in \red A$ and all isomorphisms $\iso$ from $G$ to $\tgame$, $\remapu(\u(a))= \u(\iso(a))$.
    In particular, there exists an isomorphism token SPI realizable in $\F(A)$ if and only if there exists a valid utility remapping function into $\u(\F(A))$.
\end{lemmarep}

\begin{proof}
    Note that for any pair of isomorphic games $G$ and $G'$, all isomorphisms between $G$ and $G'$ must induce the same mapping between payoff vectors, i.e. have the same parameters $m$ and $b$. 
    (They all must map the best/worst outcomes for any Player $i$ in $G$ to the best/worst outcomes for Player $i$ in $G'$. Since an affine function is uniquely specified by two points, they must all act on Player $i$'s utilities in the same way.)

    We'll first prove the first claim. Let $\tgame = (T, \u)$ be an isomorphism token SPI on $G$ that can be realized in $\F(A)$.
    We need to show that there exists a valid $\remapu: \u(\red A) \rightarrow \u(\F(A))$ such that for all $a \in \red A$ and all isomorphisms $\iso$ from $G$ to $\tgame$, $\u(\iso(a)) = \remapu(\u(a))$.
    Since $\tgame$ is realizable in $\F(A)$, there exists a realization function $\remap: T \rightarrow \F(A)$.
    By the definition of isomorphism SPI (\Cref{lemma:SPI-iff-isomorphism-on-reduced-game-or-degenerate}), there exists a strictly Pareto-improving isomorphism $\iso$ from $\red A$ to $T$. 
    Define $\remapu$ by $\remapu(\u(a)) = \u(\iso(a))$ for all $a \in \red A$.
    We know that $\remapu$ is into $\u(\F(A))$ because $\tgame$ is realizable in $\F(A)$. 
    Since all isomorphisms from $G$ to $\tgame$ induce the same effect on payoff vectors, $\remapu(\u(a)) = \u(\iso'(a))$ for all isomorphisms $\iso'$. 
    We need to show $\remapu$ is valid. 
    By the definition of game isomorphism, there exist some $m, b \in \R^n$ with all $m_i > 0$ such that $u_i(\iso(a)) = m_i u_i(a) + b_i$ for all $a \in A$ and all players $i$.
    Since $\remapu(\u(a)) = \u(\iso(a))$ and $u_i(\iso(a)) = m_i u_i(a) + b_i$, this immediately implies that $\remapu$ is a well-defined function (i.e. not a multifunction) and is entrywise positive affine. 
    Finally, $\remapu$ is strictly Pareto improving on $\u(\red A)$ because $\iso$ is strictly Pareto improving on $\red A$.
    
    We'll now prove the second claim.
    Let $\remapu$ be a valid utility remapping function into $\u(\F(A))$.
    We must show there exists an isomorphism token SPI $\tgame$ that can be realized in $\F(A)$ and for which $\remapu(\u(a))= \u(\iso'(a))$ for all $a \in \red A$ and all isomorphisms $\iso'$ from $G$ to $\tgame$.
    
    Let $\tgame$ be a token game with $|T_i| = |\red A_i|$ for all $i \in [n]$. Fix a bijection $\iso_i$ from $\red A_i$ to $T_i$ for all $i$ and let $\iso = (\iso_i)_{i \in [n]}$.
    Define $\u(T)$ by $\u(t) = \remapu(\u(\iso^{-1}(t)))$.
    Then $\tgame$ is realizable in $\F(A)$ because $\remapu$ is into $\F(A)$.
    Because $\remapu$ is entrywise positive affine, there exists $m_i > 0$ and $b_i$ for all players $i$ such that $\remapu_i(v) = m_i v_i + b_i$.
    Hence, $u_i(\iso(a)) = \remapu_i(\u(a)) = m_i u_i(a) + b_i$ for all $i$, and $\iso$ is a game isomorphism. 
    We immediately have $\remapu(\u(a))= \u(\iso(a))$ from our definition of $\u(T)$, and equality also holds for all other isomorphisms $\iso'$ from $G$ to $\tgame$ because they all induce the same mapping on payoffs. 

    It remains to show that $\tgame$ is an SPI on $G$. 
    Let $\oc$ be the union of all isomorphisms from $G$ to $\tgame$. 
    We have $G \sim \red G$ by \Cref{assumption:elimination-of-dominated-strategies}, which is trivially (weakly) Pareto improving, and we have $\red G \sim_{\oc} \tgame$ by \Cref{assumption:isomorphism-assumption}. 
    The latter outcome correspondence is also weakly Pareto improving because
    for any $a \in \red A$ and any $t \in \oc(a)$, $t = \iso'(a)$ for some isomorphism $\iso'$. 
    But all isomorphisms have the same effect on payoffs as $\iso$, so $\u(t)= \u(\iso(a)) = \remapu(\u(a)) \pge \u(a)$ because $\remapu$ is Pareto improving on $\red A$.
    Finally, we show that $\oc$ is strictly Pareto improving.
    Because $\remapu$ is strictly Pareto improving on $\u(\red A)$, there exists some $a \in \red A$ such that $\remapu(\u(a)) \pge \u(a)$. 
    This outcome $a$ is possible under the assumptions and for all $t \in \oc(a)$, $\u(t) = \remapu(\u(a)) \pge \u(a)$, as desired. 
    (Formally, we can construct an assignment $\policy$ satisfying \Cref{assumption:elimination-of-dominated-strategies,assumption:isomorphism-assumption} where $\Pi(G) = a$, $\Pi(\tgame) = \iso(a)$, and $\Pi(G')$ for all other games $G'$ is assigned consistently with these, as in the proof of \Cref{lemma:SPI-iff-isomorphism-on-reduced-game-or-degenerate}.)

    The \enquote{in particular} if and only if claim follows immediately from the correspondence proven above. 
\end{proof}

We will now apply \Cref{lemma:nondegenerate-remap-SPI-iff-remapping-linear} to pure and correlated token SPIs, beginning with the latter.
We show that correlated token SPIs can be found efficiently. 
In addition, we characterize the existence of correlated token SPIs in two-player games. 

\begin{theoremrep}[Characterization of isomorphism correlated token SPIs]\label{thm:complexity-decide-correlated-remapping-SPI}
    It can be decided in polynomial time whether a game $\game$ admits an isomorphism correlated token SPI. %
    Furthermore, if $\game$ has exactly two players, we have the following characterization of when isomorphism correlated token SPIs exist. 
    Let $V = \u(\red A)$, $v_i^{min}$ and $v_i^{max}$ be the minimum and maximum values of $v_i$ in $V$, and $V^* \subseteq V$ be the set of points in $V$ which cannot be strictly Pareto improved in $\u(\C(A))$.
    Assume $|V|\geq 2$, as otherwise isomorphism token SPIs are equivalent to simple SPIs and there's an SPI iff the unique point in $V$ is not Pareto optimal in $\C(A)$.
    \begin{enumerate}
        \item If $|V^*| =0$, $\game$ admits the desired SPI. %
        \item If $|V^*| =1$, call that point $v^*$. Then
        \begin{enumerate}
            \item If $v^*_i \in \left\{ v_i^{min}, v_i^{max} \right\}$ for both $i$, $\game$ admits the desired SPI.
            \item If only one player $i$ has $v^*_i \in \left\{ v_i^{min}, v_i^{max} \right\}$, $\game$ admits the desired SPI if and only if, for all $v$ in $V$ with $v_i \neq v_i^*$, $(v_i + \eps_v, v_{-i}) \in u(\C(A))$ for some $\eps_v >0$.
            \item If for both $i$, $v^*_i \not\in \left\{ v_i^{min}, v_i^{max} \right\}$, $\game$ does not admit the desired SPI.
        \end{enumerate}
        \item If $|V^*| \geq 2$, $\game$ does not admit the desired SPI.
    \end{enumerate}
\end{theoremrep}

\begin{proofsketch}
    To prove the first part of the theorem, we reduce the decision problem to checking the optimal value of a linear program. %
    For the characterization in the 2-player case, we use \Cref{lemma:nondegenerate-remap-SPI-iff-remapping-linear}, demonstrating a valid $\remapu$ for the positive results and showing none exists for the negative results. 

    A key observation for the negative results is that, if a value $v_i$ cannot be Pareto improved within $\u(\C(A))$, strictly for Player $i$, then it must be a fixed point of $\remapu_i$. 
    In case 3, where $|V^*|\geq 2$, each of these Pareto optimal values must be a fixed point of $\remapu_i$ in every dimension $i$. 
    Hence, each $\remapu_i$ has at least two fixed points and must be the identity by linearity, so there can be no SPI.
    In case 2(c), where $|V^*| = 1$, the Pareto optimal value $v^*$ is a fixed point of each $\remapu_i$ at an intermediate value $v_i \not \in \{v_i^{min}, v_i^{max}\}$. 
    This implies that each $\remapu_i$ must be the identity; otherwise it would fail to be improving on either the values less than $v_i$ or those greater than $v_i$.

    For case 1, when $|V^*|=0$, we show that the utility remapping function $\remapu(v) = (1-\eps) v + \eps r^{max}$, where $r^{max} = \left(\max_{r \in R}r_1, \max_{r \in R}r_2 \right)$ is Pareto improving and feasible for some $\eps > 0$. 
    Geometrically, this corresponds to mapping each value $v$ some $\eps$ of the way towards $r^{max}$ on the line segment between $v$ and $r^{max}$. 

    For case 2(a), where $|V^*|=1$ and this point $v^*$ satisfies $v^*_i \in \left\{ v_i^{min}, v_i^{max} \right\}$ for both $i$, we have two subcases. 
    If $v^{*}$ is maximal in both dimensions, $\remapu(v) = (1-\eps) v + \eps v^{max}$ is feasible by convexity. %
    If $v^{*}$ is maximal in dimension $i$ and minimal in dimension $j$, we show that the $\remapu$ defined by $\remapu_i(v) = (1-\eps)v_i + \eps v_i^*$ and $\remap_j(v) = v$ is Pareto improving and feasible for some $\eps > 0$.
    (Note that $v^*$ can't be minimal in both dimensions since then we would have $|V|=1$.)
    
    For case 2(b), where $v^*$ satisfies $v^*_i \in \left\{ v_i^{min}, v_i^{max} \right\}$ for only one $i$, $v^*$ must be maximal in dimension $i$ and hence $v_i^{*}$ is a fixed point of $\remapu_i$.
    Since $v^*_j$ is an intermediate fixed point of $\remapu_j$, $\remapu_j$ must be the identity. 
    Thus, the only potential Pareto-improving $\remapu$ has the form $\remapu_i(v) = (1-\eps)v_i + \eps v_i^*$, as in case 2(a). 
    This is feasible if and only if all points $v$ with $v_i \neq v_i^{*}$ can be improved in the $i$ dimension, as desired. 
\end{proofsketch}

\begin{proof}
\textbf{Characterization:} 
By \Cref{lemma:nondegenerate-remap-SPI-iff-remapping-linear}, the desired SPI exists if and only if there's a positive affine utility remapping function $\remapu: \u(\red A) \rightarrow \u(\C(A))$ which is strictly Pareto improving on $\u(\red A)$.

A substantial part of the proof involves reasoning about values $v_i$ which must be fixed points of $\remapu_i$, i.e. where $\remapu_i(v_i) = v_i$. 
Let $R$ denote the convex region $\u(\C(A))$. %
Observe that, if for any $v \in V$, there does not exist any $v' \in R$ with $v' \pgeq v$ and $v'_i > v_i$, then $v_i$ must be a fixed point of $\remapu_i$: Player $i$'s utility cannot be increased without decreasing another Player's utility and thus rendering $\remapu$ not Pareto improving. 
In particular, any point $v \in V^*$ must be a fixed point of $\remapu_i$ for all $i$. 

Now, observe that if any $\remapu_i$ has two fixed points, the only possible positive affine $\remapu_i$ is the identity. 
In addition, if any $\remapu_i$ has a fixed point at an intermediate value $v_i^* \not\in \{ v_i^{max}, v_i^{min}\}$, then $\remapu_i$ must also be the identity. 
Any $\remapu_i$ with $m_i >1$ would fail to be improving for Player $i$ for $v_i < v_i^*$, and any $\remapu_i$ with $m_i < 1$ would fail to be improving for Player $i$ for $v_i > v_i^*$. 

We are now ready to prove the characterization. 
Let's define $r_i^{max} = \max_{r \in R} r_i$. 
Let's also define $v^{max} = \left(v_1^{max}, v_2^{max}\right)$ and $r^{max}$ analogously.
\cameraReady{In this proof, we'll use $\mathds{1}_i$ to denote the standard basis vector with $1$ in dimension $i$ and $0$ in dimension $-i$, so $(v_i + \eps_v, v_{-i}) = v + \eps_v \mathds{1}_i$.}%

\textit{Case 1:} Assume $|V^*| = 0$. 
We claim that $\remapu(v) = (1-\eps) v + \eps r^{max}$ meets the required conditions for some $\eps >0$.
First, observe that $\remapu$ is Pareto improving on $V$ and strictly Pareto improving on $V-\{v^{max}\}$, which is nonempty since $|V|\geq 2$.  
Hence, we just need to show that there exists $\eps >0$ such that $\remapu$ is feasible, that is $\remapu(v) \in R$ for all $v \in V$.   
For each $v \in V$, consider the line segment from $v$ to $r^{max}$, parameterized by $(1-\eps) v + \eps r^{max}$ for $\eps \in [0,1]$. 
If $r^{max}$ is in $R$ then by the convexity of $R$, each line segment is entirely in $R$ and hence $\remapu$ is feasible for any $\eps \leq 1$.

Otherwise, $r^{max}$ not in $R$. 
We seek to show that each $v$ can be mapped some fraction of the way $\eps_v$ along the segment $vr^{max}$, i.e. to the point $(1-\eps_{v}) v + \eps_{v} r^{max}$. 
If so, we can say $\eps = \min_v \eps_v$, and have that $\remapu(v)$ is in $R$ for all $v$ and hence is feasible, as desired. 

\begin{figure}[h]
\begin{center}
\begin{tikzpicture}[scale=.9]

    \draw[thick, gray!50, fill=gray!10] 
        (0,0) -- (2, -1) -- (4, -1) -- (6, 0) -- (5, 3) -- 
        (4, 4) -- (2, 5) -- (0, 5) -- (-1, 4) -- (-1, 2) -- cycle;

    \draw[thick, red] (6,0) -- (5, 3) -- (4, 4) -- (2, 5);

    \coordinate (rmax) at (6,5);
    \filldraw[blue] (rmax) circle (2pt) node[above right] {$r^{max}$};

    \draw[->,thick] (-1.5,-1.5) -- (7,-1.5) node[below left] {Player $i$'s utility};
    \draw[->,thick] (-1.5,-1.5) -- (-1.5,6) node[right] {Player $j$'s utility};

    \coordinate (p1) at (2,-0.5);
    \coordinate (p2) at (4,3.5);
    \coordinate (p3) at (1.5,3.5);
    \coordinate (p4) at (4,1);
    \coordinate (p5) at (-0.5,4);
    \coordinate (p6) at (1,5);
    \coordinate (p7) at (-0.5,1);
    \coordinate (p8) at (5,-0.5);

    \foreach \pt in {p1, p2, p3, p4, p5, p6, p7, p8}{
        \draw[dashed, blue,thick] (\pt) -- (rmax);
    }

    \foreach \pt in {p1, p2, p3, p4, p5, p6, p7, p8}{
        \filldraw (\pt) circle (2pt);
    }

    \coordinate (v) at (-0.5,1);
    \filldraw (v) circle (2pt) node[left] {$v$};

    \coordinate (p) at (-0.5,1.5);
    \filldraw[green!40!black] (p) circle (2pt) node[above]{$p$};

    \coordinate (m) at (6, 0);
    \filldraw[green!40!black] (m) circle (2pt) node[right] {$m$};

    \coordinate (w) at (1/11, 15/11);
    \filldraw[green!40!black] (w) circle (2pt) node[below] {$w$};

    \draw[very thick, densely dotted, green!40!black] (m) -- (p);
 
\end{tikzpicture}
\end{center}
    \caption{Case 1, subcase where $r^{max} \not \in R$}
    \label{fig:Correlated-Toke-SPI-Case_1}
\end{figure}
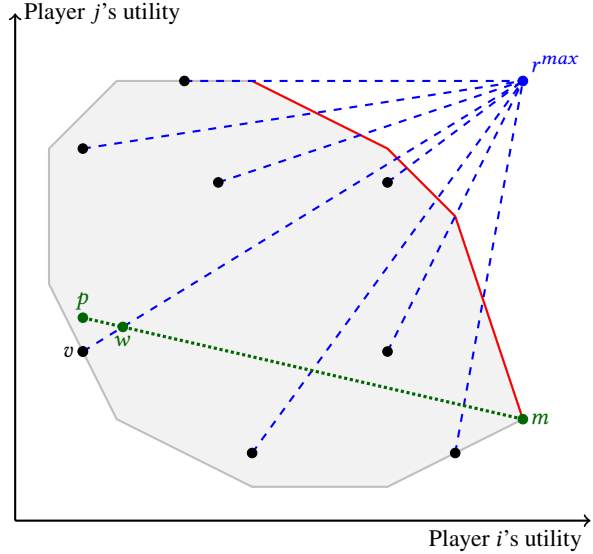

This case is illustrated in \Cref{fig:Correlated-Toke-SPI-Case_1}.
The light gray region is $R$.
The Pareto frontier is red while the other boundaries of $R$ are darker gray. 
The black circles are points that could be in $V$, and the segments $vr^{max}$ are dashed blue.
It's geometrically intuitive that each $v$ can be mapped some nonzero distance along $vr^{max}$, and we prove that formally below.
The proof includes the construction of points $m$ and $p$, and we include an example of these for the labeled $v$ in the figure.

    Consider an arbitrary $v \in V$. 
    We know that $v$ is not on the Pareto frontier of $R$ since $|V^*|=0$.
    We now consider two cases based on whether $v_i = r_i^{max}$ for some $i$. 
    (Of course, $v_i$ cannot equal $r_i^{max}$ for both $i$ since then $v = r^{max} \in R$.)
    
    First, consider the case where the equality holds for some $i$, and say w.l.o.g. $v_1 = r_1^{max}$.
    We also know that $v_2 < r_2^{max}$, as otherwise we'd have $v=r^{max}$. 
    Since $v$ is not Pareto optimal in $R$, there exists a point $p \in R$ which strictly Pareto dominates $v$. 
    But then it must be the case that $p_2 > v_2$ and $p_1 = v_1 = r^{max}$. 
    In other words, $p$ is on the segment $vr^{max}$ (and $p\neq v$), so we can map $v$ some distance nonzero along $vr^{max}$, as desired. 

    Now, consider the case where $v_i \neq r_i^{max}$ for either $i$. 
    For simplicity, let's renormalize our points such that $v = (0,0)$, $r^{max} = (1,1)$, and $vr^{max}$ is a segment of the identity line.
    Again, consider some point $p \in R$ which strictly Pareto dominates $v$. 
    If $p$ is on the segment $vr^{max}$, then we can map $v$ some distance along $vr^{max}$ and we're immediately done, so we can assume $p_1 \neq p_2$.
    Let's say w.l.o.g. that $p_2>p_1$. 
    We then know that $p_1 \geq v_1 = 0$ and $p_2 > v_2 = 0$.

    Fix a point $m \in R$ such that $m_1 = r_1^{max} =1$. 
    Since $m_1 = 1 > m_2$ and $p_2 > p_1$, i.e. $m$ and $p$ lie on the opposite side of the identity line, the segment $mp$ must intersect the identity line at some point. 
    This intersection point must be on the segment $vr^{max}$ because $m_1 = 1$ while $0 \leq p_1 < 1$, and the intersection point is not $p$ itself because $p_2 > p_1$. 
    Note that the entire segment $pm$ is weakly greater than $v$ in the first dimension and strictly greater everywhere except possibly the $p$ endpoint. 
    Therefore, the intersection of $mp$ and $vr^{max}$ is a point $w$ with $w_1 > v_1 = 0$. 
    Hence, $w$ is a point on $vr^{max}$ which is not equal to $v$. 
    Finally, since $m$ and $p$ are both in $R$, $w$ is feasible by the convexity of $R$.
    Hence, we can map $v$ some nonzero distance along $vr^{max}$, as desired.

\textit{Case 2:} Assume $|V^*|= 1$, and call that point $v^*$.

\textit{Subcase (a):} Suppose $v^*_i \in \left\{ v_i^{min}, v_i^{max} \right\}$ for both $i$. We seek to show that $\game$ admits the desired SPI.
First, suppose $v^*_i = v_i^{max}$ for both $i$. 
Then we can take $\remapu(v) = (1-\eps)v + \eps v^*$ for any $\eps \in (0,1)$. 
This is very similar to the (simpler) subcase of case 1, where $r^{max}$ is in $R$.
Because $v^*$ is maximal in both dimensions, $\remapu$ is Pareto improving on $V$ and strictly Pareto improving on $V-\{v^{max}\}$, which is nonempty since $|V|\geq 2$.  
Since $v^* \in R$, $\remapu$ is feasible by the convexity of $R$.
(Aside: If we let $\eps=1$, $\remap$ corresponds to the simple SPI where all outcomes of the token game have payoff $v^{max}$.)

It can't be the case that $v^*_i = v_i^{min}$ for both $i$, since then $v^*$ would be the unique point in $V$, contradicting our assumption that $|V|\geq 2$. 
Hence, it remains to consider the case that $v^* = \left(v_i^{max}, v_j^{min}\right)$. 
We claim that defining $\remapu$ by $\remapu_i(v) = (1-\eps)v_i + \eps v_i^*$ for some $\eps > 0$ and $\remapu_j(v) = v$ gives SPI for $\eps >0$. 
This is clearly Pareto improving on $V$, and strictly so for $v$ with $v_i < v_i^* = v_i^{max}$.
The set of such points is nonempty because $|V|\geq 2$ and any $v\neq v^*$ with $v_i = v_i^{*}$ would Pareto dominate it since $v_j^* = v_j^{min}$, a contradiction.
Hence, it only remains to show feasibility. 

\begin{figure}[h]
\begin{center}
\begin{tikzpicture}[scale=.9]
    \draw[thick, gray!50, fill=gray!10] 
        (0,0) -- (2, -1) -- (4, -1) -- (6, 0) -- (5, 3) -- 
        (4, 4) -- (2, 5) -- (0, 5) -- (-1, 4) -- (-1, 2) -- cycle;

    \draw[thick, red] (6,0) -- (5, 3) -- (4, 4) -- (2, 5);

    \draw[->,thick] (-1.5,-1.5) -- (7,-1.5) node[below left] {Player $i$'s utility};
    \draw[->,thick] (-1.5,-1.5) -- (-1.5,6) node[right] {Player $j$'s utility};

    \coordinate (p2) at (4.5,3);
    \coordinate (p3) at (1.5,3.5);
    \coordinate (p4) at (4,1.5);
    \coordinate (p5) at (-0.5,4);
    \coordinate (p6) at (1,5);
    \coordinate (p7) at (-0.5,1);

    \coordinate (vstar) at (5 + 2/3, 1); 

    \draw[thick, dashed] (5 + 2/3, -1.5) -- (5 + 2/3, 6);

    \foreach \pt in {p2, p3, p4, p5, p6, p7}{
        \draw[dashed, blue, thick] (\pt) -- (\pt -| vstar);
        \filldraw (\pt) circle (2pt);
    }

    \coordinate (p) at (4,2);
    \filldraw[green!40!black] (p) circle (2pt) node[above]{$p$};
    \draw[very thick, densely dotted, green!40!black] (p) -- (vstar);

    \filldraw (p4) circle (2pt) node[left]{$v$};
    \filldraw[green!40!black] (4 + 5/6, 1.5) circle (2pt) node[above right]{$w$};

    \filldraw (vstar) circle (2pt) node[right]{$v^* = (v_i^{max}, v_j^{min}$)};

\end{tikzpicture}
\end{center}
    \caption{Case 2(a), subcase where $v^* = (v_i^{max}, v_j^{min}$)}
    \label{fig:Correlated-Toke-SPI-Case_2a}
\end{figure}
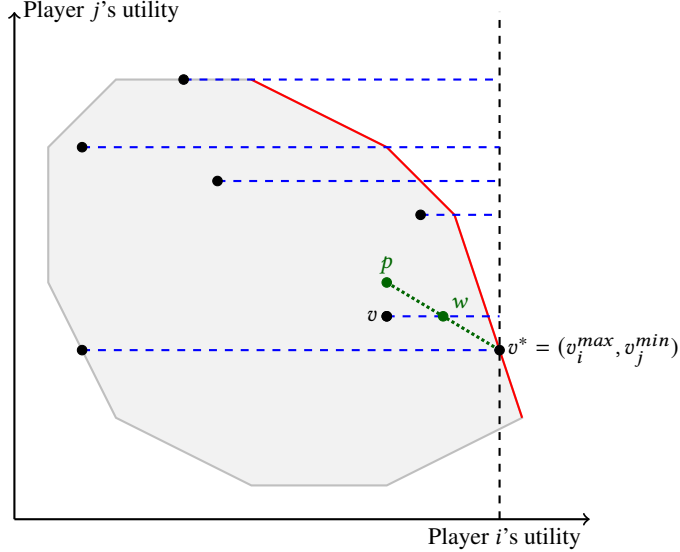

This case is illustrated in \Cref{fig:Correlated-Toke-SPI-Case_2a}.
As before, the light gray region is $R$, the red segments are the Pareto frontier, and the darker gray segments are the other boundaries of $R$. 
The black circles are points that could be in $V$; Note that no point in $V$ can be below $v^*$ in the picture. 
Our proposed $\remapu$ maps each point in $V$ some $\eps$ fraction of the way along the dashed blue line towards the vertical line through $v^*$.
It's geometrically intuitive that this $\remapu$ is feasible, but we prove it formally below. 
To show feasibility for each $v$, our proof constructs another pair of points $p$ and $w$. 
The figure shows an example of this for the labeled point $v$, with the $p$ and $w$ in green. 

For feasibility, we claim that it suffices to show that, for all points $v$ with $v_i \neq v_i^*$, we have $v+ \eps_v \mathds{1}_i \in R$ for some $\eps_v > 0$. 
Given this, taking $\eps = \min_v \frac{\eps_v }{v_i^* - v_i}$ makes the $\remap$ above feasible: 
For each $v$, $\remap_i(v) = (1-\eps)v_i + \eps v_i^* = v_i + \eps(v_i^*-v_i) \leq v_i + \eps_v$, so $\remap(v)$ is on the line between $v$ and $v+ \eps_v \mathds{1}_i$ and feasible by the convexity of $R$. 

We now show that, for all points $v$ with $v_i \neq v_i^*$, we have $v+ \eps_v \mathds{1}_i \in R$ for some $\eps_v > 0$. 
Consider an arbitrary $v$ with $v_i < v_i^{max}$. 
Because $v \not\in V^*$, there must exist some point $p \in R$ that strictly Pareto dominates $v$. 
If $p_j = v_j$, then $p_i > v_i$ and we're immediately done.
Otherwise, we have $p_i \geq v_i$ and $p_j > v_j$.
We also know that $v_i^* > v_i$ and $v_j^* \leq v_j$ since $v_j^* = v_j^{min}$.
Consider the line segment $\overline{v^* p}$, which is in $R$ by the convexity of $R$.
Because $v_j^* \leq v_j$ and $p_j > v_j$, the segment contains a point $w$ with $w_j = v_j$, which is not the $p$ endpoint. 
And since $v^*_i > v_i$ and $p_i \geq v_i$, the segment is strictly greater than $v$ in the $i$ dimension everywhere but possibly the $p$ endpoint, and so in particular $w_i > v_i$.
Therefore, $w = v+ \eps_v \mathds{1}_i$ for some $\eps_v > 0$ and is in $R$, as desired. 

\textit{Subcase (b):} Suppose $v^*_i \in \left\{ v_i^{min}, v_i^{max} \right\}$ for exactly one player $i$. We seek to show that $\game$ admits the desired SPI if and only if, for all $v$ in $V$ with $v_i \neq v_i^*$, $v + \eps \mathds{1}_i \in R$ for some $\eps >0$.
Note that if $v^*_i = v_i^{min}$, then $v^*$ would also achieve $v_j^{max}$ because any point that exceeds $v^*$ in the $j$ dimension would Pareto dominate it. Hence, we can assume $v^*_i = v_i^{max}$.

\begin{figure}[h]
\begin{center}
\begin{tikzpicture}[scale=.9]
    \draw[thick, gray!50, fill=gray!10] 
        (0,0) -- (2, -1) -- (4, -1) -- (6, 0) -- (5, 3) -- 
        (4, 4) -- (2, 5) -- (0, 5) -- (-1, 4) -- (-1, 2) -- cycle;

    \draw[thick, red] (6,0) -- (5, 3) -- (4, 4) -- (2, 5);

    \draw[->,thick] (-1.5,-1.5) -- (7,-1.5) node[below left] {Player $i$'s utility};
    \draw[->,thick] (-1.5,-1.5) -- (-1.5,6) node[right] {Player $j$'s utility};

    \coordinate (p0) at (5+1/3, -1/3);
    \coordinate (p1) at (2,0);
    \coordinate (p2) at (4.5,3);
    \coordinate (p3) at (1.5,3.5);
    \coordinate (p4) at (4,1.5);
    \coordinate (p5) at (-0.5,4);
    \coordinate (p6) at (1,5);
    \coordinate (p7) at (-0.5,1);

    \coordinate (vstar) at (5 + 1/3, 2); 
    \filldraw (vstar) circle (2pt) node[right]{$v^*$};

    \draw[dashed,thick] (5 +1/3, -1.5) -- (5 + 1/3, 6);

    \foreach \pt in {p0, p1, p2, p3, p4, p5, p6, p7}{
        \draw[dashed, blue,thick] (\pt) -- (\pt -| vstar);
        \filldraw (\pt) circle (2pt);
    }

    \coordinate (p9) at (4, -1);
    \filldraw[green!40!black] (p9) circle (2pt) node[below]{points described in the iff condition};
    \draw[very thick, densely dotted, green!40!black] (4,-1) -- (5+1/3, -1/3); 

    \filldraw (p0) circle (2pt);

\end{tikzpicture}
\end{center}
    \caption{Case 2(b), where $v^*_i = v_i^{max}$ and $v_j^* \not \in \{v_j^{min}, v_j^{max}\}$}
    \label{fig:Correlated-Toke-SPI-Case_2b}
\end{figure}
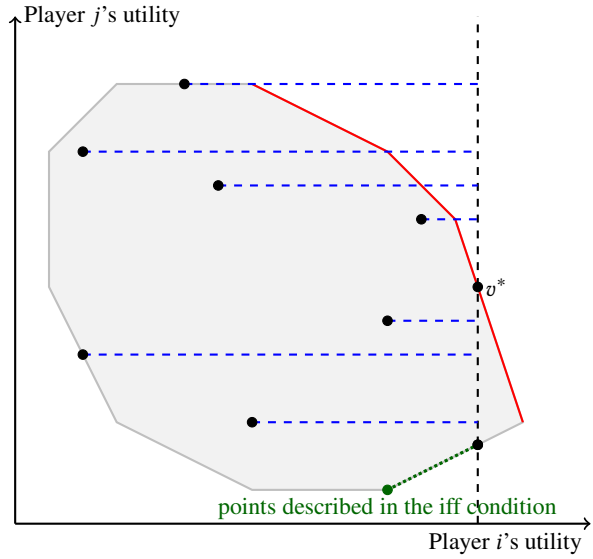

This case is illustrated in \Cref{fig:Correlated-Toke-SPI-Case_2b}.
As before, the light gray region is $R$, the red segments are the Pareto frontier, and the darker gray segments are the other boundaries of $R$. 
The black disks are points that could be in $V$.
Note that, unlike in case 2(a), there are now points in $V$ with smaller $j$ coordinate than $v^*$. 
We claim that the only possible valid $\remapu$ maps each point in $V$ with along the dashed blue line towards the vertical dashed line through $v^*$, and that this $\remapu$ is feasible if and only if there are no points on the green boundary segment (which is not inclusive of its black, top/right endpoint).
We now prove this formally. 

The if direction is very similar to the part of subcase (a) where $v^* = \left(v_i^{max}, v_j^{min}\right)$. 
We again claim that defining $\remapu$ by $\remapu_i(v) = (1-\eps)v_i + \eps v_i^*$ for some $\eps > 0$ and $\remapu_j(v) = v$ gives an SPI for $\eps >0$. 
This $\remapu$ is clearly Pareto improving, and strictly so for any $v'$ with $v'_i < v_i^*$.
And such $v'$ exist: Any $v$ with $v_j=v_j^{max}$ is such a $v'$, because $v_j^{max} > v_j^*$ and therefore, because $v$ cannot Pareto dominate $v^*$, $v_i < v_i^*$.
Hence, $\remapu$ is strictly Pareto improving.
The ``if" assumption, that for all $v$ in $V$ with $v_i \neq v_i^*$, $v + \eps \mathds{1}_i \in R$ for some $\eps_v>0$, is exactly the assumption that we showed was sufficient for $\remapu$ to be feasible in the proof of subcase (a). 
Hence, $\remapu$ is feasible and strictly Pareto improving, as desired. 

To show the only if direction, we prove the statement's contrapositive.
That is, we assume that there exists a point $v$ with $v_i \neq v_i^{max}$ where $v + \eps \mathds{1}_i \not\in R$ for any $\eps >0$ and we show that the instance does not admit the desired SPI. 
Observe that $\remapu_{-i}$ must be the identity since $v_{-i}^*$ must be an intermediate fixed point of $\remapu_{-i}$. 
In addition, $v_i^*$ must be a fixed point of $\remapu_i$.
By the contrapositive assumption, there exists a point $v$ with $v_i \neq v_i^{max}$ where $v + \eps \mathds{1}_i \not\in R$ for any $\eps >0$.
Since $\remapu_{-i}$ is the identity and this $v$ cannot be increased in the $i$ dimension while being held constant in the $-i$ dimension, $v_i$ must be a second fixed point of $\remapu_i$. 
Hence, $\remapu$ is the identity in both dimensions and no isomorphism token SPI exists. 

\textit{Subcase (c):} Suppose that for both $i$, $v^*_i \not\in \left\{ v_i^{min}, v_i^{max} \right\}$. We seek to show that $\game$ does not admit the desired SPI.
This follows immediately from the observations at the beginning of the proof: $v^*$ must be a fixed point of each $\remapu_i$, and since $v^*_i \not\in \left\{ v_i^{min}, v_i^{max} \right\}$ this implies that each $\remapu_i$ is the identity. 

\textit{Case 3: } Suppose $|V^*| \geq 2$. We seek to show that $\game$ does not admit the desired SPI.
Again, this follows immediately from the observations at the beginning of the proof. 
Each point in $V^*$ must be a fixed point of each $\remapu_i$, so each $\remapu_i$ has at least two fixed points and must be the identity. 

This concludes the proof of the characterization of correlated isomorphism token SPIs in $2$-player games. 

\textbf{Complexity: }%
Now, we prove the complexity claim of the theorem, i.e., that deciding whether such SPIs exist can be done in polynomial time. %
    We show this by reducing it to checking whether the optimal solution to the following polynomially sized linear program is strictly positive. 
    That is, we claim that there exists a correlated token SPI if and only if the LP admits a solution with objective value strictly greater than $0$. 
    By \Cref{lemma:nondegenerate-remap-SPI-iff-remapping-linear}, the desired SPI exists if and only if there exists a utility remapping function $\remapu$ which is a strictly Pareto improving entrywise positive affine function from $\u(A)$ to $\u(\C(A))$.

    Our linear program has variables $m_i, b_i \in  \R$ for all $i\in [n]$ representing the parameters of the positive affine transformations $\remapu_i(v_i) = m_i v_i + b_i$ as well as variables $p_a^{v^j}$ for all $v^j\in \u(\red A)$ and $a \in A$ representing each distribution in $\C(A)$ used to realize the payoffs $\remapu(v^j)$. 
    The payoff vectors $V = \{ v^1 ... v^k \} = \u(\red A)$ and utility function $\u: A \rightarrow \R^n$ are inputs to the program, i.e. constants. 

    The program searches over all possible parameters of this $\remapu$, but also allows $m_i =0$, while a positive affine function must have each $m_i > 0$.
    Similarly, it only constraints $\remapu$ to be weakly Pareto improving rather than strictly Pareto improving.
    This is necessary to make the program convex, and we'll show that we can nevertheless use the program to decide whether SPIs exist. 

\begin{align*}
    &\text{Maximize}\quad \sum_{v^j \in V} \sum_{i \in [n]} (m_i v_i^j + b_i - v_i^j) &\\
    &\text{Subject to:} &\\
    & p_a^{v^j} \geq 0 &\text{for all } a \in A, j \in [k]  \\
    &\sum_{a \in A} p_a^{v^j} = 1 &\text{for all } j \in [k] \\
    & m_i \geq 0 &\text{for all } i \in [n]\\
    & m_i v_i^j + b_i \geq v_i^j &\text{for all } i \in [n], j \in [k] \\
    & \sum_{a \in A} p_a^{v^j} u_i(a) = m_i v_i^j + b_i &\text{for all } i \in [n], j \in [k] %
\end{align*}

    It's easy to verify that the program is indeed linear. 
    It has $O(|A|^2)$ variables and $O(|A|^2)$ constraints. 
    Hence, it is polynomially sized, and so its feasibility can be decided and its objective optimized in polynomial time. 

    The first two constraints together ensure that the variables $\{p_a^{v^j}\}$ for each $v^j$ represents a valid distribution in distribution over $A$, i.e. correlated strategy profile. 
    The third constraint ensures that each parameter $m_i$ of the affine transformation $\remapu_i(v_i) = m_i v_i + b_i$ on each player $i$'s utilities is nonnegative, as previously discussed. 
    The fourth constraint ensures that $\remapu$ is weakly Pareto improving on each outcome in $\red A$. 
    The last constraint ensures that the remapping $\remapu$ defined by $\remapu_i(v_i) = m_i v_i + b_i$ is indeed a feasible mapping into $\u(\C(A))$ because each desired payoff profile $\remapu(v^j) = (m_i v_i^j + b_i)_{i \in [n]}$ is achieved by the strategy profile $\{p_a^{v^j}\}$.

    Therefore the linear program searches over all possible valid $\remapu$, as well as $\remapu$ that are invalid.
    These could be invalid because either they are not positive affine (i.e. some $m_i=0$) or because they are not strictly Pareto improving.
    Nevertheless, we claim that there exists a correlated token SPI if and only if the LP admits a solution with objective value strictly greater than $0$. 
    Note that the LP objective is player $i$'s payoff gain $\remapu_i(v_i^j) - v_i^j$, summed over players $i$ and payoffs $v^j$ in the reduced game. 
    Hence, the objective is strictly positive if and only if the (potentially non positive affine) $\remapu$ corresponding to the LP variables $\{m_i\}, \{b_i\}$ is strictly Pareto improving. 

    The \enquote{only if} direction is clear: since the LP searches over a superset of the set of valid $\remapu$, if the optimal LP solution does not correspond to a strictly Pareto improving $\remapu$, no valid $\remapu$ can be strictly Pareto improving. 

    To see the \enquote{if} direction, suppose the LP admits a strictly positive solution with variables $\{m_i\}, \{b_i\}$ corresponding to some $\remapu$ which may not be positive affine since some $m_i$ could be $0$.
    (Note that $\remapu$ is strictly Pareto improving because the objective is strictly positive.) 
    We claim the $\remapu': \u(\red A) \rightarrow \u(\C(A))$ defined by $\remapu'(v) = (1-\eps)\remapu(v) + \eps v$ is positive affine and strictly Pareto improving for any $\eps$ in the interval $(0,1)$, and hence shows the existence of an SPI. 
    First, since both $\remapu$ and the identity mapping $v \mapsto v$ are into $\u(\C(a))$, $\remapu'$ is also a valid function into $\u(\C(A))$ by the convexity of $\u(\C(A))$.
    Second, the $\remapu'$ is positive affine because $\remapu'_i(v_i) = (1-\eps)(m_i v_i + b_i) + \eps v_i = ((1-\eps)m_i + \eps)v_i + (1-\eps)b_i$, so the coefficient $(1-\eps)m_i + \eps > \eps$ of $v_i$ is strictly positive for each $i$. 
    Finally, $\remapu'$ is strictly Pareto improving because it's a convex combination of the identity and the strictly Pareto improving $\remapu$.

    Therefore, there exists a correlated token SPI if and only if the LP admits a solution with objective value strictly greater than $0$, which can be decided in polynomial time.

\end{proof}

Now, we turn our focus to isomorphism \textit{pure} token SPIs, i.e., those whose payoffs must be realized over \textit{pure} strategy profiles. 
The following theorem gives an algorithm for finding such SPIs. 
It runs in time $|A|^{O(n)}$, scaling polynomially in the number of outcomes but exponentially in the number of players.
Since we generally assume each player has at least two actions, $n \in O(\log(|A|))$ and the overall runtime is $|A|^{O\left(\log |A|\right)}$. %
Such runtimes are sometimes called quasipolynomial time.

\begin{theoremrep}\label{thm:complexity-decide-pure-remapping-SPI-n-player}
    It can be decided in time $|A|^{O(n)} \in  |A|^{O\left(\log |A|\right)}$, i.e. quasipolynomial time, whether a game $\game$ admits a pure isomorphism token SPI.
    For any fixed number of players $n$, this is polynomial time. 
\end{theoremrep}

\begin{proofsketch}
    By \Cref{lemma:nondegenerate-remap-SPI-iff-remapping-linear}, the desired SPI exists if and only if there exists a valid utility remapping function $\hat \remap: \u(\red A) \rightarrow \u(A)$, i.e. one which is a strictly Pareto improving, entrywise positive affine function on $\u(\red A)$.
    We give an algorithm, \Cref{alg:Pure_Token_SPI}, that decides whether such a valid $\remapu$ exists.
    Let $V = \u(\bar A)$ be the set of payoffs in the reduced game.
    First, we efficiently find a set $V' \subseteq V$ of at most $n+1$ payoffs that contains at least two distinct payoffs for every player whose utility is not constant in $V$. 
    
    Any choice of $\remapu': V' \rightarrow \u(A)$ for such a $V'$ determines the parameters $(m_i, b_i)_{i \in [n]}$ of any entrywise positive affine extension of $\remapu$ to $V$:
    If $V'$ contains two distinct values $v'$ and $v''$, then $\remapu'(v')$ and $\remapu'(v'')$ determine the positive affine function in dimension $i$. 
    If $V'$ is constant in dimension $i$, so is $V$ and thus any extension must map all $v_i$ to the same value $\remapu'_i(v')$.
    Hence, we can check whether a given $\remapu'$ has a valid extension to $V$ by checking whether $mv + b \in \u(A)$ for each $v \in V- V'$ and whether it is strictly Pareto improving on $V$. 
    This can be done in $O(|A|^2)$.

    Therefore, we can simply try each of the $O(|A|^{n+1})$ possible $\remapu'$. 
    This is polynomial for any constant number of players and quasipolynomial in general, since all players have at least two actions and therefore $n \in O(\log_2(|A|))$.
    Our algorithm enumerates all valid $\remapu$, so we can also optimize arbitrary (quasi-)polynomial-time computable objectives over SPIs by simply computing the objective value of each valid $\remapu$.
\end{proofsketch}

\begin{proof}

    Let $\bar A$ be the set of outcomes in the reduced game $\red \game$, and let $V = \u(\bar A)$ be the set of payoffs. 
    For each payoff $v \in V$, let $I(v)$ be the subset of $\u(A)$ which weakly Pareto improves on $v$.

    By \Cref{lemma:nondegenerate-remap-SPI-iff-remapping-linear}, there exists a pure isomorphism token SPI if and only if there exists a utility remapping function $\hat \remap: \u(\red A) \rightarrow \u(A)$ which is a strictly Pareto improving positive affine function on $\u(\red A)$.
    Therefore, the desired token SPI exists if and only if there exists some $\remapu: V \rightarrow \{I(v) | v \in V\}$ where (1) $\remapu(v) \in I(v)$ for all $v \in V$, (2) for all players $i$, there exist $m_i, b_i \in \R$ with $m_i > 0$ such that $\remapu_i(v) = m_i v_i + b_i$ for all $v \in V$, and (3) there exists some $v \in V$ such that $\remapu(v) \pge v$.

    We give a quasipolynomial algorithm that decides whether such a $\remapu$ exists, \Cref{alg:Pure_Token_SPI}. 
    Roughly, the algorithm works by finding a small set $V'$ of points such that for each choice of $\remapu': V' \rightarrow \u(A)$, there is at most one possible extension of $\remapu'$ into a valid $\remapu : V \rightarrow \u(A)$, and it can be efficiently checked whether such a $\remap$ exists.
    Overall then, checking each of the quasi-polynomially many choices of this $\remapu'$ lets us decide whether an SPI exists. 

        \begin{algorithm}
\caption{Deciding the existence of Pure Isomorphism Token SPIs}\label{alg:Pure_Token_SPI}
\begin{algorithmic}
\Require $\game = (A, \u)$
\State Let $V\gets \u(\red A)$ be the set of payoffs in the reduced game $\red G$
\For{$v \in V$}
    \State Let $I(v)$ be the subset of $\u(A)$ which Pareto improves on $v$ %
\EndFor
\LComment{Find a set of at most $n+1$ points $V' \subseteq V$ with the property that, for all dimensions $i$, either $V$ has only one value in dimension $i$ or or V' contains at least two distinct values in dimension $i$}
\State Let $V' \gets \{v\}$ for some arbitrary $v \in V$
\For{$i \in [n]$}
    \If{$V$ contains multiple values in dimension $i$ and $V'$ does not}
        \State $V' \gets V' \cup \{w\}$ for some $w \in V$ which different from $V'$ in dimension $i$ 
    \EndIf
\EndFor 
\LComment{Check if each possible choice of $\remapu': V' \rightarrow \u(A)$ can be extended into a valid $\remapu: V \rightarrow \u(A)$}
\For{each function $\remapu': V' \rightarrow \u(A)$}
    \State Let $\textit{valid} \gets \texttt{True}$
    \For{each dimension $i \in [n]$}
        \State Consider the relation $\left\{ \left(v_i', \remapu'_i(v') \right) : v' \in V' \right\}$
        \If{the relation is either not a (single-valued) function or not a positive affine function from $\R$ to $\R$}
            \State Set $\textit{valid} \gets \texttt{False}$
        \Else
            \State Let $\ell_i: \R \rightarrow \R$ be a positive affine function with $\ell_i (v_i')= \remapu'_i(v')$ for all $v' \in V'$. \Comment{$\ell_i$ is unique unless $V$ contains only one value in dimension $i$, in which case only $\ell_i(v_i)$ matters}
        \EndIf 
    \EndFor
    \If{not \textit{valid}} \textbf{Continue} \Comment{continue means skip the rest of the current iteration of the for loop}
    \EndIf
    \If{$(\ell_i)_{i \in [n]}$ is not strictly Pareto improving on $V$}
        \State Set $\textit{valid} \gets \texttt{False}$
    \EndIf
    \For{each $v \in V - V'$}
        \State Let $w$ be the vector defined by $w_i = \ell_i(v_i)$ for all $i \in [n]$
        \If{$w \not\in I(v)$}
            \State Set $\textit{valid} \gets \texttt{False}$
        \EndIf
    \EndFor
    \If{ \textit{valid} }
        \Return \enquote{Yes, an SPI exists}
    \EndIf
\EndFor
\Return \enquote{No SPI exists}
\end{algorithmic}
\end{algorithm}

    First, \Cref{alg:Pure_Token_SPI} finds a set $V' \subseteq V$ of at most $n+1$ payoffs such that, for every player $i$, either $v_i$ is equal for all $v \in V$ or $V'$ contains at least two distinct values $v_i'\neq v_i''$ for $v', v'' \in V'$.
    To do so, we initialize $V'$ with an arbitrary single point $v \in V$. 
    For each player $i \in [n]$, if $V'$ does not already contain a pair of points where Player $i$'s payoffs differ, we iterate through $V$ and find a point where Player $i$'s payoff different from that in $V'$ and add it to $V'$ (or conclude no such point exists). 
    This runs in $O(n*|A|) \subseteq O(|A|^2)$, which is dominated by the next step. 

    Next, \Cref{alg:Pure_Token_SPI} checks each possible choice of $\remapu': V' \rightarrow \u(A)$.
    If we find that that a particular $\remapu'$ can be extended into a valid $\remapu$, we have found an SPI and return \enquote{Yes, and SPI exists}. 
    Otherwise, we continue to the next choice of $\remapu'$. 
    If no $\remapu'$ can be extended into a valid $\remapu$, there can be no SPI (because any valid $\remapu$ has a valid restriction $\remapu' = \remapu\vert_{V'}$) and we return \enquote{No SPI exists.}

    For each dimension $i$, we consider the relation $\left\{ \left(v_i', \remapu'_i(v') \right) : v' \in V' \right\}$ and attempt to find the parameters of a valid extension of $\remapu'$ to $V$.
    If this relation is not a single-valued function, i.e. a single value $v_i$ is associated with multiple distinct values, then there is no valid extension of the choice of $\remapu'$. 
    In this case, we set \textit{valid} to false and continue to the next $\remapu'$. 
    Otherwise, we view the relation as a function from $\R$ to $\R$ and check whether it's positive affine. 
    If it's not, again there is no valid extension of $\remapu'$ and we continue to the next $\remapu'$. 
    If it is, we let $(\ell_i)_{i \in [n]}$ be the positive affine function. 
    For dimensions $i$ with at least two distinct values of $v_i'$ in $V'$, $\ell_i$ is unique; 
    For all other dimensions $i$, there is only value of $v_i$ in $V$, and so we define $\ell_i(v_i) = v_i$. 
    These checks can be done in time $O(n^3)$. 

    If these checks pass, we know that $\remapu'$ can be extended into a entrywise positive affine function from $\R$ to $\R$. 
    We still need to check that the extension is strictly Pareto improving and feasible, i.e. into $\u(A)$. 

    We first check whether $(\ell_i)_{i \in [n]}$ is strictly Pareto improving on $V$, which can be done in $O(|V|) \subseteq O(|A|)$ time, and set \textit{valid} to false if not. 
    Finally, we check in $O(|V|^2) \subseteq O(|A|^2)$ time whether $\remapu(v) \in \u(A)$ for the candidate $\remapu$ defined by the $\ell_i$ and for each $v \in V-V'$.
    Again, we set \textit{valid} to false and continue if not. 
    If it possible for all $v$, we have a valid $\remapu$ and return \enquote{Yes, an SPI exists}.

    Since $|V'|\leq n+1$, there are at most $|\u(A)|^{n+1} \in O(|A|^{n+1})$ possible choices of $\remapu'$. 
    For each of these choices, the above algorithm runs in time $O(n^3 + |A|^2 + |A|)$, which is $O(|A|^2)$ since all players have at least two actions and thus $n \in O(\log_2(|A|))$.
    Hence, our overall algorithm decides in time $O(|A|^{n+3})$ whether a pure isomorphism token SPI exists. 
    As claimed, this is polynomial for any constant number of players and quasipolynomial in general, again because $n \in O(\log_2(|A|))$.

    This concludes the proof of the decision problem complexity claim.
    To see the optimization claim, observe that \Cref{alg:Pure_Token_SPI} enumerates all valid utility remapping functions $\remapu$.
    To convert it into an optimization algorithm, one can simply modify the final for loop to track the best valid $\remapu$ found so far (according to the objective function) rather than returning \enquote{yes} when the first SPI is found. 
    Then, at the end we can simply return the best $\remapu$. 
    This optimization algorithm still runs in quasipolynomial time for any quasipolynomial time computable objective function. 
\end{proof}

The quasipolynomial efficiency of finding %
pure token isomorphism SPIs is in some sense an artifact of the fact that the representation size of normal-form games scales exponentially in the number of players.
In particular, an abstracted version of the underlying pure token SPI problem %
is \NPC{}.
In this problem, rather than a game, we're given a set of input payoff vectors and a set of target payoff vectors. 
These correspond to the sets of payoffs in the reduced game and full game, respectively. 
The problem asks whether there %
is
a strictly Pareto-improving, entry-wise positive affine mapping from the input set to target set. 
Of course, such a mapping exists if and only if there's a pure isomorphism token SPI in games with these reduced and full game payoffs.

\begin{theoremrep}\label{thm:abstracted-pure-remapping-NPC}
    The following problem is NP-complete.
    Given a set $S$ of input vectors and a set of target vectors $T$ in $\mathbb{R}^n$, decide whether there exists a strictly Pareto improving, entrywise positive affine mapping from $S$ to $S\cup T$. 
    That is, a function $\remap: S \rightarrow S\cup T$ such that 
    \begin{enumerate}
        \item $\remap(v) \pgeq v$ for all $v \in S$, 
        \item $\remap(v) \pge v$ for some $v \in S$, and
        \item For all players $i$, there exist $m_i, b_i \in \R$ with $m_i > 0$ such that $\remap_i(v) = m_i v_i + b_i$ for all $v \in S$.
    \end{enumerate}
\end{theoremrep}
\begin{toappendix}
    We refer to this problem as \GPRname{}.
\end{toappendix}

\begin{proofsketch}

    We reduce from the problem of graph $3$-coloring. 
    Given a graph $(V, E)$, we construct a remapping instance which has a satisfying remapping if and only if the graph admits a $3$-coloring. 
    Our vectors have one dimension corresponding to each vertex. 
    $S$ consists of the $\binom{n}{2}$ vectors with value $1$ in two dimensions and $0$ in all others. 
    We construct $T$ so that, in each dimension $i$, we must have $\remap_i (0) = .5$ and $\remap_i(1) \in \{1,2,3\}$. 
    $\remap_i(1)$ corresponds to the color of vertex $i$.
    Specifically, $T$ consists of vectors which have value $.5$ in all but two dimensions, and in those two dimensions can be some subset of $\{1,2,3\}$. 
    For each $(i, j) \not \in E$, $T$ contains all of the vectors $v$ with $(v_i, v_j) \in \{1, 2, 3\} \times \{1, 2, 3\}$, so that the colors of $i$ and $j$ do not constrain each other.
    For each $(i, j) \in E$, $T$ does not contain vectors $v$ with $v_i = v_j \in \{1,2,3\}$, encoding the constraint that vertices $i$ and $j$ cannot be the same color.
\end{proofsketch}

\begin{proof}
    The problem is clearly in \NP, as the remapping function is sub-linearly sized and can serve as a witness. 
    To show \NP-hardness, we reduce from the problem of graph 3-coloring, which is \NPH{} \cite{lovasz1973coverings_graph-3coloring-hardness}.
    In this problem, we are given a graph $(V, E)$ and must decide whether there exists a vertex 3-coloring, which is a function $c: V \rightarrow \{1, 2, 3\}$ such that for all pairs of adjacent vertices $(v_1, v_2) \in E$, $c(v_1) \neq c(v_2)$.

    Consider a graph $(V, E)$ with $|V| = n$. (We assume $n \geq 5$.)
    We construct a vector remapping instance with $n$-dimensional vectors as follows. 
    The input vectors are the $\binom{n}{2}=\frac{n(n-1)}{2}$ vectors that are $1$ in two dimensions and zero everywhere else: $S = \{ \mathds{1}_{i,j} : 1 \leq i < j \leq n \}$.
    The target vectors each have value $.5$ in all but two entries. 
    They can have any value in $\{1, 2, 3\}$ in each of these other two entries, except that if $(i, j) \in E$, then no target vector can have equal values aside from $.5$ at indices $i$ and $j$. 
    In other words, $T$ is the set of vectors of the form $a_i \mathds{1}_i + a_j \mathds{1}_j + .5 \mathds{1}_{-i,j}$ for $1 \leq i < j \leq n$ and $a_i, a_j \in \{1, 2, 3\}$ where $(i,j) \in E \Rightarrow a_i \neq a_j $. 
    Observe that this instance is polynomially sized in the graph coloring instance: it has $\Theta(|V|^2)$ input vectors and $\Theta(|V|^2)$ output vectors and each vector is $|V|$-dimensional. 
    
    Now, we claim that there exists a pure token SPI in our constructed instance if and only if $G$ admits a 3-coloring.

    Note that, for each input vector $\mathds{1}_{i,j}$, the only Pareto improving elements of $S\cup T$ are itself and those vectors of $T$ whose non-$.5$ values are those same indices $i$ and $j$. 
    Next, observe that no input vector $\mathds{1}_{i,j}$ can be mapped to itself in a strictly Pareto improving remapping. 
    If $\remap(\mathds{1}_{i,j}) = \mathds{1}_{i,j}$, then $\remap_k(0) = 0$ for all $k \in [n]\setminus\{i,j\}$.
    But then no vector in $T$ can be used in the image of $\remap$ (since $n\geq 5$), because there would be at least one dimension where $0$ is mapped to both $0$ and $.5$. 
    The only option for $\remap$ is then the identity. 

    Hence, we assume that the image of $\remap$ is a subset of $T$. 
    This implies that $\remap_i(0) = .5$ for all $i \in [n]$. 
    Each $\remap_i$ is therefore fully defined by the value in $\{1, 2, 3\}$ to which it maps $1$. 

    $\Rightarrow:$
    Suppose there exists a satisfying $\remap$. 
    Let $c_i$ to be the value $\remap_i(1)$ to which the positive affine transformation for dimension $i$ maps $1$. 
    Then we claim that coloring each vertex $v_i$ color $c_i$ is a proper 3-coloring. 
    To see this, consider an arbitrary pair of adjacent vertices $v_i$ and $v_j$. 
    Then $\remap(\mathds{1}_{i,j})$ is some target vector $a_i \mathds{1}_i + a_j \mathds{1}_j$, where $a_i = c_i$ and $a_j = c_j$.  
    By construction, the possible target vectors have differing values in dimension $i$ and $j$: they are $\{a_i' \mathds{1}_i + a_j' \mathds{1}_j : a_i', a_j' \in \{1, 2, 3\}, a_i' \neq a_j'\}$. 
    Hence, $c_i \neq c_j$, and the coloring is proper. 
    
    $\Leftarrow:$
    Suppose the graph $G$ has a 3-coloring $c: V \rightarrow \{1, 2, 3\}$.
    Then consider the remapping $\remap$ defined by $\remap( \mathds{1}_{i,j}) = c(v_i)\mathds{1}_i + c(v_j)\mathds{1}_{j}$.
    This is a well-defined function (ie not a multifunction) because in each dimension $i$ it maps $0$ to $.5$ and $1$ to $c_i$.
    Since these values ($0$ and $1$) are the only two values in dimension $i$ in the input set, the remapping is linear, and since $c_i \in \{1,2,3\}$, it's strictly Pareto improving. 
    Finally, $\remap$ maps only into $S\cup T$ since, by the definition of a proper coloring, $c(v_i) \neq c(v_j)$ for any pair of adjacent vertices $(v_i, v_j)$, we have that each $\remap( \mathds{1}_{i,j}) \in T$.
\end{proof}

This also shows that the pure token SPI problem becomes \NPH{} if the game is represented in a more succinct form, e.g. as a dictionary which only stores payoffs that aren't uniformly zero.%

\begin{toappendix}
    \subsection{Details of Complexity Claims about Pure Token SPIs}\label{subsec:details-pure-token-SPI-complexity}

    We claimed in the main text that, as a corollary of \Cref{thm:abstracted-pure-remapping-NPC}, the problem of finding pure token SPIs profiles becomes \NPH{} in many representations of normal form games.
    We now formally show this by demonstrating that the instances $S, T$ of \GPRname{} to which we reduce graph 3-coloring in \Cref{thm:abstracted-pure-remapping-NPC} are equivalent to the pure token SPI problem in actual games. 
    Consider an instance $S, T$ of \GPRname{}.

    Let $T= \{ t^0, t^1, ... t^{|T|-1} \}$ index the vectors of $T$ in some arbitrary order.
    We'll also define $t^{|T|} = .5\mathds{1}$ and $t^i = \vec 0$ for $i > |T|$.

    Consider the following $n$-player game.
    For $i \in \{1, 2\}$, $A_i = \{0, 1, d\}$. ($d$ stands for dominated.)
    For $i >2$, $A_i = \{0, 1\}$. 
    The payoffs are as follows:

    \begin{itemize}
        \item If $a_1\neq d$ and $a_2 \neq d$, \\$\u(a) = \begin{cases}
            \mathds{1}_{i,j} \text{ if } a_i = a_j \neq a_k \text{ for some } i \neq j \text{ and all } k \in [n]-\{i,j\} \\
            \vec{0} \text{ otherwise }
        \end{cases}$
        \item If $a_1 = d$, $a_2 \neq d$, $\u(a) = (-10, 10, 0, ..., 0)$. 
        \item If $a_1 \neq d$, $a_2 = d$, $\u(a) = (10, -10, 0, ..., 0)$. 
        \item If $a_1 = a_2 = d$, then $\u(a) = t^i$, where $i \coloneqq a_{-1,2}$ when viewed in binary and $t^i \coloneqq \vec 0$ for $i \geq |T|$.
    \end{itemize}

    Essentially, assuming no players play a dominated action, each player picks an action in $\{0,1\}$, with the goal of picking the same action as exactly one other player. 
    If some pair of players succeeds, they each get a payoff of $1$ while all other players get $0$. 
    If no pair succeeds, all players receive a payoff of $0$. 
    The first two players also each have a strictly dominated action $d$, and if both play $d$ then any payoff in $T$ can be achieved. 

    Now, observe that for both Players $1$ and $2$, the $d$ action is strictly dominated by both $0$ and $1$. 
    All other actions are undominated, as any action $b \in \{0,1\}$ is the unique best response for Player $i$ if $a_{-i}$ contains exactly $b$. 
    Hence, $\red A = \{0,1\}^n$ and $\u(\red A) = \{\mathds{1}_{i,j} : i \neq j\} \cup \{\vec 0\} = S \cup \{\vec 0\}$. 
    Clearly $\u(A) = S \cup T \cup \left\{\vec{0}, .5\mathds{1}, (-10, 10, 0, ..., 0), (10, -10, 0, ..., 0) \right\}$.
    
    We now show that this is equivalent to the \GPRname{} instance as constructed in \Cref{thm:abstracted-pure-remapping-NPC}. 
    Since $(-10, 10, 0, ..., 0)$ and $(10, -10, 0, ..., 0)$ are not Pareto-improving on any of $S$, their inclusion in $\u(A)$ makes no difference. 
    So this game is equivalent to having $\u(\red A) = S \cup \{\vec{0}\}$ and $\u(A) = S \cup T \cup \{\vec{0}, .5\mathds{1} \}$. 
    In the \Cref{thm:abstracted-pure-remapping-NPC}, we argue that any strictly Pareto improving mapping from $S$ to $S \cup T$ must map $0$ to $.5$ in each dimension. 
    This is still true after constraining the remapping further by adding adding $\vec 0$ to its domain, and this addition doesn't change the decision problem because we can have $\remap(\vec 0) = .5\mathds{1}$.
    Hence, there is a strictly Pareto improving token game on this game $G$ if and only if the corresponding \GPRname{} instance is satisfiable, as desired.

    Observe that our constructed game $G$ has $O(n^2)$ nonzero payoffs, and so is polynomially sized in the original vertex cover instance. 
    This shows that finding pure token SPIs becomes \NPC{} in game representations that only store the nonzero payoffs.

\end{toappendix}

\textbf{Optimization.}~~In addition to deciding whether SPIs exist in various settings, we'll also consider optimizing over SPIs.
    We define our objectives on valid utility remapping functions, which \Cref{lemma:nondegenerate-remap-SPI-iff-remapping-linear} shows specify the effect of an SPI on the players' payoffs. 
    We define a class of \textit{linear} objective functions, which includes (weighted) utilitarian social welfare under some beliefs about the outcome of the reduced game.
    It generally does not include either Nash social welfare or maximizing one player's benefit subject to a minimum on the other player's benefit (under beliefs about the outcome of the reduced game). 
    Roughly, we show that linear objectives over correlated token SPIs can be efficiently optimized. 
    In contrast, \textit{arbitrary} objectives over pure token SPIs can be optimized in quasipolynomial time, the same time complexity as our algorithm for the decision problem. 

    There's an important %
    complication %
    regarding what it means to optimize over correlated token SPIs. 
    Valid $\remapu$ must have strict inequalities $m_i > 0$, so the space of valid $\remapu$ is not closed.
    Thus, there may not be an optimal SPI: It might be possible to get arbitrarily close to a particular objective value but not to achieve it.
    We show that we can optimize linear objectives over correlated token SPIs in the strongest sense one could hope for given this issue.
    For details, see \Cref{appendix:Optimizing_Token_SPIs}.

\begin{toappendix}
    \subsection{Optimizing over Token SPIs}
    \label{appendix:Optimizing_Token_SPIs}

    In addition to deciding whether SPIs exist in various settings, we'll also consider optimizing (w.l.o.g. maximizing) objective functions over SPIs. %
    We define our objectives on valid utility remapping functions, which \Cref{lemma:nondegenerate-remap-SPI-iff-remapping-linear} shows specify the effect of an SPI on the players' payoffs. 
    An objective assigns a real-valued quality to each SPI, and can be viewed as a function $f: \{\remapu\} \rightarrow \R$, where $\{\remapu\}$ is the set of valid remapping functions from $\u(\red A)$ into $\u(\F(A))$.
    One particular class of objectives we'll consider is \textit{linear} objectives. 
    These are characterized by a set of linear functions $f^{\red v}: \R^{n} \rightarrow \R$ for each each value $\red v \in \u(\red A)$, such that $f(\remapu) = \sum_{\red v} f^{\red v}(\remapu(\red v))$. 
    Here, linear means that each $f^{\red v}$ is a weighted sum of the components of $\remapu(\red v)$, i.e. can be computed by $f^{\red v}(\remapu) = \sum_i w_i^{\red v} \remapu_i(\red v)$ for some $w_i^{\red v}$.
    We assume w.l.o.g. that $f^{\red v}$ doesn't have an additive term, since it would add the same constant to the objective value of each $\remapu$.

    One important linear objective is utilitarian social welfare under some belief $P$ over outcomes of the reduced game: $f^{\red v} = P(\red v) \sum_i \remapu_i(\red v)$.
    Another is \enquote{subjective} utilitarian social welfare, where each player's expected utility is computed with respect to their own beliefs $P^i$ over the outcome of the reduced game: 
    $f^{\red v} = \sum_i P^i(\red v) \remapu_i(\red v)$.
    These are of course equivalent to the (subjective) utilitarian welfare \textit{gain} relative to the default, since that only subtracts a constant term. 
    We can also easily obtain weighted versions of utilitarian SW, which include maximizing a single player's utility as a special case. 
    However, there are also natural non-linear objective functions. 
    For example, given some beliefs $P$, maximizing one player's benefit from the SPI subject to a minimum constraint on another player's benefit is nonlinear. 
    In addition, maximizing non-utilitarian notions of social welfare, such as Nash or egalitarian social welfare, under some belief $P$ over the outcomes of the reduced game generally requires a non-linear objective. 

    There's an important %
    complication %
    regarding what it means to optimize over correlated token SPIs. 
    Valid $\remapu$ must have strict inequalities $m_i > 0$, so the space of valid $\remapu$ is not closed.
    SPIs must also be strictly Pareto improving, which is another strict inequality.
    Thus, there may not be an optimal SPI: It might be possible to get arbitrarily close to a particular objective value but not to achieve it.
    We show that we can optimize linear objectives over correlated token SPIs in the strongest sense one could hope for given this issue: 
    We can efficiently decide whether the instance admits an optimal solution.
    If so, we find the value of the optimal solution and a valid $\remapu$ achieving it. 
    If not, we find the supremum over SPIs of the objective value and a family of valid $\remapu$ whose objective values approach this supremum.

\begin{theoremrep}
    For a given game $\game$, any linear objective over valid, correlated utility remapping functions $\remapu: \u(\red A) \rightarrow \u(\C(A))$ can be optimized in polynomial time. %
\end{theoremrep}
\begin{proof}
    We'll assume that the instance admits an SPI, which can of course be decided by the approach in the previous (\enquote{complexity}) part, as otherwise this optimization problem is infeasible. 
    Note that there is some subtlety about what optimizing over these SPIs means.
    Because the space of valid $\remapu$ is not closed due to the $m_i>0$ constraints and strictly Pareto improving requirement, there may be no optimal correlated isomorphism SPI. 
    However, we show that we can decide in polynomial time whether the instance admits an optimal solution. 
    If so, we find the value of the optimal solution and a valid $\remapu$ achieving it.
    If not, we find the supremum of the objective value and a parameterized $\remapu$ whose objective value approaches the supremum.

    We do this via a linear program very similar to the one from the proof of \Cref{thm:complexity-decide-correlated-remapping-SPI}. 
    Say the linear objective has parameters $\{w_i^{v^j}\}_{i,j}$. 
    Consider the following linear program with a lexicographic objective. 

    \begin{align*}
    &\text{Maximize Lexicographically}&\quad \left( \sum_{v^j \in V} \sum_{i \in [n]} w^{v^j}_{i} (m_i v_i^j + b_i), \mu \right)&\\
    &\text{Subject to:} &\\
    & p_a^{v^j} \geq 0 &\text{for all } a \in A, j \in [k]  \\
    &\sum_{a \in A} p_a^{v^j} = 1 &\text{for all } j \in [k] \\
    & m_i \geq 0 &\text{for all } i \in [n]\\
    & m_i v_i^j + b_i \geq v_i^j &\text{for all } i \in [n], j \in [k] \\
    & \sum_{a \in A} p_a^{v^j} u_i(a) = m_i v_i^j + b_i &\text{for all } i \in [n], j \in [k] \\
    & \mu \leq m_i &\text{for all } i \in [n] \\
    & \mu \leq \sum_{v^j \in V} \sum_{i \in [n]} (m_i v_i^j + b_i - v_i^j)
\end{align*}

    Aside from the objective, the program is identical to that from the proof of \Cref{thm:complexity-decide-correlated-remapping-SPI} except it has one additional variable $\mu$.
    This $\mu$ is constrained to be weakly less than each $m_i$.
    It's also constrained to be weakly less than the objective from the previous LP, which is essentially the social welfare gain under the $\remapu$, so is positive if and only if $\remapu$ is strictly Pareto improving. 
    The objective is first maximizing the given objective $\sum_{v^j \in V} \sum_{i \in [n]} w^{v^j}_{i} (m_i v_i^j + b_i)$ and secondarily maximizing $\mu$.

    As before, the program searches over all valid $\remapu$ as well as those with some $m_i=0$ and those which are not strictly Pareto improving.
    It is polynomially sized and can therefore be solved in polynomial time. 
    The lexicographic objective can be handled, for example, by first solving the LP with the primary objective to find optimal value $o^*$, and then maximizing the secondary objective with the additional constraint that the primary objective value is at least $o^*$.

    If the program finds an optimal solution with $\mu > 0$, then the problem instance admits an optimal valid $\remapu$, one with all $m_i>0$ and which is strictly Pareto improving, which the LP finds. 
    If the program finds an optimal solution with $\mu=0$, then the problem instance does not admit an exactly optimal solution, but the LP finds the supremum of the objective values over the space of valid $\remapu$.
    Let $\remapu^*$ be the the utility remapping function represented by the values of LP variables at optimality, and let $\remapu^s$ be any strictly Pareto improving utility remapping function. 
    Such a $\remapu^s$ can be found efficiently using the LP from the proof of \Cref{thm:complexity-decide-correlated-remapping-SPI}.
    Define $\remapu'$ by $\remapu'(v) = (1-2\eps)\remapu^*(v) + \eps\remapu^s(v) + \eps v$.
    We first show that $\remapu'$ is valid.
    Observe that $\remapu'$ is feasible by the convexity of $\u(\F(A))$. 
    It is strictly Pareto improving because its the convex combination of two weakly Pareto improving functions $\remapu$ and one strictly Pareto improving one. 
    Finally, $\remapu'$ is positive affine for the same reason as before: its the convex combination of one $\remapu$ with all $m_i=1 >0$ and two $\remapu$ with all $m_i \geq0$.
    This shows that $\remapu'$ is valid.
    Lastly, by the linearity of the objective, the objective value of $\remapu'$  approaches the value of the LP solution as $\eps\rightarrow 0$, as desired. 
    \end{proof}

    In contrast to the correlated case,  
    with \textit{pure} token isomorphism SPIs, non-linear functions of $\remapu$ (e.g., Nash or egalitarian welfare) can also be optimized in quasipolynomial time. 
    In addition, the technicalities we faced there regarding optimization don't apply here: 
    The space of valid $\remapu$ into $\u(A)$ is finite, so optimal pure SPIs always exist.

\begin{theoremrep}\label{thm:complexity-optimize-pure-remapping-SPI-n-player}
    Consider a game $\game$. 
    Arbitrary (quasi-)polynomial time computable objective functions over the space of valid, pure utility remapping functions $\remapu: \u(\red A ) \rightarrow \u(A)$ can be optimized in quasi-polynomial time. 
    For any fixed number of players $n$, this is polynomial time. 
\end{theoremrep}

\begin{proof}
    This proof follows almost immediately from the proof of \Cref{thm:complexity-decide-pure-remapping-SPI-n-player}.
    Observe that \Cref{alg:Pure_Token_SPI} from that proof enumerates all valid utility remapping functions $\remapu$.
    To convert it into an optimization algorithm, one can simply modify the final for loop to track the best valid $\remapu$ found so far (according to the objective function) rather than returning \enquote{yes} when the first SPI is found. 
    Then, at the end we can simply return the best $\remapu$. 
    This optimization algorithm still runs in quasipolynomial time for any quasipolynomial time computable objective function. 
\end{proof}

\end{toappendix}

\section{Default-Remapping \SPIs}
\label{sec:default-remapping-spis}

In this section, we consider what \SPIs{} can be achieved if the players can credibly reveal their default strategy $\policy_i(\game)$ and thus \textit{ex post} verifiably commit to play according to some function $\remap$ of this default policy.
We call this default-remapping commitment and refer to $\remap$ as a (default-) remapping function. 
Unilateral default-remapping commitment involves committing to some $\remap_i: A_i \rightarrow \F(A_i)$. 
In the omnilateral case, when all players can credibly reveal their default, the players can choose a function $\remap: A \rightarrow \F(A)$ and commit to play $\remap(a)$ whenever the default policy $\policy(\game)$ results in outcome $a$. 
We'll consider the unilateral and omnilateral versions of these commitments, as well as the pure and correlated versions, where $\F(A)$ is either $A$ or $\C(A)$.

Our reuse of the $\remap$ notation highlights the relationship of omnilateral default-remapping to the token game SPIs of the previous section.  
In the token game setting, the players commit to play the strategy profile $\remap(t)$, where $t$ is the outcome of a token game $\tgame$.
In the omnilateral default-remapping setting, players commit to play the strategy profile $\remap(a)$, where $a$ is the outcome the players \textit{would have reached} had they played the original game $\game$ as usual. 

Though the ability to credibly reveal one's default policy is a strong assumption, it applies in some scenarios. %
For example, a player might intend to play a future game by copying the strategy of some public figure or taking the recommendation of a forthcoming paper. %
If this fact is common knowledge, the player could unilaterally commit to an \textit{ex post} verifiable remapping of their default action. 

The complexity of finding default-remapping SPIs depends on whether or not all players' default actions can be credibly revealed. 
As such, we'll consider these two cases separately.%

\subsection{Unilateral Default-Remapping SPIs}
\label{subsec:unilateral_default_remapping}
\begin{toappendix}
    \subsection{Proofs for \Cref{subsec:unilateral_default_remapping} (Unilateral Default-Remapping SPIs)}
    \label{sec:unilateral-default-remapping}
\end{toappendix}

We first consider the case where only a strict subset of the players can commit to a strategy remapping. For notational simplicity, we specifically assume that only Player 1 can commit to a strategy remapping $\Psi_1(\Pi_1(\tilde \game))$. %
Call the resulting interaction $\game ^{\Psi_1 \circ \Pi_1(\tilde \game)}$.
For SPI purposes, unilateral default remapping is similar to unilateral utility function commitments (although we will see some differences below and especially in \Cref{appendix:why-all-isomorphisms-are-relevant-for-unilateral-default-remapping}). Therefore, we illustrate it with the ``Complicated Temptation Game'', the same example that \citet[Table 4]{SPI-original} use to illustrate unilateral utility function SPIs, see our \Cref{table:complicated-temptation-game}.

By default, this  game reduces to its top-left quadrant, where Player 1 chooses between $T_1,T_2$ and Player 2 chooses between $C_1,C_2$. Player 1 can unilaterally Pareto-improve by committing to %
choose $R_{1}$/$R_{2}$ had she chosen $T_1/T_2$ in the default.%

\begin{table}
\begin{center}
\begin{tabular}{cc|c|c|c|c|}
  & \multicolumn{1}{c}{} & \multicolumn{4}{c}{Player 2}\\
  & \multicolumn{1}{c}{} & \multicolumn{1}{c}{$C_1$}  & \multicolumn{1}{c}{$C_2$} &   \multicolumn{1}{c}{$F_1$} &   \multicolumn{1}{c}{$F_2$}  \\\cline{3-6}
  \multirow{4}*{Player 1} & $T_1$ &  $4,2$ & $1,1$ & $6,0$ & $6,0$ \\\cline{3-6}
  & $T_2$ & $1,1$ & $2,4$ & $6,0$ & $6,0$ \\\cline{3-6}
 & $R_1$ & $0,0$ & $0,0$ & $5,3$ & $3,2$ \\\cline{3-6}
  & $R_2$ & $0,0$ & $0,0$ & $2,2$ & $3,5$ \\\cline{3-6}
\end{tabular}
\end{center}
\caption{Complicated Temptation Game}
    \label{table:complicated-temptation-game}
\end{table}

To allow for a formal analysis of unilateral default-remapping commitments, we need assumptions about outcome correspondence for interactions of the form $\game ^{\Psi_1 \circ \Pi_1(\game)}$. Specifically, we make three assumptions. The first two parallel the elimination of dominated strategies: Actions not in the image of $\Psi_1$ can be removed; and elimination of dominated actions for Players $-i$ works as before. We also need an analog of the isomorphism assumption (\Cref{assumption:isomorphism-assumption}). %
We also show (\Cref{proposition:example-unilateral-disarmament}) how the assumptions can be used to prove the SPI for the example in \Cref{table:complicated-temptation-game}.

First, we need two elimination assumptions. The first is that we can eliminate dominated actions for players \textit{other} than $i$.

\begin{assumption}
\label{assumption:remove-dominated-under-instructions}
If in $\game$ some action $\bar a_i$ of some Player $i\neq 1$ is strictly dominated, then 
\begin{equation*}
\Pi(\game^{\Psi_1 \circ \Pi_1(\game)}) %
\sim_{(\mathrm{id},\Xi_i)}
\Pi((\game-\{\bar a_i\})^{\Psi_1 \circ \Pi_1(\game)})
\end{equation*}
where $\Xi_i$ is the identity function except that it maps $\bar a_i$ to the empty set, i.e., $\Xi_i(a_i)=\{a_i\}$ whenever $a_i\neq \bar a_i$ and $\Xi_i(\bar a_i) = \emptyset $.
\end{assumption}

The second assumption is an elimination assumption for Player 1. It says that if $\Psi_1$ never maps to some action $\bar a_1$, then we can remove $\bar a_1$. This is important primarily by allowing us to apply \Cref{assumption:remove-dominated-under-instructions} more often.

\begin{assumption}\label{assumption:disarm-effective}
Let $\game,\hat\game$ be games and let $\Psi^{-1}(\hat a_1)=\emptyset$, i.e., let $a_1$ be an action that is not in the image of $\Psi_1$. %
Then
\begin{equation*}
\Pi(\hat\game^{\Psi_1 \circ \Pi_1(\game)})
\sim_{(\Xi_1,\mathrm{id})}
\Pi((\hat\game - \hat a_1)^{\Psi_1 \circ \Pi_1(\game)})
\end{equation*}
where $\Xi_1(a_1)=\{a_1\}$ for all $a_1\neq \hat a_1$ and $\Xi(\hat a_1) = \emptyset$.
\end{assumption}

Third, we need a sort of isomorphism assumption to connect interactions of the form $\Pi(\game^{\Psi_1 \circ \Pi_1(\game')})$ to interactions that are just normal form games. Roughly, the following assumption states: If P1 announces that she'll play like $\Pi(\game')$ but mapped into $\hat \game$ and moreover $\game'$ and $\hat \game$ are isomorphic under $\Psi_1,\phi_{2},...,\phi_{n}$ in terms of the other players' utilities (for some $\phi_{2},...,\phi_{n}$), then $\Pi(\hat\game^{\Psi_1 \circ \Pi_1(\game')})$ will be played isomorphically to $\game'$.

\begin{assumption}\label{assumption:player-2-isomorphism-against-play}
Let $\game'$ be a fully reduced game. Let $\hat\game$ be a game %
in which Players $-1$ have %
no strictly dominated strategies. Let $\Psi_1\colon A_1'\rightarrow \hat A_1$. Let $\phi_i\colon A_i' \rightarrow \hat A_i$ s.t.\ $(\Psi_1,\phi_2,...,\phi_n)$ is an isomorphism in terms of the other players' utilities (i.e., for each $i\neq 1$, $\phi_i$ is a bijection and there are $m_i\in \mathbb R_+,b_i\in \mathbb R$ s.t.\ $u_i \circ (\Psi_1,\phi_2,...,\phi_n) = m_i u_i + b_i$). %
Then $\game' \sim_{\Phi} \Pi(\hat\game^{\Psi_1 \circ \Pi_1(\game')})$, where $\Phi$ is the union of all isomorphisms $(\Psi_1,\phi_2,...,\phi_n)$ of the form above, i.e., $\Phi(\a')= \{ (\Psi_1,\phi_2,...,\phi_n) (\a') \mid (\Psi_1,\phi_2,...,\phi_n) \}$.
\end{assumption}

Using these assumptions, we can now formally prove that the unilateral default-conditional SPI for the Complicated Temptation Game is indeed an SPI.

\begin{proposition}
    \label{proposition:example-unilateral-disarmament}
    Let $\game$ be the game of \Cref{table:complicated-temptation-game}. Let $\red{\game}$ be the top-left quadrant of $\game$. Let $\Psi_1\colon T_1\mapsto R_1, T_2 \mapsto R_2$. 
    From \Cref{assumption:disarm-effective,assumption:elimination-of-dominated-strategies,assumption:player-2-isomorphism-against-play,assumption:remove-dominated-under-instructions}, it follows that $\Pi(\game^{\Psi_1 \circ \Pi_1(\bar\game)})$ is an SPI on $\game$.
\end{proposition}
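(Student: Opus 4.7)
The plan is to apply condition (2) of \Cref{lemma:characterization-unilateral-default-remapping-spis} with the witnessing subgame $\hat\game$ taken to be the bottom-right $2 \times 2$ quadrant of $\game$ (so $\hat A_1 = \{R_1, R_2\}$ and $\hat A_2 = \{F_1, F_2\}$), and the isomorphism from $\bar\game$ to $\hat\game$ given by the supplied $\Psi_1$ together with $\iso_2\colon C_1 \mapsto F_1,\ C_2 \mapsto F_2$. Since \Cref{lemma:characterization-unilateral-default-remapping-spis} is itself proved from \Cref{assumption:disarm-effective,assumption:elimination-of-dominated-strategies,assumption:player-2-isomorphism-against-play,assumption:remove-dominated-under-instructions}, this will suffice.

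First I would verify the two strict-dominance reductions. In $\game$, each $R_j$ is strictly dominated by both $T_1$ and $T_2$ (e.g., $T_1$'s row $(4,1,6,6)$ entrywise exceeds $R_1$'s row $(0,0,5,3)$), and after their elimination Player 2's $F_1, F_2$ are each strictly dominated by $C_1$, leaving precisely $\bar\game$. For the lemma's second reduction, in the subgame $(\hat A_1, A_2, \u)$ on $\{R_1, R_2\} \times \{C_1, C_2, F_1, F_2\}$ the $C$-actions yield Player 2 payoff $0$ while $F_1$ yields at least $2$, so $F_1$ strictly dominates both $C_1$ and $C_2$, reducing to $(\hat A_1, \hat A_2, \u)$.

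Second I would check that $(\Psi_1, \iso_2)$ is an isomorphism in Player 2's utilities by directly computing $u_2 \circ (\Psi_1, \iso_2)(\a) = u_2(\a) + 1$ on each of the four outcomes in $\bar A$ (so $m_2 = 1, b_2 = 1$). Then the Pareto-improvement condition is a routine entrywise check on the four outcomes: $(T_1, C_1)\mapsto(R_1, F_1)$: $(4,2)\to(5,3)$; $(T_1, C_2)\mapsto(R_1, F_2)$: $(1,1)\to(3,2)$; $(T_2, C_1)\mapsto(R_2, F_1)$: $(1,1)\to(2,2)$; $(T_2, C_2)\mapsto(R_2, F_2)$: $(2,4)\to(3,5)$. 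Each strictly improves both coordinates.

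The delicate step --- and the main obstacle --- is the third bullet of condition (2), which requires that \emph{every} Player-$2$-isomorphism from $\bar\game$ to $\hat\game$ extending $\Psi_1$ be Pareto-improving, not just the chosen one. I therefore need to show $(\Psi_1, \iso_2)$ is the unique such isomorphism. The only other candidate bijection $\hat A_2 \to \hat A_2$ is $\iso_2'\colon C_1 \mapsto F_2,\ C_2 \mapsto F_1$, but the induced map on Player 2's utilities would send $u_2(T_1, C_1) = 2$ to $u_2(R_1, F_2) = 2$ and $u_2(T_1, C_2) = 1$ to $u_2(R_1, F_1) = 3$, forcing $m_2 = -1 < 0$ and hence violating the requirement $m_2 \in \mathbb{R}_+$. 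Uniqueness thus holds, all hypotheses of \Cref{lemma:characterization-unilateral-default-remapping-spis} are met, and the SPI claim follows.
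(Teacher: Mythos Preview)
Your argument is correct and the calculations all check out, but you take a different route from the paper. The paper gives a direct proof: it explicitly chains the outcome correspondences $\game \sim \bar\game$ (via \Cref{assumption:elimination-of-dominated-strategies}), $\game^{\Psi_1\circ\Pi_1(\bar\game)} \sim (\game-\{T_1,T_2\})^{\Psi_1\circ\Pi_1(\bar\game)} \sim \hat\game^{\Psi_1\circ\Pi_1(\bar\game)}$ (via \Cref{assumption:disarm-effective,assumption:remove-dominated-under-instructions}), and $\bar\game \sim_{(\Psi_1,\phi_2)} \hat\game^{\Psi_1\circ\Pi_1(\bar\game)}$ (via \Cref{assumption:player-2-isomorphism-against-play}, after verifying uniqueness of $\phi_2$ just as you do), then applies transitivity. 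You instead invoke \Cref{lemma:characterization-unilateral-default-remapping-spis}. This is legitimate, and in fact the $\Leftarrow$ direction of that lemma for condition (2) is precisely the abstract version of the paper's direct chain. The paper's version is more self-contained and pedagogically illustrates how the assumptions compose; your version situates the example cleanly within the general theory.

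Two small caveats. First, \Cref{lemma:characterization-unilateral-default-remapping-spis} is \emph{stated} under five assumptions, including \Cref{assumption:isomorphism-assumption}, whereas the proposition only lists four; your sentence ``\Cref{lemma:characterization-unilateral-default-remapping-spis} is itself proved from [the four]'' is not literally accurate. You should either note that the $\Leftarrow$ direction for condition (2) in the lemma's proof uses only the four listed assumptions, or simply unroll that direction here (which recovers the paper's proof). Second, the lemma as stated only concludes that \emph{some} unilateral default-remapping SPI exists, while the proposition asserts that the \emph{particular} interaction $\game^{\Psi_1\circ\Pi_1(\bar\game)}$ is an SPI; you are implicitly relying on the fact that the lemma's $\Leftarrow$ proof constructs exactly this SPI from the witnessing $\Psi_1$, which is true but worth saying.
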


\begin{proof}
By repeated application of the dominance assumption, we get that $\game\sim_{\Xi}\bar\game$, where $\Xi$ maps outcomes including $F$ or $R$ actions to $\emptyset$ and otherwise maps outcomes onto themselves.

Now let $\hat G$ be the bottom-right game. Consider $\phi_2\colon C_1 \mapsto F_1, C_2 \mapsto F_2$. Note that $(\Psi_1, \phi_2)$ is isomorphism for Player 2's utility (with coefficients $a=1,b=1$). Note further that $\phi_2$ thus defined is the only such function. Thus, we have that $\bar\game \sim_{(\Psi_1,\phi_2)} \Pi(\hat\game^{\Psi_1 \circ \Pi_1(\bar\game)})$.

By \Cref{assumption:disarm-effective}, we have
\begin{equation*}
\Pi(\game^{\Psi_1\circ \Pi_1(\bar \game)}) \sim \Pi ((\game - \{ T_1,T_2\})^{\Psi_1 \circ \Pi_1(\bar \game)}).
\end{equation*}
By \Cref{assumption:remove-dominated-under-instructions},
\begin{equation*}
\Pi((\game - \{T_1,T_2\}) ^ {\Psi_1 \circ \Pi_1(\bar \game)}) \sim \Pi(\hat G ^ {\Psi_1 \circ \Pi_1(\bar \game)}).
\end{equation*}

Putting it all together using the transitivity rule, we get that $\game \sim \Pi(\game^{\Psi_1\circ \Pi(\bar \game)})$. It is easy to verify that the resulting outcome correspondence is Pareto improving.
\end{proof}

One can characterize unilateral default-remapping SPIs in a way that's analogous to the characterization in \Cref{lemma:SPI-iff-isomorphism-on-reduced-game-or-degenerate}. That is, if $\Psi_1$ is a default-remapping SPI, then we can see this by first reducing $\game$ and $\game^{\Psi_1 \circ \policy_1(\game)}$ (i.e., the game under remapping $\Psi_1$)%
. We can then have two types of SPIs (simple and isomorphism SPIs): The first is that every possible outcome of the prospective SPI $\game^{\Psi_1 \circ \policy_1(\game)}$ is better than every outcome of the default $\game$. 
The second is that the reduced games are isomorphic. %
An important difference between \Cref{lemma:SPI-iff-isomorphism-on-reduced-game-or-degenerate} and the present result is that we need the condition to state that \textit{all} the isomorphisms are Pareto-improving. It does not suffice to consider one of the isomorphisms. We explain this in detail in \Cref{appendix:why-all-isomorphisms-are-relevant-for-unilateral-default-remapping}.

\begin{toappendix}
\begin{lemmarep}\label{lemma:characterization-unilateral-default-remapping-spis}
Let $\game=(A,\mathbf u)$ be a normal-from game that reduces to $\bar\game$. Then there is a unilateral default-remapping SPI on $\game$ under %
\Crefrange{assumption:elimination-of-dominated-strategies}{assumption:player-2-isomorphism-against-play}
if and only if at least one of the following two conditions holds:
(1) There is an action $a_1$ and sets of actions $\hat A_{2},...,\hat A_n$ such that $(\{a_1\},  A_2,...,  A_n, \mathbf u)$ reduces by strict dominance to $(\{a_1\}, \hat A_2,..., \hat A_n, \mathbf u)$ and $\{a_1\} \times \hat A_2 \times ... \times \hat A_n $ Pareto dominates all outcomes in $\bar\game$ with at least one strict Pareto relation.
(2) There is a subgame $\hat \game$ of $\game$ such that:
        \begin{inparaitem}[]
            \item $(\hat A_1, A_2, ..., A_n, \mathbf u)$ reduces by strict dominance to $(\hat A_1, \hat A_2, ..., \hat A_n, \mathbf u)$;
            \item there exists $\Psi_1\colon \bar A_1 \rightarrow \hat A_1$ and $\iso_i\colon \bar A_i\rightarrow \hat A_i$ s.t.\ $(\Psi_1,\iso_2,...,\iso_n)$ is an isomorphism from $\bar G$ to $\hat G$ in terms of the utilities of Players $2,...,n$; and
            \item for all such $\iso_2,...,\iso_n$, the isomorphism $(\Psi_1,\iso_2,...,\iso_n)$ is Pareto improving.
        \end{inparaitem}
\end{lemmarep}

\begin{proof}
    \underline{$\Rightarrow$:} %
    We prove this direction by proving its contrapositive. That is, we show that if the conditions aren't satisfied, then there can be no unilateral default-remapping SPI.

    Let $\game^{\Psi_1 \circ \Pi_1(\game')}$ be a proposed unilateral disarmament. We need to show that there is an assignment of outcomes to games (including games under remapping instructions) s.t.\ the outcome assigned to $\game^{\Psi_1 \circ \Pi_1(\game')}$ is not Pareto-better than $\game$.
    
    Now let $\tilde \game ^ {\Psi_1 \circ \Pi_1(\game ')}$ be the game resulting from $\game ^ {\Psi_1 \circ \Pi_1(\game ')}$ by repeated elimination of disarmed actions for Player 1 and dominated actions for Players $-1$ as per \Cref{assumption:disarm-effective,assumption:remove-dominated-under-instructions}.%

    We distinguish three cases:
    \begin{enumerate}
        \item \label{proof-of-lemma-5-1-item:1} $\game'$ is not isomorphic to $\tilde \game$ w.r.t.\ Player $-1$'s utilities or $\game'$ is not fully reduced.
        \item \label{proof-of-lemma-5-1-item:2} $\game'$ is fully reduced and is isomorphic to $\tilde \game$ in terms of Player $-1$'s utilities, but $\game'$ is not isomorphic to $\bar G$.
        \item \label{proof-of-lemma-5-1-item:3} $\game'$ is fully reduced and is isomorphic to $\tilde \game$ in terms of Player $-1$'s utilities, and $\game'$ is also isomorphic to $ \bar \game$.
    \end{enumerate}

    We now consider these in turn (in the order 2, 1, 3).

    \ref{proof-of-lemma-5-1-item:2}: %
    By the assumption that condition 2 doesn't hold, 
    one of the following two holds:
    \begin{itemize}
        \item There is an outcome $\tilde \a$ in $\tilde \game$ and an outcome in $\bar \a$ in $\bar \game$ s.t.\ $\tilde \a$ is not even weakly Pareto-better than $\bar \a$.
        \item All outcomes of $\tilde \game$ and $\bar \game$ have the same utility.
    \end{itemize}
    In the second case, it's clear that $\game^{\Psi_1 \circ \Pi_1(\game')}$ cannot be an SPI on $\game$. (It fails the strictness condition.)
    
    So let's consider the first case. We will show that we can construct an assignment $\Pi$ of outcomes to games that violates the SPI claim.
    First assign $\bar \a$ to $\bar \game$ and $\tilde \a$ to $\game^{\Psi_1 \circ \Pi_1(\game')}$ and $\tilde\game^{\Psi_1 \circ \Pi_1(\game')}$.
    Now we have left to show that we can complete the assignment in a way that satisfies all the assumptions.

    We proceed with the construction as follows. Note that game isomorphism forms an equivalence relationship on games. Thus, we can partition games into sets of isomorphic games. So consider each set of isomorphic games $\Gamma$ %
    and the set of associated $\hat \Gamma ^ {\Xi_1 \circ \Pi (\Gamma)}$, where $\hat\Gamma$ is isomorphic in terms of Player $-1$'s utilities to $\Gamma$ via $\Xi_1$ (and some $\phi_2,...,\phi_n$).

    Now if $\game'$ is in this class of games, then we assign to $\game'$ an outcome $(\Psi_1,\phi_2,...,\phi_n)^{-1}(\tilde \a)$ (where $\phi_2,...,\phi_n$ are s.t.\ $(\Psi_1,\phi_2,...,\phi_n)$ is an isomorphism). Then assign outcomes in the rest of the class (of games isomorphic to $\game'$) to be isomorphic to this outcome. Finally, assign outcomes to the $\hat \Gamma ^ {\Xi_1 \circ \Pi (\Gamma)}$ by picking any isomorphism functions $\phi_2,...,\phi_n$ and applying $(\Xi_1,\phi_2,...,\phi_n)$ to the outcome assigned to $\Gamma$.

    If $\bar \game$ is in the class of games, then we similarly ``expand'' our assignment from the assignment of $\bar a$ to $\bar \game$.

    Otherwise, we simply assign any outcome to one of the games and then proceed as above.

    It is easy to verify that this assignment satisfies the assumptions.

    \ref{proof-of-lemma-5-1-item:1}: This case can be handled just like case \ref{proof-of-lemma-5-1-item:2}. The only difference is that none of the reduced game classes are constrained by having assigned $\tilde \a$ to $\game^{\Psi_1 \circ \Pi_1(\game')}$ and $\tilde\game^{\Psi_1 \circ \Pi_1(\game')}$.

    \ref{proof-of-lemma-5-1-item:3}: By assumption, there is an isomorphism between $\bar\game$ and $\game'$. Let $\phi^{\bar\game,\game'}$ be one such isomorphism.
    We also know that there are $\phi_2^{\game', \tilde \game},...,\phi_n^{\game', \tilde\game}$ s.t.\ $(\Psi_1,\phi_2^{\game',\tilde\game},...,\phi_n^{\game',\tilde\game})$ is an isomorphism for Players $-1$ from $\game'$ to $\tilde G$. Thus, we also get that $(\Psi_1\circ \phi^{\bar\game,\game'}_1,\phi_2^{\game', \tilde \game}\circ\phi^{\bar\game,\game'}_2,...,\phi_n^{\game', \tilde \game}\circ\phi^{\bar\game,\game'}_n)$ is an isomorphism in terms of utilities of Players $-1$ from $\bar \game$ to $\tilde\game$.

    By the assumption that the second condition in the lemma is violated, we know that there must be $\phi_2^{\bar \game, \tilde \game},...,\phi_n^{\bar \game, \tilde G}$ s.t.\ $(\Psi_1\circ \phi^{\bar\game,\game'}_1,\phi_2^{\bar\game, \tilde \game},...,\phi_n^{\bar\game, \tilde \game})$ is an isomorphism in terms of Players $-1$'s utilities between $\bar G$ and $\tilde G$ and is not Pareto-improving.

    Again, this can mean two things:
    \begin{itemize}
        \item $(\Psi_1\circ \phi^{\bar\game,\game'}_1,\phi_2^{\bar\game, \tilde \game},...,\phi_n^{\bar\game, \tilde \game})$ keeps all utilities constant.
        \item There is $\bar \a$ s.t.\ $\tilde \a=(\Psi_1\circ \phi^{\bar\game,\game'}_1,\phi_2^{\bar\game, \tilde \game},...,\phi_n^{\bar\game, \tilde \game})(\bar{\a})$ is not at least as good for all players as $\bar \a$.
    \end{itemize}

    Consider the first case. We have $\bar \game \sim_{\Phi^{\bar \game, \game'}} \game'$ by \Cref{assumption:isomorphism-assumption} and $\game' \sim _ {(\Psi_1, \Phi_{-1}^{\game',\tilde \game})} \tilde \game ^ {\Psi_1 \circ \Pi_1(\game')}$. The difficulty is that $\Phi^{\bar \game, \game'}$ and $ \Phi_{-1}^{\game',\tilde \game}$ may contain multiple isomorphisms and we only know \textit{one} of these to keep the utilities constant. Of course, if \textit{all} keep utilities constant, then we're done. But now note that all the functions aggregated in $\Phi^{\bar \game, \game'}$ are bijections. Further note that all the functions aggregated in $(\Psi_1, \Phi_{-1}^{\game',\tilde \game})$ are bijections up to grouping Player 1 actions that are mapped identically by $\Psi_1$ (i.e., $a_1,a_1'$ s.t.\ $\Psi_1(a_1)=\Psi_1(a_1')$). Additionally all the outcomes grouped by $\Psi_1$ must have the same utility. It follows that if one of the aggregated functions (namely $(\Psi_1\circ \phi^{\bar\game,\game'}_1,\phi_2^{\bar\game, \tilde \game},...,\phi_n^{\bar\game, \tilde \game})$) keeps all utilities constant, then any other of the functions must either also keep utilities constant or must increase some utilities and decrease other utilities. (This is because all the functions must have the same average effect on utilities, by virtue of the functions being (essentially) bijections. So if any of the functions had a (Pareto-)improving effect on some outcomes, it must have a negative effect on other utilities to compensate.) We can then address this in the way we are addressing the second case.

    Now let's consider the second case.
    Assign $\bar \a$ to $\bar\game$ and $\game$ and $\tilde \a$ to $\tilde\game^{\Psi_1 \circ \Pi_1(\game')}$ and $\game^{\Psi_1 \circ \Pi_1(\game')}$.
    Next, assign $\a'$ to $\game'$, where $a_i'=\phi_i^{\bar \game, \game'}(\bar a_i)$. Clearly, this satisfies the isomorphism assumption between $\bar\game$ and $\game'$. 

    We now show that $\a'$ thus chosen also satisfies the assumption between $\game'$ and $\tilde\game^{\Psi_1 \circ \Pi_1(\game')}$. That is, we need to show that there are $\phi_2',...,\phi_n'$ s.t.\ $(\Psi_1,\phi_2',...,\phi_n')$ is an isomorphism in terms of Player $-1$'s utilities from $\game'$ to $\tilde\game$ and $(\Psi_1,\phi_2',...,\phi_n')(\a')=\tilde \a$. (Note that this isomorphism may well be Pareto-improving if we make \Cref{assumption:isomorphism-assumption}. For instance, $\game'$ may arise from $\game$ by adding a large constant to all utilities.)

    Choose $\phi_i'=\phi_i^{\bar \game,\tilde\game} \circ (\phi_i^{\bar\game,\game'})^{-1}$ for $i=2,...,n$. 
    We then have that
    \begin{equation*}
        \Psi_1 (a_1') = \Psi_1(\phi_1^{\bar \game, \game'}(\bar a_1)) = \tilde a_1.
    \end{equation*}
    Further, for $i=2,3...$ we have
    \begin{eqnarray*}
        \phi_i' (a_i') &=& \phi_i' ( \phi_i^{\bar \game, \game'} (\bar a_i))\\
        &=& \phi_i^{\bar \game,\tilde\game}((\phi_i^{\bar\game,\game'})^{-1}( \phi_i^{\bar \game, \game'} (\bar a_i)))\\
        &=& \phi_i^{\bar \game,\tilde\game}(\bar a_i)\\
        &=& \tilde a_i.
    \end{eqnarray*}
    
    Finally, we need to show that $(\Psi_1,\phi_2',...,\phi_n')$ is an isomorphism in terms of the utilities of Players $-1$. Recall that the $(\phi_i^{\bar \game, \game'})$ form an isomorphism from $\bar \game $ to $\game'$. Thus $(\phi_i^{\bar \game, \game'})^{-1}$ form an isomorphism from $\game'$ to $\bar \game$ and in particular an isomorphism in terms of the utilities in terms of Players $-1$. Next note that $(\Psi_1 \circ \phi_1^{\bar \game, \game'}, \phi_2^{\bar\game,\tilde \game}, ..., \phi_n^{\bar\game,\tilde \game})$ is an isomorphism in terms of Players $-1$'s utilities from $\bar \game$ to $\tilde\game$. It follows that the composition of the two is an isomorphism in terms of Players $-1$'s utilities from $\game'$ to $\tilde\game$. Note that this composition is $(\Psi_1,\phi_2',...,\phi_n')$.

    We complete the assignment as in case \ref{proof-of-lemma-5-1-item:2}.

    \underline{$\Leftarrow$}:
    We need to show that each of the two conditions suffices to imply the existence of an SPI.
    
    Let's first consider the case that the first condition holds. Then consider $\Psi_1 = a_1$, i.e., the function that maps everything onto $a_1$, and the prospective SPI $\game ^ {\Psi_1 \circ \Pi_1(\bar G)}$.
    
    By \Cref{assumption:disarm-effective},
    \begin{equation*}
        \game^{\Psi_1\circ \Pi_1(\bar\game)}
        \sim
        (\{a_1\}, A_2,...,A_n, \mathbf{u})^{\Psi_1\circ \Pi(\bar\game)}
    \end{equation*}
    By \Cref{assumption:remove-dominated-under-instructions},
    \begin{equation*}
        (\{a_1\}, A_2,..., A_n, \mathbf{u})^{\Psi_1\circ \Pi(\bar\game)}
        \sim
        (\{a_1\}, \hat A_2,..., \hat A_n, \mathbf{u})^{\Psi_1\circ \Pi(\bar\game)}
    \end{equation*}

    By \Cref{assumption:elimination-of-dominated-strategies}, we have $\game \sim \bar\game$. Finally, we have
    \begin{equation*}
        \bar{\game}\sim_{\mathrm{all}} (\{a_1\}, \hat A_2,..., \hat A_n, \mathbf{u})^{\Psi_1\circ \Pi(\bar\game)},
    \end{equation*}
    where $\mathrm{all}$ is the trivial correspondence.
    
    By transitivity and reflexivity, we thus obtain $\game \sim \game ^ {\Psi_1 \circ \Pi_1(\bar G)}$ and it is easy to see that the mapping is Pareto-improving. To prove that $\game ^ {\Psi_1 \circ \Pi_1(\bar G)}$ is an SPI on $\game$, we also need to prove that there exists an assignment $\Pi$ of outcomes to games that assigns a strictly Pareto-better outcome to $\game ^ {\Psi_1 \circ \Pi_1(\bar G)}$ than to $\game$. This $\Pi$ can be constructed analogously to the above construction.

    The proof for the second condition works in the same way. The only difference is that instead of the ``$\mathrm{all}$ assumption'', we need to use \Cref{assumption:player-2-isomorphism-against-play}.
\end{proof}
\end{toappendix}

We can now prove our main result about the complexity of finding unilateral default-remapping SPIs.

\begin{theoremrep}\label{theorem:unilateral-default-conditional-SPI-NP-hard}
    Deciding whether a game admits a unilateral default-remapping action remapping SPI (for Player $1$) under \Cref{assumption:disarm-effective,assumption:elimination-of-dominated-strategies,assumption:player-2-isomorphism-against-play,assumption:remove-dominated-under-instructions} is \NPH, even for two players.
\end{theoremrep}

As usual, the key difficulty is finding some part of the full game that is isomorphic to the original game. However, contrary to the other proof, only Player 2's utilities are relevant. Thus, (in the two-player case) we cannot straightforwardly reduce from any of the subgame isomorphism problems that we consider in our other proofs (see \Cref{appendix:game-isomorphism-graph-isomorphism-complete}).
Instead, we will directly reduce from the subgraph isomorphism problem.
Nonetheless, the proof is a straightforward adaption of earlier proofs. In fact, \citeauthor{SPI-original}'s (\citeyear{SPI-original}) construction in the proof of their NP-hardness result can be used to show \Cref{theorem:unilateral-default-conditional-SPI-NP-hard}. To be complete and self-contained, we show how the proof of our \Cref{thm:disarmament-SPI-NP-complete} can be adapted to prove \Cref{theorem:unilateral-default-conditional-SPI-NP-hard}.

\begin{toappendix}
We now formally define directed graphs and subgraph isomorphisms for directed graphs for use in the proof of \Cref{theorem:unilateral-default-conditional-SPI-NP-hard}.
A \textit{directed graph} (as represented by an adjacency matrix) is a pair of a natural number $n$ (representing the number of nodes) and a function
\begin{equation*}
\mathrm{adj}\colon \{1,...,n\}\times \{1,...,n\} - \{(1,1), ..., (n,n)\} \rightarrow \{0,1\}
\end{equation*}
that maps each pair of nodes $(i,j)$ onto $1$ if there's an edge from $i$ to $j$ and $0$ otherwise.

A \textit{subgraph isomorphism} from $(n,\mathrm{adj})$ to $(n',\mathrm{adj}')$ is an injection $\phi\colon \{1,...n\} \rightarrow \{1,...,n'\}$ s.t.\ for all $i,j\in \{1,...,n\}$ with $i\neq j$ we have $\mathrm{adj}(i,j)=\mathrm{adj}'(\phi(i),\phi(j))$. %

\begin{lemma}[\citealt{cook1971complexity}; 
{\citealt[][Sect.\ A1.4, Problem GT48]{gary1979computers}}]
\label{lemma:subgraph-iso-hard}
    The following problem is NP-complete: Given two directed graphs $\mathrm{adj},\mathrm{adj}'$, decide whether there is a subgraph isomorphism from $\mathrm{adj}$ into $\mathrm{adj}'$.
\end{lemma}

Note also that the clique problem is a special case of the subgraph isomorphism problem and the clique problem is well known to be NP-complete \cite{karp1972reductibility}. 
\end{toappendix}

\begin{proof}[Proof of \Cref{theorem:unilateral-default-conditional-SPI-NP-hard}]
We reduce from the subgraph isomorphism problem. So let $\mathrm{adj}$ and $\mathrm{adj}'$ be adjacency matrices for graphs of $n$ and $n'$ nodes, respectively. %

Then consider the game in \Cref{table:construction-for-proof-of-thm:disarmament-SPI-NP-complete-new} from the proof of \Cref{thm:disarmament-SPI-NP-complete}, where we let $A_1=A_2=\{1,...,n\}$ and $A_1'=A_2'=\{1,...,n'\}$ and for $\game$ we insert a game in which both players receive a payoff of $\mathrm{adj}$ and for $\game'$ we insert a game in which both players receive a payoff of $\mathrm{adj}'$. On the diagonal, we insert payoffs $(1.5,1.5)$. The same argument as in the proof of \Cref{thm:disarmament-SPI-NP-complete} shows that there is an SPI if and only if there is subgame isomorphism from the bottom-left to the bottom-right corner and in particular one that maps $\mathrm{adj}$ into $\mathrm{adj}'$. We here only give a few notes on the differences between the proofs.

$\Leftarrow$: For constructing the SPI, let $\phi$ be the subgraph isomorphism. Then consider $\Psi_1\colon (T,a_1)\mapsto (R,\phi(a_1)), (T,a_2)\mapsto (R,\phi(a_2))$. It's easy to see that the resulting remapping is an SPI.

$\Rightarrow$: The argument works as before. It is further easy to verify that this isomorphism must map the actions for Players 1 and 2 identically (i.e., maps Player $1$'s action $(T,i)$ and Player $2$'s $(D,i)$ onto some $(R,j)$ and $(P,j)$ for some $j$) so that the isomorphism induces a subgraph isomorphism.
\end{proof}

\begin{toappendix}
It's unclear whether deciding the existence of a unilateral remapping SPI is also \textit{in} NP (and thus NP-complete) in full generality. The problem is that as per \Cref{lemma:characterization-unilateral-default-remapping-spis} (and the notes in \Cref{appendix:why-all-isomorphisms-are-relevant-for-unilateral-default-remapping}), we need to verify not just that there is one Pareto--improving isomorphism, but there doesn't exist also exist another isomorphism that is not Pareto improving. Under this formulation, the problem thus looks more like a member of 2QBF (sometimes also called $\mathrm{NP}^\mathrm{NP}$) \cite{balabanov20162qbf}, which is expected to be much harder.  That said, if we assume that the reduced game doesn't have action symmetries (considering only the utility functions of Players $-i$), then the problem does immediately come to be in NP, because we then only need to find one isomorphism.
\end{toappendix}

\begin{toappendix}
\subsubsection{Why \Cref{lemma:characterization-unilateral-default-remapping-spis} needs to consider all isomorphisms}
\label{appendix:why-all-isomorphisms-are-relevant-for-unilateral-default-remapping}

\begin{table}[]
    \centering
    
        \begin{tabular}{|c|c|c|c|c|c|}
        \hline
        $-5,-5$ & $-5,-5$ & $-5,-5$ & $-3,-3$ & $-3,-3$ & $0,5$ \\
        \hline
        $-5,-5$ & $-5,-5$ & $-5,-5$ & $2,1$ & $4,1$ & $-3,-3$ \\
        \hline
        $-3,-3$ & $-3,-3$ & $0,5$ & $\phantom{-}5,-5$ & $\phantom{-}5,-5$ & $\phantom{-}5,-5$ \\
        \hline
        $1,1$ & $3,1$ & $-3,-3$ & $\phantom{-}5,-5$ & $\phantom{-}5,-5$ & $\phantom{-}5,-5$ \\
        \hline
    \end{tabular}
    \caption{A game to illustrate why the second condition in \Cref{lemma:characterization-unilateral-default-remapping-spis} is over \textit{all} isomorphisms, as opposed to the existence of one isomorphism.}
    \label{tab:why-all-isomorphisms}
\end{table}

Note that in \Cref{lemma:characterization-unilateral-default-remapping-spis} the condition states that the union of isomorphisms between the default and the new game must be Pareto-improving. Perhaps it's enough to require that there is \textit{one} Pareto-improving isomorphism? In particular,
note that our \Cref{lemma:SPI-iff-isomorphism-on-reduced-game-or-degenerate} only requires the existence of a single Pareto-improving isomorphism and \cite[][Lemma 4]{SPI-original},
too, show that in their setting it's sufficient to find one Pareto-improving isomorphism. In both cases, it is shown that if one isomorphism is Pareto improving, all are.

Unfortunately, the same is not true in the case of unilateral default-remapping SPIs. Consider the game in \Cref{tab:why-all-isomorphisms}. This game reduces by strict dominance to its bottom-left $2$-by-$3$ subgame. One might think that Player 1 can unilaterally Pareto-improve by remapping the third and fourth rows onto the first two rows. Call this $\Psi_1$. By dominance, Player 2 will then choose from the right-most three rows. Now there are two bijections $\phi_2$ from the left three columns to the right three columns that make $(\Psi_1,\phi_2)$ an isomorphism in terms of Player 2's utilities: one that maps the first onto the third, and the second onto the fourth column; and one that maps the first onto the fourth and the second onto the third column. The first of the two is Pareto-improving, but the second is not.

At a high-level, the problem is that our isomorphism assumption essentially ignores Player 1's utilities in the target game. We assume that when Player 2 decides whether to play the third or the fourth column against Player 1's commitment to play according to $\Psi_1$, Player 2 does not take into account Player 1's utilities. (Or more precisely, we do \textit{not} assume that Player 2 \textit{does} take Player 1's utilities into account.) So we assume that it's possible (or: we don't assume that it's impossible) that Player 2 tie-breaks, say, in favor of Player 1 in the default, but doesn't tie-break in Player 1's favor in the new game.

This contrasts with our regular isomorphism assumption (\Cref{assumption:isomorphism-assumption}, shared with \cite{SPI-original}).
Under this assumption, the outcome correspondences are constrained by all players' utilities.

Interestingly, this can mean that under the assumptions as stated, some seemingly weaker forms of commitment can be more powerful than unilateral default-conditional utility. For instance, in the game in \Cref{tab:why-all-isomorphisms}, unilateral disarmament of the third and fourth row is an SPI, precisely because it keeps Player 1's utilities in play.

Similarly, a unilateral utility function-based commitment, as studied in \cite{SPI-original}, 
can achieve an SPI in \Cref{tab:why-all-isomorphisms}.

In the specific game of \Cref{tab:why-all-isomorphisms}, one might argue that Player 2 \textit{should} take Player 1's utilities into account analogously. But we can consider a version of \Cref{tab:why-all-isomorphisms} where we slightly perturb Player 1's utilities in the top-right corner. The unilateral utility function-based commitment of the earlier paper can still achieve a safe Pareto improvement by specifying that the choice between the third and fourth column should be treated in the same way as the choice between the first two columns. In contrast, a default-remapping function simply has no way to specify anything of this sort. Therefore, we think that there's no natural variant of \Cref{assumption:player-2-isomorphism-against-play} under which default-conditional commitments become more similar to unilateral utility function commitments.

\end{toappendix}

\subsection{Omnilateral Default-Remapping SPIs}
\label{subsec:multilateral-default-remapping-spis}

\begin{toappendix}
    \subsection{Proofs for \Cref{subsec:multilateral-default-remapping-spis} (Omnilateral Default-Remapping SPIs)}
\end{toappendix}

Now, we consider the case where all players can commit to strategies as a function of the default outcome of the game. %
In this case, the players' default-remapping commitment $\remap$, along with the default policy, fully determines the outcome of the game. 
That is, $G \sim_{\remap} G^{\remap}$.
Because of this, we don't need to reason about outcome correspondence, or strategic dynamics in general, when proving SPIs; SPIs occur whenever the remapping function is Pareto improving. 
\cameraReady{Consequently, finding default-remapping SPIs requires only finding an outcome in the reduced game which can be feasibly Pareto improved.}

\begin{theoremrep}
    Suppose the players can make omnilateral commitments to remap outcomes of the default policy to any feasible $\F(A)$ strategy profile. 
    A default-remapping SPI exists under \Cref{assumption:elimination-of-dominated-strategies} if and only if there exists an outcome in $\red A$ which is Pareto sub-optimal in $\F(A)$. 
    For both pure and correlated default-remapping, it can be decided in polynomial time whether an SPI exists. 
\end{theoremrep}
\begin{proof}
    We divide the proof into three subclaims.
    \Cref{lemma:default-remapping-SPI-iff-Pareto-improvable-outcome} proves the claim in the second sentence of the theorem. 
    Given this, \Cref{thm:omnilateral_correlated_default-remapping_SPI} proves the claim in the last sentence for correlated default-remapping SPIs, and \Cref{thm:omnilateral_pure_default-remapping_SPI} proves the claim in the last sentence for pure default-remapping SPIs. 
\end{proof}

The setting and result of correlated default-remapping part of the theorem above is related to that of ``SPIs under improved coordination" from \cite{SPI-original}. 
Our setting is more restrictive on the ability of the players to commit, but allows the same class of SPIs to be achieved.
(In our language, they essentially allow the players to make an omnilateral default-remapping commitment to remap the default outcome of an arbitrary game into $\C(A)$, while we only allow remapping the default of the original game.)%

As in \Cref{sec:token-game-SPIs}, we additionally consider \textit{optimizing} over omni-lateral default-remapping SPIs. %
For details, see \Cref{appendix:subsec:optimizing_default-remapping_SPIs}.

\begin{toappendix}

\begin{lemmarep}\label{lemma:default-remapping-SPI-iff-Pareto-improvable-outcome}
    Suppose the players can make omnilateral commitments to remap outcomes of the default policy to any feasible $\F(A)$ strategy profile. %
    A default-remapping SPI exists under \Cref{assumption:elimination-of-dominated-strategies} if and only if there exists an outcome in $\red A$ which is Pareto sub-optimal in $\F(A)$. 
\end{lemmarep}
\begin{proof}
    As we observed in the main text, the result of the game $\game'$ resulting from an omnilateral default-remapping commitment $\remap$ is fully determined by $\remap$ and the default policy profile $\policy(G)$. %
    Hence, $G^{\remap}$ is an SPI on $G$ if and only if for all outcomes $a \in \red A$, $\u(\remap(a)) \pgeq \u(a)$ and there exists $a \in \red A$ such that $\u(\remap(a)) \pge \u(a)$.

    If all outcomes $a \in \red A$ are Pareto optimal in the feasible set, then clearly no strict SPI can exist. 
    If there does exist an outcome $a \in \red A$ which can be strictly Pareto improved in the feasible set, then assigning $\remap(a) \pge a$ to be such a strictly Pareto improving element of the feasible set and $\remap(a') = a'$ for all other outcomes $a' \in \red A$ constitutes an SPI. 
\end{proof}

\begin{lemmarep}\label{thm:omnilateral_correlated_default-remapping_SPI}
    It can be decided in polynomial time whether an $n$-player game $\game$ admits a omnilateral default-remapping SPI into correlated strategies. 
    Furthermore, any linear objective over such SPIs can be optimized efficiently. 
\end{lemmarep}

\begin{proof}
    First, we show that the existence of default-remapping SPIs can be decided in polynomial time. 
    By \Cref{lemma:default-remapping-SPI-iff-Pareto-improvable-outcome}, a strict default-remapping SPI on $\game$ exists if and only if there exists a outcome in $\red A$ that can be strictly Pareto improved in $\C(A)$. 
    We show that checking this condition is equivalent to checking whether the optimal objective value of the following polynomially sized linear program is strictly positive.%

\begin{align*}
    &\text{Maximize}\quad \sum_{i \in [n]} \sum_{\red{a} \in \red{A}} \sum_{a \in A} \left[ p_a^{\red{a}} u_i(a) - u_i(\red{a}) \right] &\\
    &\text{Subject to:} &\\
    & p_a^{\red{a}} \geq 0 &\text{for all } a \in A, \red{a} \in \red{A}  \\
    &\sum_{a \in A} p_a^{\red{a}} = 1 &\text{for all } \red{a} \in \red{A} \\
    &\sum_{a \in A} p_a^{\red{a}} u_i(a) \geq u_i(\red{a}) &\text{for all } i \in [n], \red{a} \in \red{A}\\ 
\end{align*}

    The variables $\{ p_a^{\red a }\}$ for each $\red a \in \red A$ correspond to the distribution over $\C(A)$ corresponding to $\remap(\red a)$.
    The utilities and sets of outcomes are parameters of the problem instance, i.e. constants from the perspective of the LP. 
    It's easy to verify that the program is indeed linear. 
    The LP is polynomially sized, with $|A|^2$ variables and $O(|A|^2)$ constraints, and hence can be solved or optimized in polynomial time.

    The first two constraints ensure that each $\{p_a^{\red{a}}\}$ is a valid probability distribution over outcomes, i.e. correlated strategy profile. 
    The third ensures that the expected utility of the each $\{p_a^{\red{a}}\}$ (weakly) Pareto improves on $\red{a}$.
    The objective is, summing over all players and outcomes in $\red A$, the player's expected utility gain from playing the remapped strategy $p_a^{\red{a}}$ profile rather than the original outcome $\red a$.
    Hence, there is an SPI if and only if the objective value can be strictly greater than $0$.

\end{proof}

\begin{lemmarep}\label{thm:omnilateral_pure_default-remapping_SPI}
    It can be decided in polynomial time whether an $n$-player game $\game$ admits a strict, omnilateral default-remapping SPI into pure strategies.
\end{lemmarep}

\begin{proof}
    By \Cref{lemma:default-remapping-SPI-iff-Pareto-improvable-outcome}, it suffices to check whether there exists an outcome in $\red A$ which can be Pareto improved. 
    This can trivially be done in polynomial time by checking, for each $\red a \in \red A$, whether any of the outcomes in $A$ is strictly Pareto improving over $\red a$. 
\end{proof}

\subsection{Optimizing over Omnilateral Default-Remapping SPI}\label{appendix:subsec:optimizing_default-remapping_SPIs}

As in with the token SPIs from \Cref{sec:token-game-SPIs}, we'll consider optimizing over default-remapping SPIs on a given game in addition to deciding whether they exist. 
We show that linear objectives can be optimized efficiently.
Linear objectives over default-remapping SPIs are defined very similarly to linear objectives over token SPIs, except that they operate on the payoffs induced by the default-remapping function $\remap$. %
A \textit{linear} objective is characterized by a linear $f^{\red a}: \R^{n} \rightarrow \R$ for each each outcome $\red a \in \red A$, such that $f(\remap) = \sum_{\red a} f^{\red a}(\u(\remap(\red a)))$. %
The notion of linearity for $f^{\red a}$ is the same as in the previous section, and so as before, the class of linear objectives includes utilitarian social welfare (gain) and its weighted and subjective variants. 

Optimizing linear objectives is easy in the case of both pure and correlated default-remapping SPIs.
As was the case with correlated token SPIs, the requirement that SPIs be strictly Pareto improving causes the space of correlated default-remapping SPIs to be open, complicating the definition of optimization. 
We address this in the same way as before.

\begin{theoremrep}
    Any linear objective over omnilateral default-remapping SPIs into correlated strategies can be optimized in polynomial time. 
\end{theoremrep}

\begin{proof}
    We'll assume that default-remapping SPIs exist (which can be checked efficiently \Cref{thm:omnilateral_correlated_default-remapping_SPI}), as otherwise this optimization problem is undefined.
    As was the case with correlated token SPIs (\Cref{thm:complexity-decide-correlated-remapping-SPI}), there is some subtlety about what optimization means here.
    This is because the definition of SPIs requires them to be strictly Pareto improving, which is implicitly a strict inequality that makes the space of SPIs open. 
    (For example, there is no optimal SPI for the objective of minimizing the players' total utility gain.)
    As before, we show that we can decide in polynomial time whether the instance admits an
    optimal solution.
    If so, we find the value of the optimal solution and a $\remap$ achieving it. 
    If not, we find the supremum of the objective value and a parameterized $\remap$ whose objective value approaches the supremum.

    Let $\{w_i^{\red a}\}$ be the weights of some linear objective over default-remapping functions $\remap$.
    Consider the following lexicographic linear program. 

\begin{align*}
    &\text{Maximize Lexicographically}\quad &\left(\sum_{\red a \in A} \sum_i w_i^{\red a} \sum_{a \in A} p_a^{\red{a}} u_i(a), \mu \right) &\\
    &\text{Subject to:} &\\
    & p_a^{\red{a}} \geq 0 &\text{for all } a \in A, \red{a} \in \red{A}  \\
    &\sum_{a \in A} p_a^{\red{a}} = 1 &\text{for all } \red{a} \in \red{A} \\
    &\sum_{a \in A} p_a^{\red{a}} u_i(a) \geq u_i(\red{a}) &\text{for all } i \in [n], \red{a} \in \red{A}\\ 
    & \mu \leq \sum_{v^j \in V} \sum_{i \in [n]} (m_i v_i^j + b_i - v_i^j)
\end{align*}

    The variables $\{ p_a^{\red a }\}$ for each $\red a \in \red A$ correspond to the distribution over $\C(A)$ corresponding to $\remap(\red a)$.
    The utilities, sets of outcomes, and weights $w_i^{\red a}$ are instances of the problem instance, i.e. constants from the perspective of the LP. 
    It's easy to verify that the program is indeed linear. 
    As before, the lexicographic objective can be handled by first solving the LP with the first objective, finding the optimal value $o^*$, and then solving the LP again with the second objective and the additional constraint that the value of the first objective is at least $o^*$.
    The LP is polynomially sized, with $|A|^2$ variables and $O(|A|^2)$ constraints, and hence can be optimized in polynomial time.
    
    The first part of the objective is the linear objective represented by $\{w_i^{\red a}\}$ over the remapping function $\remap$, as $\u(\remap(\red a))$ is defined by $\sum_{a \in A} p_a^{\red{a}} u_i(a)$. 
    The first two constraints ensure that each $\{p_a^{\red{a}}\}$ is a valid probability distribution over outcomes, i.e. correlated strategy profile. 
    The third ensures that the expected utility of the each $\{p_a^{\red{a}}\}$ (weakly) Pareto improves on $\red{a}$.
    The final constraint ensures that $\mu$ is upper bounded by the sum, over all players and outcomes in $\red A$, of the players' expected utility gains from playing the remapped strategy $p_a^{\red{a}}$ profile rather than the original outcome $\red a$.
    As such, $\mu$ can be strictly positive if and only if the default-remapping represented by the LP variables is strictly Pareto improving.

    Therefore, the linear program searches over all possible default-remapping SPIs, as well as default-remapping functions which are not strictly Pareto improving and so are technically not SPIs. 
    If the linear program admits an optimal solution where $\mu >0$, then the problem instance admits an optimal SPI, which we return. 
    Otherwise, if the program returns an optimal solution where $\mu =0$, then there is no optimal SPI. 
    In this case, let $\remap^*$ be the default-remapping represented by the LP variables, which is not strictly Pareto improving.
    Let $\remap^s$ be any default-remapping SPI, which in particular is strictly Pareto improving and can be found efficiently using the LP from the previous part of the proof. 
    Then we claim the default-remapping function defined by $\remap(\red a) = (1-\eps)\remap^*(\red a) + \eps \remap^s(a)$ is the desired SPI. 
    It is feasible by the convexity of $\F(A)$ and strictly Pareto improving because its the convex combination of the strictly Pareto improving $\remap^s$ and the weakly Pareto improving $\remap^*$. 
    Hence, it is in fact a valid default-remapping SPI. 
    Finally, as $\eps \rightarrow 0$, since the objective is linear, its objective value approaches that of $\remap^s$, i.e. the supremum over all SPIs, as desired. 
\end{proof}

\begin{theoremrep}\label{thm:optimizing_pure_default-remapping_SPIs}
    Any linear objective over omnilateral default-remapping SPIs into pure strategies can be optimized in polynomial time. 
\end{theoremrep}
\begin{proof}
    Linear objectives over these SPIs can be optimized efficiently because they can be optimized greedily by outcome. 
    That is, an optimal SPI according to a linear objective $f$ can be found by iterating over all outcomes $a \in A$ and assigning $\remap(\red a)$ to be a Pareto improving outcome in $A$ that maximizes $f_a(\remap(\red a))$. 
    This can be done in polynomial time because there are $|A|$ outcomes and thus $|A|$ possibilities for each $\remap(\red a)$, and $f_a(\remap(\red a))$ is polynomial time computable. 
\end{proof}

\end{toappendix}

\section{Conclusion}
In this paper, we've studied safe Pareto improvements (SPIs) achieved through disarmament, commitment to token games, and default-remapping commitment. 
In each setting, we've characterized the computational complexity of finding and optimizing over SPIs. 
By considering forms of commitment which are \textit{ex post} verifiable and thus easier to make credible and enforce, we hope to work towards SPIs that can be more readily applied in practice.

\section*{Acknowledgments}

Nathaniel Sauerberg was supported by a grant from the CLR fund and funding from the Cooperative AI Foundation (CAIF). 
Some of this work was carried out as part of the ML Alignment \& Theory Scholars (MATS) program.
Caspar Oesterheld was supported by an FLI PhD Fellowship.

\bibliography{refs}
\nosectionappendix

\appendix

\begin{toappendix}

\section{Complexity of deciding game and subgame isomorphism}
\label{appendix:game-isomorphism-graph-isomorphism-complete}

We here state and prove some results about deciding whether games are isomorphic. We will use these results for proving some of the complexity results in this paper.

\begin{theorem}\label{thm:GI-completeness}
The following problem is \GI-complete. Given two normal-form games $G$ and $G'$, decide whether $G$ is isomorphic to $G'$. The problem remains \GI-complete if we restrict attention to two-player games. Further, the problem remains \GI-complete if we restrict it to fully reduced games $G$ and $G'$. 
\end{theorem}

    \begin{table*}[]
    \centering
    \begin{tabular}{c|c|c|c|c|c|c|c|}
    \multicolumn{1}{c}{} & \multicolumn{1}{c}{$a_1$} & \multicolumn{1}{c}{$\dots$} & \multicolumn{1}{c}{$a_n$} & \multicolumn{1}{c}{$\hat a_1$} & \multicolumn{1}{c}{$\dots$} & \multicolumn{1}{c}{$\hat a_n$} \\
     \cline{2-7}
    \multirow{1}{*}{$a_1$ } & \multicolumn{3}{c|}{\multirow{3}{*}{$\mathbf u$}}  & \multicolumn{1}{c}{$1, 1$} & \multicolumn{1}{c}{} & $-1,-1$ \\
    $\vdots$ &\multicolumn{3}{c|}{}  & \multicolumn{1}{c}{} & \multicolumn{1}{c}{$\ddots$} &  \\
    $a_n$ &\multicolumn{3}{c|}{}  & \multicolumn{1}{c}{$-1,-1$}& \multicolumn{1}{c}{} & $1,1$ \\
    \cline{2-7}
    \cline{2-4}
    \multirow{1}{*}{$\hat a_1$} & \multicolumn{1}{c}{$1,1$} & \multicolumn{1}{c}{} & $-1,-1$ & \multicolumn{3}{c|}{\multirow{3}{*}{$0,0$}} \\
    $\vdots$ & \multicolumn{1}{c}{} & \multicolumn{1}{c}{$\ddots$} &  & \multicolumn{3}{c|}{} \\
    $\hat a_n$& \multicolumn{1}{c}{$-1, -1$}& \multicolumn{1}{c}{} & $1,1$ & \multicolumn{3}{c|}{}\\
    \cline{2-7}
    \end{tabular}
    \caption{Construction for the hardness part of the proof of \Cref{thm:GI-completeness}.}
    \label{table:construction-for-proof-of-thm:GI-completeness}
    \end{table*}

\begin{proof}
The first two claims (GI-completeness even for two-player games) were proved by \citet{Gabarro_Complexity_Game_isomorphism}.

Since the last claim (about games without dominated actions) considers a narrower problem, all we need to show is that the problem is still \GI-hard. We do this by reducing the 2-player-game isomorphism problem onto the 2-player-game isomorphism problem restricted to games without strictly dominated actions%
.

So let $G=(\{a_1,...,a_n\},\{a_1,...,a_n\},u_1,u_2)$ and $G'$ be games. Let $u_1,u_2$ be normalized to have values in $[0,1]$, with both the minimum $0$ and the maximum $1$ being achieved for both players. (This is almost without loss of generality because renormalizing the utilities doesn't affect whether an isomorphism exists. The only special case that we need to deal with separately is the case where one player's utility function is constant across outcomes. Note that for there to be an isomorphism, the player's utility must be constant in both games. We then normalize the utility to be, say, $\nicefrac{1}{2}$ in both games. Because this case is easy to deal with, we don't treat it in more detail here.) Construct game $\hat G$ as in \Cref{table:construction-for-proof-of-thm:GI-completeness}, and construct $\hat G '$ analogously with action labels $a_1',...,a_n',\hat a_1',...,\hat a_n'$. It is easy to see that $\hat G, \hat G'$ contain no dominated actions. It is left to show that $G,G'$ are isomorphic if and only if $\hat G, \hat G'$ are isomorphic.

Notice first that any isomorphisms between $\hat G$ and $\hat G'$ must induce the identity as the transformation of the utilities. It follows that for each player, the isomorphism must map $\{ a_1,...,a_n\}$ into $\{a_1',...,a_n'\}$ and $\{\hat a_1,...,\hat a_n\}$ into $\{\hat a_1',...,\hat a_n' \}$. It follows immediately that if $\hat\phi_1,\hat\phi_2$ is an isomorphism between $\hat G$ and $\hat G'$, then restricting $\hat\phi_1,\hat\phi_2$ to $\{a_1,...,a_n\}$ yields an isomorphism between $G$ and $G'$.

It is left to show that if there is an isomorphism $(\phi_1,\phi_2)$ from $G$ to $G'$, then there is an isomorphism between $\hat G$ and $\hat G'$. Construct this isomorphism $(\hat \phi_1,\hat \phi_2)$ as follows. First, for the actions shared with $G$, $\hat\phi_1$ and $\hat \phi_2$ simply follow $\phi_1$ and $\phi_2$, i.e., for $i=1,2$ and $k=1,...,n$ set $\hat \phi_i(a_k) = \phi_i(a_k)$. For $\hat a_k$, we let $\hat\phi_1$ follow $\phi_2$. That is, if $\phi_2(a_k)=a_j$, then define $\hat \phi_1(\hat a_k) = \hat a_j$. Define $\hat \phi_2$ analogously. It is easy to show that $(\hat\phi_1,\hat\phi_2)$ thus defined is an isomorphism.
\end{proof}

\begin{proposition}
\label{prop:Pareto-improving-isomorphism-GI-complete}
The problems in \Cref{thm:GI-completeness} remain \GI-complete if instead of the existence of any isomorphism, we query the existence of a \textit{(strictly) Pareto-improving} isomorphism from $G$ to $G'$.
\end{proposition}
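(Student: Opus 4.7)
The plan is to handle \GI{}-membership and \GI{}-hardness separately. Membership will come almost for free from \Cref{thm:GI-completeness}, while hardness will require a short renormalization-and-shift reduction.

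For membership, I would rely on the observation, used already in the proof of \Cref{thm:disarmament-SPI-decision-problem-GI-complete}, that any two isomorphisms between a fixed pair of games induce the same affine transformation $x \mapsto m_i x + b_i$ on Player $i$'s utilities: each isomorphism must map Player $i$'s min-utility outcomes onto min-utility outcomes and max onto max, and two points pin down an affine function. Consequently, a (strictly) Pareto-improving isomorphism exists iff $G$ and $G'$ are isomorphic at all and some -- equivalently, every -- isomorphism is (strictly) Pareto-improving. This reduces the problem to (i) deciding plain game isomorphism, which is in \GI{} by \Cref{thm:GI-completeness}, and (ii) verifying Pareto-improvement of one exhibited isomorphism in polynomial time.

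For hardness, I would reduce from plain game isomorphism (\GI{}-complete on fully reduced games by \Cref{thm:GI-completeness}). Given an instance $(G, G')$, I first renormalize each player's utilities independently in each game via a positive affine transformation so that non-constant players have utilities in $[0,1]$ with minimum $0$ and maximum $1$, and constant players have utility $0$; the instance is rejected immediately if some player is constant in one game but not in the other, since then no isomorphism can exist. Call the renormalizations $G_{\mathrm{norm}}, G'_{\mathrm{norm}}$, and take $\hat G = G_{\mathrm{norm}}$ and $\hat G'$ to be $G'_{\mathrm{norm}}$ with $2$ added to every utility of every player. Positive affine per-player transformations preserve both game isomorphism and strict dominance, so $G \cong G'$ iff $\hat G \cong \hat G'$, and both games remain fully reduced.

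The key point is that on $(\hat G, \hat G')$ any isomorphism must have $(m_i, b_i) = (1, 2)$ for every non-constant player -- before the shift it would have had to satisfy $0 \mapsto 0$ and $1 \mapsto 1$, and the $+2$ shift only moves $b_i$ from $0$ to $2$ -- and $b_i = 2$ with $m_i$ arbitrary for every constant player. In both cases the utility gain at every outcome is exactly $2 > 0$, so any such isomorphism is automatically \emph{strictly} Pareto-improving. Thus $(\hat G, \hat G')$ admits a (strictly) Pareto-improving isomorphism iff $G \cong G'$, which completes the reduction. The chief subtlety will be handling the degenerate players with constant utilities, where the slope $m_i$ is not pinned down by normalization; but since the shift forces $b_i = 2$ regardless of $m_i$, the strict gain of $2$ is preserved and the argument goes through.
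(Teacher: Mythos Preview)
Your proposal is correct and follows essentially the same approach as the paper's proof. For membership, both you and the paper exploit that all isomorphisms between two fixed games induce the same affine map on utilities, so the Pareto-improvement condition can be checked in polynomial time independently of the GI oracle call. For hardness, both reduce from plain game isomorphism by shifting $G'$'s utilities upward so that any isomorphism is forced to be strictly Pareto-improving; the paper simply adds a large constant to $G'$ without renormalizing first, whereas you normalize both games to $[0,1]$ and then shift by $2$, which is a bit more work but has the minor benefit of making the $(m_i,b_i)$ coefficients explicit and handling constant-utility players more carefully than the paper's terse argument.
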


\begin{proof}
    \underline{\GI-hardness}: We can reduce the problem of determining the existence of \textit{any} game isomorphism to the problem of finding a (strictly) Pareto-improving isomorphism by adding a large constant to both players' payoff in $G'$.

    \underline{\GI-membership}: To prove this, we reduce from the problem of deciding the existence of a Pareto-improving isomorphism to the existence of any isomorphism. The main insight is that if we have two games $\game$ and $\game'$ and we posit that there is an isomorphism between them, then without knowing the isomorphism, we know how the isomorphism acts on the \textit{utilities}. It maps the lowest utility of one onto the lowest utility of the other, and so on. Thus, we can decide the existence of a (strictly) Pareto-improving isomorphism by first deciding whether this utility mapping of the prospective isomorphism is (strictly) Pareto-improving. If not, we can return ``No''. Otherwise, we return ``Yes'' if and only if the two games are isomorphic. 
\end{proof}

\begin{proposition}\label{prop:1-0-isomorphism-GI-complete}
The regular isomorphism problem in \Cref{thm:GI-completeness} (i.e., not the one constrained to Pareto-improving isomorphisms) remains \GI-complete if instead of the existence of any isomorphism, we query the existence of an isomorphism from $G$ to $G'$ with coefficients $1$ and $0$ for all players.
\end{proposition}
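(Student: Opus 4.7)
The plan is to prove GI-membership and GI-hardness separately.

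For GI-membership, observe that a $1$-$0$ isomorphism from $\game=(A,\u)$ to $\game'=(A',\u')$ is a tuple of bijections $\iso_i\colon A_i\to A_i'$ satisfying $\u(a)=\u'(\iso_1(a_1),\ldots,\iso_n(a_n))$ for every $a\in A$. I will reduce this to (colored) graph isomorphism by encoding each game as a vertex-colored graph in which: (i) each action is a vertex colored by the identity of its owning player; (ii) each outcome is a vertex colored by its utility vector $\u(a)$; and (iii) edges connect each outcome vertex to the action vertices it comprises. Any graph isomorphism between the two encodings must preserve both colorings, so it maps player-$i$ actions to player-$i$ actions and outcomes with utility vector $v$ to outcomes with the same utility vector $v$; such maps are in bijection with $1$-$0$ game isomorphisms. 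Since colored graph isomorphism is in \GI, so is the $1$-$0$ game isomorphism problem. This is a simpler variant of the reduction used by Gabarr\'o et al.\ for general game isomorphism, in that utility-vector labels must match exactly rather than only up to positive affine transformations.

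For GI-hardness, I will reduce from general game isomorphism, which is \GI-complete by \Cref{thm:GI-completeness}. The key observation is that if $\game$ and $\game'$ are isomorphic via coefficients $m,b$, then for each player $i$, $\iso$ must send the outcome realizing $\min_a u_i(a)$ to one realizing $\min_{a'} u_i'(a')$, and analogously for the max; hence whenever $\min_a u_i(a)\neq\max_a u_i(a)$, the coefficients are forced to be
\[
    m_i^* = \frac{\max_{a'} u_i'(a') - \min_{a'} u_i'(a')}{\max_a u_i(a) - \min_a u_i(a)},\qquad b_i^* = \min_{a'} u_i'(a') - m_i^* \min_a u_i(a).
\]
Given an instance $(\game,\game')$, I will compute these coefficients and form $\tilde \game$ with utilities $\tilde u_i(a) = m_i^* u_i(a) + b_i^*$, then output $(\tilde \game,\game')$. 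By the uniqueness observation, $\game\cong \game'$ (via some positive affine coefficients) if and only if $\tilde \game\cong \game'$ via coefficients $1,0$. Since positive affine transformations preserve strict dominance relations, $\tilde \game$ remains fully reduced whenever $\game$ is, so the reduction also works for the restricted version of \Cref{thm:GI-completeness}.

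The main subtlety lies in handling players whose utilities are constant in one or both games. If $u_i$ is constant in $\game$ but $u_i'$ is not in $\game'$ (or vice versa), no isomorphism can exist and the reduction outputs a trivial ``no'' instance (e.g., two games of distinct sizes). If both are constant, with values $c_i$ and $c_i'$, I set $m_i^*=1$ and $b_i^* = c_i' - c_i$, which makes $\tilde u_i \equiv c_i'$ and thus trivially matches $u_i'$ under any bijection for this player. The overall construction is polynomial-time and the equivalence goes through. The most delicate point to verify in the writeup is simply this case analysis for constant-utility players; aside from that bookkeeping, both directions reduce to the uniqueness-of-coefficients observation and the straightforward colored-graph encoding.
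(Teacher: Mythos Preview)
Your proof is correct. For hardness, you and the paper use the same core idea: rescale utilities so that any isomorphism is forced to have coefficients $1$ and $0$. The paper normalizes both $\game$ and $\game'$ to the interval $[0,1]$; you instead compute the unique forced coefficients $m_i^*,b_i^*$ and transform only $\game$ to match $\game'$. These are equivalent, though your treatment of constant-utility players is actually more careful---the paper leaves constant utilities ``untouched,'' which breaks the equivalence when $u_i$ and $u_i'$ are both constant but at different values (the originals are isomorphic yet the normalized pair admits no $1$-$0$ isomorphism), whereas your translation $\tilde u_i \equiv c_i'$ handles this cleanly.

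For membership, the paper simply asserts that the $1$-$0$ problem is ``narrower'' and hence inherits \GI{}-membership from general game isomorphism (implicitly: compute what the coefficients would have to be, check if they are $1,0$, then query the general problem). Your direct reduction to vertex-colored graph isomorphism is a genuinely different and more self-contained route; it avoids invoking the general game-isomorphism result and instead exhibits the $1$-$0$ problem explicitly as a colored-graph-isomorphism instance.
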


\begin{proof}
Since the problem is narrower than the corresponding problem in \Cref{thm:GI-completeness} (which we already know to be in \GI), all we need to prove is \GI-hardness. We prove this by reducing the general isomorphism problem to the one constrained to coefficients $1$ and $0$.

So take any games $\game$ and $\game'$ with utility functions $u$ and $u'$. Now obtain two new utility functions $\bar u$ and $\bar u'$ by normalizing the utilities to be between $0$ and $1$. If any player's utilities are constant, we simply leave that player's utilities untouched. Call the resulting games $\bar \game$ and $\bar \game'$. It is easy to see that $\game$ and $\game'$ are isomorphic if and only if $\bar \game$ and $\bar \game'$ are isomorphic. Further, it is easy to see that all isomorphisms between $\bar \game$ and $\bar \game'$ have coefficients $1$ and $0$.
\end{proof}

\begin{theorem}\label{thm:subgame-isomorphism-problem}
    The following problem is NP-complete. Given games $G$ and $G'=(A_1',A_2',\mathbf u)$, decide whether there exist $\tilde A_1\subseteq A_1'$ and $\tilde A_2 \subseteq A_2'$ such that $G$ is isomorphic to $(\tilde A_1, \tilde A_2, \mathbf u _{|\tilde A_1 \times \tilde A_2})$. The problem remains NP-complete if we restrict $G$ to have no dominated actions. It also remains NP-complete if we look only for Pareto-improving isomorphisms or isomorphisms with coefficients $1$ and $0$.
\end{theorem}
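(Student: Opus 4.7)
The plan is to prove NP-membership by a direct witness argument and NP-hardness by a single reduction from CLIQUE that simultaneously handles all four variants (general, no-dominated-actions in $G$, Pareto-improving, and coefficients $1$ and $0$).

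NP-membership is immediate: the witness consists of the subsets $(\tilde A_1, \tilde A_2)$, the bijections $\phi_i \colon A_i \to \tilde A_i$, and (in the general-coefficient variant) the per-player affine parameters $(m_i, b_i)$. All are polynomial in size, and checking that these data define a valid (possibly Pareto-improving or $1$-$0$-coefficient) isomorphism requires only a polynomial number of utility-equality checks.

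For NP-hardness I would reduce from CLIQUE. Given $(H = (V, E), k)$, construct a symmetric two-player game $G$ with $A_1 = A_2 = \{1, \ldots, k, m_*\}$ and $u_1(i,j) = u_2(i,j)$ equal to $2$ if $i = j \in \{1,\ldots,k\}$, $1$ if $i \neq j$ and $i, j \in \{1,\ldots,k\}$, $3$ if $i = j = m_*$, and $0$ whenever exactly one of $i, j$ equals $m_*$. Construct $G'$ with $A_1' = A_2' = V \cup \{v_*\}$ and $u_1' = u_2'$ equal to $2$ on the diagonal over $V$, $1$ on off-diagonal edges of $H$, $0$ on off-diagonal non-edges, $3$ at $(v_*, v_*)$, and $0$ on any pair mixing $v_*$ with a vertex in $V$. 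A routine pairwise check shows $G$ has no strictly dominated actions: for any two distinct actions, each beats the other against at least one opponent action.

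The central forcing argument is that the four payoff values of $G$ form the arithmetic progression $\{0, 1, 2, 3\}$, so any positive affine rescaling $u \mapsto m \cdot u + b$ ($m > 0$) whose image lies in $G'$'s payoff set $\{0, 1, 2, 3\}$ must map them onto the full AP, forcing $(m, b) = (1, 0)$. The resulting utility-preserving isomorphism then pins down its structure: the unique payoff-$3$ outcome forces $\phi_1(m_*) = \phi_2(m_*) = v_*$; the payoff-$2$ outcomes (only on the diagonals) force $\phi_1 = \phi_2 =: \phi$ on $\{1, \ldots, k\}$; and the off-diagonal payoff-$1$ constraints force $\{\phi(i), \phi(j)\} \in E$ for all $i \neq j$, exhibiting $\phi(\{1, \ldots, k\})$ as a $k$-clique in $H$. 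The converse is immediate. Shifting every payoff of $G'$ upward by a small $\epsilon > 0$ yields the Pareto-improving variant, where the AP argument now forces $(m, b) = (1, \epsilon)$, strictly Pareto-improving; the coefficients-$1$-and-$0$ variant is automatic. The step deserving most care is the AP-forcing argument --- verifying that no size-$4$ subset of $\{0, 1, 2, 3\}$ other than the whole set is an arithmetic progression --- since this is precisely what rules out spurious affine rescalings (for example, $(m, b) = (-1, 3)$ is non-admissible because $m > 0$ is required, but $(m, b) = (1, -1)$ applied to a narrower payoff set of $G$ could otherwise map the clique structure onto independent sets, breaking the reduction).
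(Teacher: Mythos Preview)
Your proposal is correct and, in fact, considerably more detailed than the paper's own proof, which consists of two sentences: NP-membership is declared easy, and hardness is asserted to follow from the subgraph isomorphism problem without any construction given. Your approach differs in that you reduce from \textsc{Clique} rather than general subgraph isomorphism and add an explicit marker action $m_*$ whose role is to anchor the payoff set of $G$ to the full arithmetic progression $\{0,1,2,3\}$, thereby forcing any isomorphism to have coefficients $(1,0)$ (or $(1,\epsilon)$ after shifting). This buys you all four variants at once with a single, self-contained gadget, whereas the paper implicitly relies on the reader to supply separate (routine) normalizations for the Pareto-improving and $1$-$0$-coefficient cases. One small caveat worth stating explicitly: your AP-forcing argument needs all four values $0,1,2,3$ to actually occur in $G$, which requires $k\geq 2$; since \textsc{Clique} is trivial for $k\leq 1$, this is harmless but should be mentioned. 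Your closing parenthetical about $(m,b)=(1,-1)$ is slightly garbled as written --- the point you want is simply that without the marker (so with payoff set $\{1,2\}$), affine maps like $u\mapsto u-1$ would send adjacency to non-adjacency and break the reduction.
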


\begin{proof}
\NP-membership is easy. All hardness results are easy to show by reduction from the subgraph isomorphism problem which is \NP-hard by \Cref{lemma:subgraph-iso-hard}.
\end{proof}

\end{toappendix} 

\end{document}